  \providecommand\BibTeX{{%
    \normalfont B\kern-0.5em{\scshape i\kern-0.25em b}\kern-0.8em\TeX}}}
\newcommand{\new}[1][]{\reflectbox{\sf{#1}N}}
\newcommand{\Var}{\mathcal{V}ar}
\newcommand{\Sort}{S}
\newcommand{\NSort}{NS}
\newcommand{\DSort}{DS}
\newcommand{\ASort}{AS}
\newcommand{\sort}{\tau}
\newcommand{\nsort}{\alpha}
\newcommand{\defeq}{\stackrel{\triangle}{=}}
\newcommand{\swap}[3]{(#1~#2)\cdot #3}
\newcommand{\abs}[2]{[#1]#2}
\newcommand{\fresh}{\mathrel{\#}}
\newcommand{\aprolog}{$\alpha$Prolog\xspace}
\newcommand{\supp}{\mathit{supp}}
\newcommand{\bbA}{\mathbb{A}}
\newcommand{\bbK}{\mathbb{K}}
\newcommand{\Aa}{\mathsf{a}}
\newcommand{\Ab}{\mathsf{b}}
\newcommand{\Ac}{\mathsf{c}}
\newcommand{\Ad}{\mathsf{d}}
\newcommand{\lam}{\textsf{lam}}
\newcommand{\app}{\textsf{app}}
\newcommand{\subst}{\textsf{subst}}
\newcommand{\df}[1]{\left \lceil #1 \right \rceil }
\newcommand{\coerce}[1]{#1^\dagger}
\renewcommand{\vec}[1]{\bar{#1}}
\newcommand{\sem}[1]{\|#1\|}
\newcommand{\semantics}[2]{\sem{#2}_{#1}}
\newenvironment{arcl}{\[\begin{array}{rcl}}{\end{array}\]}
\begin{document}

\title{Nominal Matching Logic}

\author{James Cheney}
\affiliation{%
  \institution{Laboratory for Foundations of Computer Science, University of Edinburgh }
  \city{Edinburgh}
  \country{United Kingdom}}
\email{jcheney@inf.ed.ac.uk}

\author{Maribel Fern\'andez}
\affiliation{%
  \institution{Department of Informatics,\\  King's College London }
  \city{London}
  \country{United Kingdom}}
\email{Maribel.Fernandez@kcl.ac.uk}

\renewcommand{\shortauthors}{Cheney and Fern\'andez}

\begin{abstract}
We introduce Nominal Matching Logic (NML) as an extension of Matching Logic with names and binding following the Gabbay-Pitts nominal approach. Matching logic is the foundation of the $\bbK$ framework,  used to specify  programming languages and automatically derive associated tools (compilers,  debuggers,  model checkers, program verifiers). Matching logic does not include a primitive notion of name binding, though binding operators can be represented via an encoding that internalises the graph of a function from bound names to expressions containing bound names. This approach is sufficient to represent computations involving binding operators, but  has not been reconciled with support for inductive reasoning over syntax with binding (e.g., reasoning over $\lambda$-terms). Nominal logic  is  a formal system for reasoning about names and binding, which provides well-behaved and powerful principles for inductive reasoning over syntax with binding, and NML inherits these principles.  We discuss design alternatives for the syntax and the semantics of NML, prove meta-theoretical properties and give examples to illustrate its expressive power. In particular, we show how induction principles for $\lambda$-terms ($\alpha$-structural induction) can be defined and used to prove standard properties of the $\lambda$-calculus.
\end{abstract}

\begin{CCSXML}
<ccs2012>
<concept>
<concept_id>10003752.10003790.10002990</concept_id>
<concept_desc>Theory of computation~Logic and verification</concept_desc>
<concept_significance>500</concept_significance>
</concept>
<concept>
<concept_id>10003752.10003790.10003798</concept_id>
<concept_desc>Theory of computation~Equational logic and rewriting</concept_desc>
<concept_significance>500</concept_significance>
</concept>
<concept>
<concept_id>10003752.10003753.10003754.10003733</concept_id>
<concept_desc>Theory of computation~Lambda calculus</concept_desc>
<concept_significance>500</concept_significance>
</concept>
</ccs2012>
\end{CCSXML}

\ccsdesc[500]{Theory of computation~Logic and verification}
\ccsdesc[500]{Theory of computation~Lambda calculus}
\ccsdesc[500]{Theory of computation~Equational logic and rewriting}

\keywords{Binding operator, Matching Logic, Nominal Logic, Lambda-Calculus, Verification} 

\maketitle

\section{Introduction}\label{sec:intro}

Nominal logic is a formal system for reasoning about names and name-binding, building on the Gabbay-Pitts approach to name-binding founded on the concepts of permutation groups acting on sets, freshness, and finite support~\cite{pitts:nomlfo-jv,gabbay:newaas-jv}.  Its semantics and proof theory have been extensively investigated and provide a foundation for nominal extensions to rewriting~\cite{FernandezM:nomr-jv}, functional programming~\cite{shinwell03icfp}, logic programming~\cite{cheney08toplas}, and specification testing~\cite{cheney17tplp}.  

For example, the syntax of the $\lambda$-calculus can be specified using symbols $var$, $lam$ and $app$ (to represent variables, $\lambda$-abstraction and application respectively); a term of the form $\lambda x.e$ is represented as $lam([x]\textbf{e})$ where $\textbf{e}$ is the representation of $e$ and $[x]\textbf{e}$ is a nominal abstraction. The $\alpha$-equivalence relation between $\lambda$-terms holds by construction, thanks to the properties of the abstraction construct in nominal logic. Capture-avoiding substitution  can be formally defined in nominal logic in a  way that resembles the familiar informal definition (see Example~\ref{ex:lamc}).

Nominal logic also provides well-behaved and powerful principles for inductive reasoning over abstract syntax with binding, as supported in Nominal Isabelle~\cite{UrbanC:ntih}. 
However, using nominal logic for specification and reasoning is sometimes awkward due to the partial or nondeterministic nature of some operations involving names: for example the nondeterministic \emph{fresh name} operation which chooses a fresh name, or the partial \emph{concretion} operation which renames a bound, abstracted name to a fresh one.  These operations are difficult to work with directly in first-order logic where all function symbols must denote total, deterministic functions.

Matching logic is a formal system for reasoning about patterns, which may fail to match (partiality) or match multiple ways (nondeterminism).  It has been introduced as a formal foundation for modeling and reasoning about operational semantics in the language framework $\bbK$~\cite{RosuG:K}, which has been used to develop semantics for a number of programming languages.

Matching logic does not contain built-in support for binding, which limits its appeal for modeling high-level languages, such as functional programming languages based on the $\lambda$-calculus.  
Recently, matching logic has been extended with facilities for modelling and reasoning about binding by \citet{chen20icfp}, for example to capture the equational theory of the lambda-calculus or model other binding constructs.  In essence, the approach taken is to define binding by internalizing the notion of the graph of a function (e.g. a mapping from bound names to expressions that might contain the bound name).  However, as Chen and Rosu note, it is not yet clear how to extend this approach further to support inductive reasoning over syntax with binding.  

As noted above, both nominal logic and matching logic have strengths and weaknesses. Nominal logic provides support for inductive reasoning over syntax with binding, but  reasoning about instantiating abstractions with specific choices of names, or choices of fresh names, often involves partiality or nondeterminism issues that necessitate abandoning a calculational style of proof in nominal logic. On the other hand, matching logic provides support for partiality and non-determinism but does not support inductive reasoning over syntax with binding. Since nominal logic is definable as an ordinary first-order theory, it may be considerably easier to incorporate into matching logic and implementations such as $\bbK$ than alternative approaches to binding.

In this paper we show that matching logic and nominal logic are indeed compatible, and show how they can be combined in a single system that generalizes each of them.  Thus, a wealth of existing nominal techniques for reasoning about languages with names and binding can be imported into matching logic. 
We discuss two approaches to combine matching logic and nominal logic. In the first approach, the axioms of nominal logic are specified as a theory in matching logic, obtaining a system we call NLML. We show that NLML has the expected properties, in particular, it allows us to use  concepts such as freshness of names and abstraction in matching logic specifications. Morever,  existing proof systems and implementations of matching logic can be used to reason or execute specifications with binding operators. However, NLML does not include a primitive notion of name, and the $\new$ quantifier of nominal logic cannot be used in NLML patterns. To address these shortcomings, in the second approach we extend the syntax of matching logic with a distinguished category of names and a $\new$ constructor for patterns, and provide a semantics for these extensions that is compatible with the standard matching logic semantics. This also avoids the need to include nominal logic as a theory in matching logic. 

We first review basic concepts of nominal logic and matching logic and their semantics (Sec.~\ref{sec:background}).  We then briefly present NLML before introducing
\emph{Nominal Matching Logic} (NML), a single system that incorporates the capabilities of nominal logic for specifying and reasoning about name-binding into matching logic (Sec.~\ref{sec:nml}).  We illustrate the applications of NML via standard lambda-calculus examples (Sec.~\ref{sec:examples}).  
The simplicity of this approach offers potential advantages over 
previous approaches to specify binders in matching logic,
for example, there is no theoretical obstacle to adding fixed point operators to NML to support reasoning by induction about structures involving names and binding.  To substantiate this point we provide a detailed comparison with Chen and Rosu's approach to binding in Applicative Matching Logic (AML) (Sec.~\ref{sec:comparison}).  

The question of how to model and reason about binding in formalizing programming languages, logics and calculi has a long history, and nominal abstract syntax is but one of many approaches.  This paper focuses somewhat narrowly on the question how to augment matching logic with support for binding, and our claim is that nominal logic (with some adjustment) is a promising candidate for doing this.  We do not claim that this sets a new standard for reasoning about binding syntax that is superior to existing systems such as Nominal Isabelle~\cite{UrbanC:ntih}, locally nameless~\cite{PollackR:metatheory}, Abella~\cite{abella}, Beluga~\cite{beluga}, etc., and it seems premature to compare our work so far (which is mostly theoretical and not yet implemented) with established, mature systems that are already widely used for formalizing metatheory.  We only claim that if we wish to extend matching logic with binding, nominal logic seems better suited than other approaches.  We discuss how our work fits into the broader landscape in more detail in Section~\ref{sec:related}.

All proofs are relegated to the appendix.

\section{Background}\label{sec:background}
We take for granted familiarity with standard sorted first-order logic syntax and semantics. In this section we recall the syntax and semantics of matching logic and nominal logic. In both cases, a signature $\mathbf{\Sigma} = (\Sort,\Var,\Sigma)$ specifies $\Sort$ a set of sorts, $\Var$ a sort-indexed family of countably many variables for each sort, and $\Sigma$ a family of sets $\Sigma_{\sort_1,\ldots,\sort_n;\sort}$ indexed by $\Sort^* \times (\Sort \cup \{Pred\})$, assigning symbols $\sigma$ their input and output sorts.  For convenience and consistency between nominal and matching logic we assume that sorts are closed under finite products, that is, $\Pi_{i=1}^n \sort_i $ is a sort whenever $\sort_1,\ldots,\sort_n$ are sorts, with associated function symbols for constructing $n$-tuples $(\_,\ldots,\_) \in \Sigma_{\sort_1,\ldots,\sort_n;\Pi_{i=1}^n \sort_i}$ and for projecting $\pi_j^n : \Pi_{j=1}^{n} \sort_j; \sort_i$.  In first-order (and nominal) logic $\Sigma$ also needs to specify the sorts of relation symbols $p$, and we use the special sort $Pred$ and write $p \in \Sigma_{\sort_1,\ldots,\sort_n;Pred}$ to indicate this.

\subsection{Nominal Logic}\label{sec:nl}
Nominal logic was introduced by Pitts as a sorted first-order theory~\cite{pitts:nomlfo-jv}.  In nominal logic, signatures are augmented by distinguishing some sorts as \emph{name sorts}, and sorts include a construction called \emph{abstraction} that builds a new sort $[\nsort]\sort$ from a name sort $\nsort$ and any sort $\sort$.  Nominal logic in general allows for the possibility of multiple name sorts, but we will consider the case of a single name-sort for simplicity, since handling the general case requires additional bureaucracy that is not needed to illustrate the main points.  Abstractions in $\abs{\nsort}{\sort}$ correspond to elements of $\sort$ with a distinguished \emph{bound} name.  The signature of any instance of nominal logic includes function symbols  $\swap{-}{-}{-} : \nsort \times \nsort \times \sort \to \sort$ and $\abs{-}{-} : \nsort \times \sort \to \abs{\nsort}{\sort}$ denoting name swapping and abstraction, for any name sort $\nsort$ and sort $\sort$, and atomic formulas for equality $(=)$ at any sort and freshness $(\fresh)$ relating any name sort and any sort.  Finally, nominal logic includes an extra quantifier, $\new a. \phi$, which is pronounced "for fresh $a$, $\phi$ holds."

Pitts gives a set of axioms and axiom schemas that describe the behavior of these constructs, see Figure~\ref{fig:axiNL}.  The $S$ axioms describe the behavior of swapping, while axioms $F$ and $A$ characterize freshness and abstraction respectively.  The $E$ axioms ensure equivariance, that is, that swapping preserves the behavior of function and atomic predicate symbols.  Finally axiom scheme $(Q)$ characterizes the $\new$-quantifier.  Aside from $\new$, the axioms are standard first-order axioms, and $(Q)$ can be viewed as a recipe for defining $\new$ in terms of other formulas.  Rather than discuss each axiom in detail, we review the semantics of nominal logic which validates them.

\begin{figure}
\[\begin{array}{cl}
\swap{a}{a}{x} = x & (S1)\\
\swap{a}{a'}{\swap{a}{a'}{x}} = x & (S2)\\
\swap{a}{a'}{a} = a' & (S3)\\
\swap{a}{a'}{\swap{b}{b'}{x}} = \swap{(\swap{a}{a'}{b})}{(\swap{a}{a'}{b'})}{\swap{a}{a'}{x}} & (E1)\\
b \fresh x \Rightarrow \swap{a}{a'}{b} \fresh \swap{a}{a'}{x} & (E2)\\
\swap{a}{a'}{f(\vec{x})} = f(\swap{a}{a'}{\vec{x}}) & (E3)\\
p(\vec{x})  \Rightarrow p(\swap{a}{a'}{\vec{x}}) & (E4)\\
\swap{b}{b'}{\abs{a}{x}} = \abs{\swap{b}{b'}{a}}{\swap{b}{b'}{x}} & (E5)\\ 
a \fresh x \wedge a' \fresh x \Rightarrow \swap{a}{a'}{x} = x & (F1)\\
a \fresh a' \iff a \neq a' & (F2)\\
\forall a:\nsort,a':\nsort'. a \fresh a' \qquad (\text{$\nsort \neq \nsort'$})& (F3) \\
\forall \vec{x}. \exists a. a \fresh \vec{x} & (F4)\\
\forall \vec{x}.(\new a. \phi \iff \exists a. a \fresh \vec{x} \wedge \phi)     \qquad (FV(\new a.\phi) \subseteq \vec{x})& (Q)\\
\abs{a}{x} = \abs{a'}{x'} \iff (a = a' \wedge x = x') \\
\qquad \qquad \qquad \qquad \qquad \qquad \vee (a \fresh x' \wedge \swap{a}{a'}{x} = x') & (A1)\\
\forall x:\abs{\nsort}{\sort}.\exists a:\nsort,y:\sort. x = \abs{a}{y} & (A2)
\end{array}\]
\caption{Axioms of (classical) Nominal Logic}
\label{fig:axiNL}
\end{figure}

Nominal sets (cf.~\cite{pitts:ns}) provide semantics to nominal logic. 
The characteristic feature of nominal sets is the use of \emph{name permutations} acting on sets.  Let $G$ be the symmetric group $Sym(\mathbb{A})$ on some countable set $\mathbb{A}$ of atoms, which is generated by the swappings $(a~b)$ where $a,b \in \bbA$. 
A \emph{$G$-set} is a structure $(X,\cdot)$ equipped with carrier set $X$ and an action of $G$ on $X$, i.e. a function $\pi,x \mapsto \pi \cdot x$ satisfying the laws (1) $id \cdot x = x$ and (2) $(\pi \circ \pi') \cdot x = \pi \cdot \pi' \cdot x$.
The \emph{support} of an element of $G$ is the set of atoms not fixed by $\pi$, i.e. $\supp(\pi) = \{a\in \mathbb{A} \mid \pi(a) \neq a\}$.  
A function or relation on $G$-sets is called \emph{equivariant} if it commutes with the permutation action: $\pi \cdot f(\vec{x}) = f(\pi \cdot \vec{x})$ or $R(\vec{x}) = R(\pi\cdot \vec{x})$.  

$G$-sets and equivariant functions form a topos and so admit many standard constructions familiar from set theory, including products, sums (disjoint union), exponentials (function spaces), power sets, etc.  For example, products of  $G$-sets are formed by taking products of the underlying carrier sets and extending the group action pointwise.  

A \emph{support} of an element $x$ of a $G$-set is a set $S$ of atoms such that for all $\pi$ with $\supp(\pi) \cap S = \emptyset$
we have $\pi\cdot x = x$. 
A \emph{nominal set} is a $G$-set in which all elements have a finite support.  It is routine to show that this implies each element has a unique, least finite support which is denoted $\supp(x) = \bigcap \{S \subseteq \mathbb{A} \mid \forall \pi.\supp(\pi) \cap S = \emptyset \Rightarrow \pi\cdot x = x\}$.  Nominal sets also form a topos~\cite{pitts:ns}, so also admit most standard set-theoretic constructions, except that functions and power sets must also be finitely supported (i.e. $X\to_{fin} Y$ and $\mathcal{P}_{fin}(X)$ consist of all functions/subsets having finite support).  

Two special constructions on nominal sets are the set $\mathbb{A}$ of atoms, where $\pi \cdot a = \pi(a)$ and $\supp(a) = \{a\}$, and the set of abstractions over $\mathbb{A}$ and $X$, written $\abs{\mathbb{A}}{X}$.  The latter is defined as the set of equivalence classes of pairs $\mathbb{A} \times X$ by the following relation:
\[\langle a,x\rangle \equiv_\alpha \langle b, y\rangle \iff \forall c \notin\supp(a,b,x,y). \swap{a}{c}{x} = \swap{b}{c}{y}\]
Swapping applies to such equivalence classes pointwise, that is, $\pi\cdot E = \{\langle \pi\cdot a, \pi \cdot x \rangle \mid \langle a,x \rangle \in E\}$, and $\supp(\abs{a}{x}) = \supp(x) - \{a\}.$
It is again a standard result that $\mathbb{A}$ and $\abs{\mathbb{A}}{X}$, as defined above, are nominal sets if $X$ is.  

Abstract syntax with binding operators can be directly represented using the \emph{abstraction} construct of nominal logic as shown in the example below for the $\lambda$-calculus.

\begin{example}
\label{ex:lamc}
To represent the syntax of the $\lambda$-calculus, it is sufficient to use a signature including function symbols $var$, $lam$ and $app$ (to represent variables, $\lambda$-abstraction and application respectively), and  sorts $Var, Exp$ where $Var$ is a name sort and $Exp$ is  equipped with the following constructors:
\[\small \begin{array}{c}
var\colon Var \to Exp \quad app\colon Exp \times Exp \to Exp \quad lam\colon [Var]Exp \to Exp\end{array}\]
A term of the form $\lambda x.e$ is represented as 
$lam([x]\textbf{e})$ where $\textbf{e}$ is the representation of $e$. The $\alpha$-equivalence relation between $\lambda$-terms holds by construction, thanks to the properties of the abstraction construct in nominal logic.  Likewise, for example, the substitution operation  $subst\colon Exp\times Var \times Exp \rightarrow Exp$ can be axiomatized as follows:
\[\begin{array}{cl}
subst(var(x),x,z)  = z & (Subst1)\\
x \neq y \Rightarrow subst(var(x),y,z) = var(x) & (Subst2)\\
subst(app(x_1,x_2),y,z) = \\
\qquad app(subst(x_1,y,z),subst(x_2,y,z)) & (Subst3)\\
a \fresh y,z \Rightarrow subst(lam(\abs{a}{x}),y,z) = \\
\qquad lam(\abs{a}{subst(x,y,z)}) & (Subst4)
\end{array}\] 
Here, the first three equations are ordinary first-order (conditional) equations, while the fourth specifies how substitution behaves when a $\lambda$-bound name is encountered: the name in the abstraction is required to be fresh, and when that is the case, the substitution simply passes inside the bound name.  Even though the $(Subst4)$ axiom is a conditional equation, the specified substitution operation is still a total function, because any abstraction denotes an alpha-equivalence class that contains at least one sufficiently fresh name so that the freshness precondition holds.  Other cases where the name is not sufficiently fresh, or where we attempt to substitute for the bound name, do not need to be specified, instead they can be proved from the above axioms and an induction principle for $\lambda$-terms (see e.g. \citet{UrbanC:ntih} or Section~\ref{sec:examples}).
\end{example}

In the original form of nominal logic introduced by Pitts, there are no constants of ``name'' sorts: in fact, adding such constants makes nominal logic inconsistent, thanks to the equivariance axiom~$(E3)$ (considering constants as a special case of 0-ary functions).
The first system we will present, NLML (Section~\ref{sec:nomlm}) is based on directly importing nominal logic into matching logic and so in NLML it is also the case that there are no name constants.
However, having syntax for ground names (sometimes called ``atoms'' or ``name constants'') is attractive in many practical situations, such as nominal unification, rewriting, and  logic programming.  For example, in standard nominal unification and matching algorithms, certain arguments such as $a,b$ in $\swap{a}{b}{t}$, $a \fresh t$ and $\abs{a}{t}$, are required to be ground, and this is critical for efficiency and for most general unifiers to exist.  Also, in logic programming the semantics of programs is defined in terms of \emph{Herbrand models} built out of ground terms, and this does not work for Pitts' formulation of nominal logic because there are no ground terms involving names.  Nominal logic was modified by \citet{cheney:comhtn,cheney16jlc} to allow for ground names $\Aa,\Ab,\ldots$ as an additional syntactic class which can be bound by the $\new$-quantifier.  We will adopt a similar approach later in the paper in the NML system (Section~\ref{sec:newnml}).

\subsection{Matching Logic}
There are several different recent presentations of variations of matching logic~\cite{chen19lics,chen20icfp,rosu-2017-lmcs}.  Our presentation is closest to that of \citet{chen19lics}, though we believe our approach could be adapted to other systems such as Applicative Matching Logic (AML)~\cite{chen20icfp} without problems.  The most important difference between these systems is that in AML, there are no sorts governing the syntax, and instead sorts are internalized into the logic as constants.  The presentation of \citet{chen19lics} uses a standard multisorted syntax and this presentation is closer to Pitts's presentation of Nominal Logic, so we follow this approach.

We consider a matching logic signature  $\mathbf{\Sigma} = (\Sort,\Var,\Sigma)$ as above, which we assume contains operations for product sorts and $Pred$.

In matching logic, unlike (say) first-order logic, there is only one syntactic class of \emph{patterns}:
\[
\phi_\sort ::= x\colon \sort \mid \phi_\sort \wedge \psi_\sort \mid \neg \phi_\sort \mid \exists x\colon\sort'. \phi_\sort \mid \sigma(\phi_{\sort_1},\ldots,\phi_{\sort_n})\]
where in the first case $x\in \Var_\sort$ and in the last case $\sigma \in \Sigma_{\sort_1,\ldots,\sort_n;\sort}$.  
Patterns generalize both terms and first-order logic formulas.  We describe their semantics informally before providing the formal semantics.  The meaning of a pattern of sort $\sort$ is a set of elements of sort $\sort$ that the pattern matches.  The variable pattern $x\colon\sort$ matches just one element, the value of $x$.  Conjunction corresponds to intersection of matching sets and negation corresponds to complementation.  Existential patterns $\exists x\colon\sort'. \phi_\sort$ match those elements of $\sort$ for which there exists a value of $x$ of sort $\sort'$ making the pattern $\phi_\sort$ match.  Finally, symbol patterns $\sigma(\phi_{\sort_1},\ldots,\phi_{\sort_n})$ match as follows.  Each symbol is interpreted as an arbitrary relation between $n$ input arguments and an output (or equivalently, as a function from the inputs to a set of possible matches).  Note that such symbols do not necessarily correspond to functions: they may be undefined, and there may be multiple matching values for a given choice of inputs. 
We assume there is a distinguished sort $Pred$ and unary symbols $\coerce{(-)}_\sort \in \Sigma_{Pred;\sort}$ for every sort $\sort$, called \emph{coercion}, as well as equality symbols $=_\sort$  at every sort $\sort$.  
Other patterns (disjunction $\phi_1 \vee \phi_2$, implication $\phi_1 \Rightarrow \phi_2$, universal quantification, true and false) can be defined as abbreviations, for example, $\top_\sort \defeq \exists x\colon \sort.x\colon \sort$ (since carriers are non-empty sets) and $\bot_\sort \defeq \neg \top_\sort$.  Subscripts indicating the sort at which an operation is used are omitted when clear from context.

Formally, the semantics of patterns is defined as follows.
A matching logic model $M =(\{M_\sort\}_{\sort \in S}, \{\sigma_M\}_{\sigma \in \Sigma})$
consists of a non-empty carrier set $M_\sort$ for each sort $\sort \in S$ and an interpretation $\sigma_M\colon M_{\sort_1}\times\ldots\times M_{\sort_n}\rightarrow \mathcal{P}(M_\sort)$ for each $\sigma \in \Sigma_{\sort_1,\ldots,\sort_n;\sort}$.  A \emph{valuation} $\rho$ is a function  $\rho \colon \Var\to  M$ that  respects sorts so that $\rho(x) \in M_\sort$ for each $x \in \Var_\sort$.

Given $\mathbf{\Sigma} = (S,\Var,\Sigma)$, a matching logic $\Sigma$-model $M$ and valuation $\rho$, the extension $\semantics{\rho}{-}$ to patterns is defined by:
$\semantics{\rho}{x} = \{\rho(x)\}$ for all $x \in \Var$, 
$\semantics{\rho}{\phi_1 \wedge \phi_2} = \semantics{\rho}{\phi_1} \cap  \semantics{\rho}{\phi_2}$,
$\semantics{\rho}{\neg\phi_\sort} = M_\sort - \semantics{\rho}{\phi_\sort}$,
$\semantics{\rho}{\exists x\colon \sort'. \phi_\sort} = \bigcup_{a \in M_{\sort'}} \semantics{\rho[a/x]}{\phi_\sort}$,
$\semantics{\rho}{\sigma(\phi_{\sort_1},\ldots, \phi_{\sort_n})} = \overline{\sigma_M}(\semantics{\rho}{\phi_{\sort_1}},\ldots,\semantics{\rho}{\phi_{\sort_n}})$, for  $\sigma \in \Sigma_{\sort_1,\ldots,\sort_n;\sort}$, where we write $\overline{\sigma_M}$ to denote the pointwise extension of $\sigma_M$, i.e. $\overline{\sigma_M}(V_1,\ldots,V_n) = \bigcup\{\sigma_M(v_1,\ldots,v_n) \mid v_1\in V_1,\ldots,v_n\in V_n\}$.

A pattern $\phi_\sort$ is valid in $M$, written $M \vDash \phi$, if $\semantics{\rho}{\phi_\sort} = M_\sort$ for all valuations $\rho\colon \Var \rightarrow M$. 
If $\Gamma$ is a set of patterns (called axioms), then $M \vDash \Gamma$ if $M \vDash \phi$ for each $\phi \in \Gamma$ and $\Gamma \vDash \phi$ if $M \vDash \phi$ for all $M \vDash \Gamma$. The pair $(\mathbf{\Sigma},\Gamma)$ is a matching logic theory, and $M$ is a model of the theory if $M \vDash \Gamma$.

Again following \citet{chen19lics} we assume the sort $Pred$ is interpreted as a single-element set $M_{Pred} = \{\star\}$, thus, a predicate is true if its interpretation is $M_{Pred}$ and false if its interpretation is $\emptyset$.  The interpretation of the  operation $\coerce{(-)} \in \Sigma_{Pred;\sort}$  maps $\star$ to $M_\sort$.  That is, a true predicate is interpreted as $\{\star\}$ which can be coerced to the set of all patterns matching some other sort, while a  false predicate is interpreted as $\emptyset$ which can be coerced to the empty pattern of sort $\sort$.  Coercions may be omitted when obvious from context.
Equality $(=_\sort)$ predicates are  interpreted as follows: $\semantics{\rho}{\phi = \psi} =  \{\star \mid \semantics{\rho}{\phi} = \semantics{\rho}{\psi}\}$.  We may write $\phi \subseteq \psi$ as an abbreviation for $\phi \vee \psi = \psi$, and $x \in \phi$ for $x \subseteq \phi$ to emphasize that when $x$ is a variable it matches exactly one value.  Some presentations of matching logic include other primitive formulas that are definable using equality and coercion, such as \emph{definedness} $\df{\phi}_\sort$ (which can be defined as $\exists x:\sort. x \coerce{\in} \phi$). 
Alternatively, the coercion operator can be defined using a primitive definedness symbol.
The notation $\sigma\colon \sort \to \sort'$ (resp.\ $\sigma\colon \sort\rightharpoonup \sort'$) indicates that  $\sigma$ is a \emph{function} (resp.\ \emph{partial function}), in which case an appropriate axiom for $\sigma$ is assumed to be included in the theory.

Variables in patterns can be substituted by patterns: $\phi[\psi/x]$ denotes the result of substituting $\psi$ for every free occurrence of $x$ in $\phi$, where $\alpha$-renaming happens implicitly to prevent variable capture. Substituting a pattern for a universally-quantified variable does not preserve validity in general, however, if $\phi$ is valid and $\psi$ is functional, i.e., it evaluates to a singleton set, then  $(\forall x. \phi) \Rightarrow \phi[\psi/x]$  is valid  (see~\cite{chen19lics}). 

Hilbert-style proof systems for matching logic are available: we refer to \cite{chen19lics,rosu-2017-lmcs} for details.

\section{Nominal Matching Logic}\label{sec:nml}

We now consider  ways to combine nominal logic and matching logic, yielding a logical framework extending the advantages of matching logic for reasoning about language semantics with the advantages of nominal logic for reasoning about name binding.

\subsection{Nominal Logic as a Matching Logic Theory}
\label{sec:nomlm}
The first approach is simply to consider an instance of nominal logic (i.e., a signature and theory) as an instance of matching logic.    Since any (sorted) first-order logic theory can be translated to a theory of matching logic, and any instance of nominal logic is a (sorted) first-order logic theory, it is clear we can define nominal logic as a theory in matching logic. However, some adjustments are needed, as with any translation of first-order logic into matching logic: we must specify that the function symbols of the nominal logic signature satisfy the \emph{Function} axiom (i.e., denote functions)  and predicate symbols are pattern symbols with result sort $Pred$.  Also, all the axioms  of nominal logic must be included as translated axioms in the matching logic theory.
  This is not as straightforward as it might sound, since the nominal logic axioms treat function symbols and relation symbols differently, whereas in matching logic there is no built-in distinction.  First, we define the syntax  of the nominal logic theory $NL$ in matching logic.

\subsubsection{Syntax}
The nominal logic signature includes function symbols to represent swappings and abstraction, which map to symbols in the matching logic signature, interpreted using the axioms in nominal logic. Similarly, the equality and freshness relation symbols from nominal logic map to  symbols in matching logic with appropriate interpretations.

Concretely, let $\Sigma^{NL}$ be a nominal logic signature, thus including $\swap{-}{-}{-} : \nsort \times \nsort \times \sort \to \sort$ and $\abs{-}{-} : \nsort \times \sort \to \abs{\nsort}{\sort}$ (name swapping and abstraction symbols, for any name sort $\nsort$ and sort $\sort$) and equality  $= :  \sort \times \sort$ at any sort $\sort$ and freshness $\fresh : \nsort \times \sort$ for any name sort $\nsort$ and sort $\sort$.   We define the matching logic signature $\Sigma^{NLML}$ by using the predicate sort $Pred$ and considering all relation symbols $R : \sort_1\times\cdots\times \sort_n$ to be matching logic symbols in $\Sigma_{\sort_1\times\cdots\times \sort_n; Pred}$.  Otherwise, constants and function symbols of nominal logic are regarded as matching logic symbols in the obvious way.  To ensure that they are interpreted as (total, deterministic) functions in matching logic we also add instances of the function axioms:
\[\begin{array}{clccl}
\exists x. c = x & (\mathit{Fun}_c) & ~~~~ &
\forall \vec{z}.\exists x. f(\vec{z}) = x & (\mathit{Fun}_f)
\end{array}\]

Any formula of the form $\new a. \phi$ in nominal logic can be replaced by $\exists a. a\fresh \vec{x} \wedge \phi$ preserving its truth value (axiom (Q) justifies this translation, see Figure~\ref{fig:axiNL}). Therefore we do not consider  formulas with $\new$ quantifiers yet (we revisit this decision in Section~\ref{sec:newnml}).  

\subsubsection{Axioms}  
All the nominal logic axioms (see Figure~\ref{fig:axiNL}) may be viewed as matching logic axioms; however since there is no distinction between constants, function symbols and relation symbols in matching logic we can replace the various equivariance axiom schemes with a single scheme stating that all symbols in $\Sigma^{NLML}$ are equivariant:
\[\swap{a}{b}{\sigma(\vec{x})} = \sigma(\swap{a}{b}{\vec{x}}) \qquad (EV)\]
There is a minor subtlety in dealing with equivariance for predicates, since for a predicate the above axiom scheme does not ensure that $R(\vec{x}) \Rightarrow R(\swap{a}{b}{\vec{x}})$.  To recover axioms $(E2),(E4)$ of nominal logic, we also include an axiom ensuring that there is just one (necessarily equivariant) value of sort $Pred$ (note that $x$ in this axiom can only be interpreted by the single element of $M_{Pred}$):
\[\forall x\colon Pred. x = \top_{Pred} \qquad (P)\]
Note that although we assume $Pred$ is interpreted as a single-element set in all models, we need this axiom in order to use this fact in proofs.

 Concretely the axioms are: 
\[\begin{array}{rcl}
Ax_{NLML} &= &\{(S1),(S2),(S3), (EV), (P), (F1),(F2),(F3),(F4), \\
&&(A1), (A2), (Fun_f), (Fun_c)\}\end{array} \]
where $f$ and $c$ are the function symbols and constants of the nominal logic signature.
We write $NLML$ to refer to matching logic instantiated by (translated) nominal logic axioms as above.

\subsubsection{Models}
Let $M$ be a matching logic  $\Sigma_{NLML}$-model satisfying the axioms in $Ax_{NLML}$.
Axiom $(F4)$ ensures that there are infinitely many atoms in the carrier sets of name sorts. 

True in nominal logic maps to  $\top$ of sort $Pred$, matching every element of the domain.
False in nominal logic maps to $\bot$ of sort $Pred$, matching nothing.

Axioms $(S1)$--$(S3)$, $(EV)$, $(P)$ ensure that the interpretation of the swapping symbol $\swap{-}{-}{-} \in \Sigma_{NLML}$ is a swapping operator: $(S1)$--$(S3)$ define swapping for atoms, and $(EV)$, $(P)$ define swapping for the interpretation of patterns built using function symbols or relation symbols.
Axioms $(F1)$--$(F4)$ ensure that the  freshness symbol $\fresh$ is interpreted as a freshness relation.
Axioms $(A1)$, $(A2)$ define $\alpha$-equivalence of abstraction patterns as expected. 

We could further restrict the class of matching logic models considered by requiring that the carrier sets of sorts be nominal sets (i.e., elements have finite support), however,  this is not strictly necessary. There are also interesting models of nominal logic where the carrier sets are not finitely-supported nominal sets (elements could have infinite support as long as axiom $(F4)$ is satisfied, that is, the support of an element cannot include all the atoms); see~\citet{cheney:comhtn} 
and \citet{gabbay:genmn} for examples.

\subsubsection{Correctness of the representation of nominal logic}

Given an ordinary nominal logic theory $\Gamma$, we write $ML_{NL}(\Gamma)$ for the theory obtained by translating formulas in $\Gamma$ to matching logic patterns inductively and combining with the axioms of nominal logic as modified above. The translation of formulas is the same as translating FOL to matching logic (see~\citet{chen19lics}, section II.D). In the next propositions, provability for patterns is defined with respect to the complete Hilbert-style proof system for matching logic defined in previous work~\cite{rosu-2017-lmcs,chen19lics},  and provability of  nominal logic formulas with respect to the Hilbert-style proof system with Pitts' nominal logic axioms, which was shown complete in~\cite{cheney:comhtn}.

\begin{proposition}
\label{prop:axiom-translation}
The translations of $(E1)$--$(E5)$ are provable in matching logic from the axioms $Ax_{NLML}$.
\end{proposition}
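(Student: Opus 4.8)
The plan is to split the five equivariance statements into two groups according to whether the outermost constructor is a function symbol or a predicate: $(E1)$, $(E3)$, $(E5)$ reduce directly to the single equivariance scheme $(EV)$, while $(E2)$ and $(E4)$ require the extra leverage of $(P)$ together with the Function axioms.

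First I would dispatch $(E1)$, $(E3)$ and $(E5)$, which are immediate instances of $(EV)$. Since matching logic draws no distinction between swapping, abstraction, and ordinary function symbols --- all are merely symbols $\sigma \in \Sigma^{NLML}$ --- it suffices to instantiate $(EV)$, namely $\swap{a}{b}{\sigma(\vec{x})} = \sigma(\swap{a}{b}{\vec{x}})$, at the appropriate $\sigma$. Taking $\sigma$ to be the ternary swapping symbol, so that the inner swap $\swap{b}{b'}{x}$ is read as $\sigma(b,b',x)$, yields $(E1)$; taking $\sigma = f$ yields $(E3)$ verbatim; and taking $\sigma$ to be the abstraction symbol $\abs{-}{-}$ yields $(E5)$, i.e.\ $\swap{b}{b'}{\abs{a}{x}} = \abs{\swap{b}{b'}{a}}{\swap{b}{b'}{x}}$. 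The only bookkeeping is to align the componentwise action of $\swap{a}{b}{-}$ on a tuple $\vec{x}$ with the product-sort conventions, which is routine.

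The substantive work is in the predicate cases $(E2)$ and $(E4)$, where $(EV)$ alone is insufficient: for a predicate symbol it only supplies the equality of $Pred$-patterns $\swap{a}{b}{p(\vec{x})} = p(\swap{a}{b}{\vec{x}})$, not the demanded implication. I would therefore prove the key lemma that swapping acts as the identity on every pattern of sort $Pred$, i.e.\ $\swap{a}{b}{\phi} = \phi$ for $\phi\colon Pred$. For a variable $z\colon Pred$ this is easy: by $(\mathit{Fun}_f)$ the pattern $\swap{a}{b}{z}$ is functional, hence equal to some $w\colon Pred$, and $(P)$ forces $w = \top_{Pred} = z$, giving $\forall z\colon Pred.\ \swap{a}{b}{z} = z$. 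Lifting this to an arbitrary predicate pattern $\phi$ is the crux, because $\phi$ need not be functional (it may denote $\bot_{Pred}$), so one cannot simply instantiate the variable lemma. I would instead argue by the two cases $\phi = \top_{Pred}$ and $\phi = \bot_{Pred}$, whose disjunction is derivable using $(P)$ and excluded middle: in the $\bot_{Pred}$ case $\swap{a}{b}{\bot_{Pred}} = \bot_{Pred}$ by propagation of $\bot$ through a symbol, and in the $\top_{Pred}$ case $\swap{a}{b}{\top_{Pred}} = \swap{a}{b}{(\exists x\colon Pred.\ x)} = \exists x\colon Pred.\ \swap{a}{b}{x} = \exists x\colon Pred.\ x = \top_{Pred}$, using propagation of symbols over $\exists$ and the variable lemma. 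With $\swap{a}{b}{\phi} = \phi$ established, $(E4)$ collapses: rewriting its conclusion by $(EV)$ and then by the lemma turns $p(\vec{x}) \Rightarrow p(\swap{a}{b}{\vec{x}})$ into the tautology $p(\vec{x}) \Rightarrow p(\vec{x})$, and $(E2)$ follows identically with $\fresh$ in place of $p$.

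I expect the predicate cases to be the main obstacle --- and precisely the reason $(P)$ belongs to $Ax_{NLML}$. Everything else is a direct appeal to $(EV)$ or to standard derived matching-logic principles (propagation of symbols over $\bot$, $\vee$ and $\exists$, congruence of $=$, and functional substitution), all available from the complete proof system of \cite{chen19lics,rosu-2017-lmcs}. The real care is entirely in bridging the gap between the equational statement $(EV)$ supplies for predicates and the implicational form required by $(E2)$ and $(E4)$, via a case analysis on $Pred$ that is sensitive to the non-functionality of predicate patterns.
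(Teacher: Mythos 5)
Your proposal is correct and follows essentially the same route as the paper's proof: $(E1)$, $(E3)$, $(E5)$ are dispatched as direct instances of $(EV)$, and $(E2)$, $(E4)$ are reduced to $(EV)$ via the observation that swapping acts as the identity on patterns of sort $Pred$, which is exactly what the paper's appeal to $(P)$ (to show $a \fresh x = \swap{b}{b'}{(a \fresh x)}$) amounts to. Your explicit $\top_{Pred}/\bot_{Pred}$ case analysis simply fills in the detail the paper leaves implicit in that one-line step.
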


\begin{proposition}\label{prop:translation}
There exists a translation from closed nominal logic formulas to NLML formulas, $\phi \mapsto \phi'$, such that $\phi': Pred$ and: 
\begin{enumerate}
    \item  If $\phi$ is provable in nominal logic then $\phi' = \top$ 
    in NLML.  

    \item If $\phi$ is not provable in nominal logic then  $\phi' = \bot$ 
    in NLML.  
\end{enumerate}
\end{proposition}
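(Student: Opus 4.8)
The plan is to route both parts through the completeness theorems already cited on each side — Pitts' nominal logic proof system is complete for nominal-model validity~\cite{cheney:comhtn}, and the matching logic proof system is complete for $NLML$-model validity~\cite{chen19lics,rosu-2017-lmcs} — so that syntactic provability can be replaced by semantic validity and the whole argument reduced to a model-theoretic comparison. First I would fix $\phi\mapsto\phi'$ to be the standard first-order-to-matching-logic translation of \citet{chen19lics}, so that $\phi':Pred$ by construction, and prove the fundamental correspondence lemma: every model $M$ of $Ax_{NLML}$ induces a nominal-logic structure $M^{\flat}$ (and conversely every nominal model yields an $NLML$-model) such that, for every closed $\phi$, $\semantics{\rho}{\phi'}=\{\star\}$ exactly when $M^{\flat}\models\phi$ and $\semantics{\rho}{\phi'}=\emptyset$ otherwise (the valuation $\rho$ is irrelevant since $\phi$ is closed). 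This step is essentially the analysis already carried out in the Models subsection: $(F4)$ forces infinitely many atoms, $(S1)$--$(S3),(EV),(P)$ force the swapping action, $(F1)$--$(F4)$ force the freshness relation, and $(A1),(A2)$ force abstraction to denote $\alpha$-classes; predicate patterns land in $M_{Pred}=\{\star\}$, so $\phi'$ is genuinely two-valued.

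Combining the correspondence lemma with the two completeness theorems yields the two biconditionals I will actually use: $\vdash_{NLML}(\phi'=\top)$ iff $\phi$ holds in every nominal model iff $\vdash_{NL}\phi$; and $\vdash_{NLML}(\phi'=\bot)$ iff $\phi$ fails in every nominal model iff $\vdash_{NL}\neg\phi$. Part~(1) is then immediate. The whole weight of the proposition falls on Part~(2), which under these biconditionals is exactly the assertion that nominal logic over the canonical signature is \emph{negation-complete} on closed sentences: $\not\vdash_{NL}\phi\Rightarrow\vdash_{NL}\neg\phi$. So the second, and decisive, step is to prove this completeness lemma for closed formulas.

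To prove negation-completeness I would show that the axioms are \emph{relatively categorical over the name sort}: any model of $Ax_{NL}$ is determined, up to a sort-indexed isomorphism, by its set of atoms. By $(F4)$ the name sort $\nsort$ is infinite; by $(F2),(F3)$ freshness on names is just disequality; by $(A2)$ every element of an abstraction sort has the form $\abs{a}{y}$, by $(A1)$ equality of abstractions is the $\alpha$-relation, and by $(EV),(P)$ and $(S1)$--$(S3)$ the swapping action and hence the finite, definable support are forced on every sort. An induction on the canonical sort structure — the name sort, products, and nested abstraction sorts $\abs{\nsort}{\sort}$ — then shows that each carrier is, in every model, exactly the standard nominal construction over that model's atoms, carrying the forced operations and relations. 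Crucially, in the \emph{pure} canonical signature every element has finite support by construction, so the non-standard infinite-support models noted earlier in the paper (which require additional sorts or symbols) simply do not arise here and cannot separate closed sentences. Since the interpretation is uniform and first-order, and since any two infinite sets are elementarily equivalent (the theory of an infinite pure set being complete), any two models of $Ax_{NL}$ are elementarily equivalent; hence every closed $\phi$ is decided, and negation-completeness follows. Feeding this back through the biconditional of the previous paragraph gives Part~(2).

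The hard part will be the relative-categoricity lemma, specifically verifying that the abstraction sorts contain no ``junk'' in an arbitrary model: $(A2)$ gives surjectivity of $\abs{-}{-}$ and $(A1)$ pins down equality, but one must check in every model that the support of each abstraction is exactly the finite set predicted by the standard construction and that swapping acts as the standard pointwise action through all nested sorts, ruling out infinite-support elements even without a finite-support assumption. Handling products and iterated abstractions uniformly, and upgrading sort-by-sort isomorphism (straightforward for countable atom sets) to genuine elementary equivalence for models whose atom sets have arbitrary cardinality, is where the real work lies; everything else is either bookkeeping in the translation or a direct appeal to the two completeness theorems. I would also flag the boundary of the result: the argument is specific to the canonical nominal signature, and for an extended theory $\Gamma$ admitting infinite-support models Part~(2) need not hold, so the proposition should be read as concerning pure nominal logic.
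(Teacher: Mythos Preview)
Your route diverges from the paper's, which is entirely proof-theoretic and brief. For Part~(1) the paper simply notes that first-order logic embeds as a theory in matching logic and that the nominal-logic axioms are derivable in NLML (via Prop.~\ref{prop:axiom-translation}), so any NL proof of $\phi$ replays step-by-step in NLML as a proof of $\phi'=\top$; no model-correspondence lemma is built or used. Your semantic detour---pairing NLML-models with NL-models and invoking completeness on both sides---reaches the same conclusion but less directly. For Part~(2) the paper's entire argument is: if $\phi$ is not NL-provable then ``by completeness'' $\neg\phi$ is provable, whence by Part~(1) $(\neg\phi)'=\top$ and so $\phi'=\bot$. You correctly identify that this middle step is not ordinary soundness--completeness but \emph{negation-completeness} of nominal logic on closed sentences, and you attempt to supply it via a relative-categoricity argument over the name sort. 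This is considerably more than the paper actually does, and your caveat that the claim can only hold over the pure canonical signature (and fails once uninterpreted data sorts or symbols are present) is exactly right. In that sense your Part~(2) is more scrupulous than the paper's own treatment, though the categoricity lemma you sketch---ruling out junk in abstraction sorts and lifting sort-by-sort isomorphism to elementary equivalence across atom-set cardinalities---remains substantial and is not fully discharged in your outline.
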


Although this approach to represent nominal logic in matching logic allows us to incorporate nominal features into matching logic with minimal effort, it restricts users to work within a specific theory and it does not exploit the full capabilities of matching logic to define partial and non-deterministic operators.  However, while axioms $(Fun_f)$ and $(Fun_c)$ force function symbols and constants imported from $NL$ to behave like functions, nothing stops us having non-deterministic or partial operations. We explore this possibility next.

\subsubsection{Partial and nondeterministic  operations}
\label{sec:sndapproach}
Matching logic supports function-like syntax for operations that are partial or multivalued (or both).  There is a natural elimination form for abstraction in nominal logic, called concretion~\cite{pitts:ns}, which takes an abstraction $t$ and a name $a$, and returns the body of the abstraction with the bound name replaced by $a$.  Concretion is only defined when the name is fresh for the abstraction.  Thus, in standard first-order logic concretion cannot be a function symbol, but in matching logic we can include  concretion as a partial operator.  
Likewise in $NL$ axiom $(F4)$  asserts that for any finite collection of values, there exists a name fresh for all of these values. In matching logic, we can directly define a freshness operator $fresh(-)$ which matches any name that is fresh for its argument.

The concretion operator is definable in NLML as follows, where  the $=$ symbol is coerced from $Pred$ to the sort of $y$ (recall that the interpretation of $\wedge$ is intersection).
\[
x @ a \defeq \exists y. y \wedge (\abs{a}{y} \coerce{=} x)\qquad (*)
\]
Let us walk through this definition to ensure its meaning is clear.  Informally, concretion instantiates the abstracted name of $x$ with $a$, but only if it is possible to safely rename to $a$.  The right-hand side encodes this meaning in matching logic.  First, we existentially quantify over some $y$ which will stand for the body of the abstraction with the bound name renamed to $a$.  The conjunction takes the intersection of the interpretations of the two patterns $y$ and $\abs{a}{y} \coerce{=} x$. The pattern $y$ simply matches $y$ itself, while $\abs{a}{y} \coerce{=} x$ either matches everything (if there is an $\alpha$-variant of $x$ where $a$ is the bound name) or nothing.  Thus we can read the pattern as ``$x @ a$ equals some $y$ provided $\abs{a}{y} = x$.''

Thus, Axioms (A1) and (A2) can be reformulated using abstraction as follows:
\[\begin{array}{cl}
\forall a,b:\nsort,x:\abs{\nsort}{\sort}. (\abs{a}{x})@b = (\swap{a}{b}{x} \wedge b \coerce{\fresh} \abs{a}{x}) & (A1')\\
\forall x:\abs{\nsort}{\sort}.x = \exists a. a \coerce{\fresh} x \wedge \abs{a}(x@a) & (A2')
\end{array}\]
Axiom $(A1')$ asserts a conditional form of beta-equivalence: if $b$ is  fresh for $\abs{a}{x}$ then $(\abs{a}{x})@b = \swap{a}{b}{x}$ (and in particular if $a= b$ then $(\abs{a}{x})@b = x$), otherwise if the freshness assumption does not hold the concretion pattern matches nothing.  Axiom $(A2')$ asserts eta-equivalence: any abstraction is equivalent to one where the bound name is fresh, and the body is concreted at that name.  (This axiom can be written more concisely using the $\new$-quantifier which we will consider later.) 
Note the use of a coercion operator ($\coerce{(-)}$) on $\fresh$, to transform $Pred$, the output of $\fresh$, into the sort $\abs{\nsort}{\sort}$ of $x$ so the intersection pattern $\wedge$ intersects sets of matching patterns at that sort.

These two axioms are equivalent to the classic nominal logic ones for abstraction:

\begin{proposition}\label{prop:abstraction}
Axioms $(A1')$ and $(A2')$ are valid in $NLML$ where concretion is defined using (*).  
Conversely, $(A1)$ and $(A2)$ follow from $(A1')$ and $(A2')$ in $NLML$ without $(A1),(A2)$.
\end{proposition}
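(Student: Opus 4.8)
The plan is to argue semantically, reasoning about an arbitrary matching logic model of the relevant axioms; since the matching logic proof system is complete this also yields the proof-theoretic reading. Throughout I would use the semantic meaning of concretion obtained by unfolding~($*$): for any pattern $u$ of sort $\abs{\nsort}{\sort}$ and name variable $a$ one has $\semantics{\rho}{u @ a} = \{v \mid \abs{\rho(a)}{v} = \semantics{\rho}{u}\}$, i.e.\ $u@a$ matches exactly those $v$ whose abstraction at $a$ equals $u$. I would also use the derived freshness characterisation $a \fresh x \iff \exists c.\, c \fresh a \wedge c \fresh x \wedge \swap{a}{c}{x} = x$, which follows from $(F1),(F4),(S1)$--$(S3)$ and equivariance of $\fresh$ (itself a consequence of $(EV),(P)$, hence available even when $(A1),(A2)$ are dropped).

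Forward direction (validity of $(A1'),(A2')$ in $NLML$). Here I may assume $(A1),(A2)$ and hence the standard nominal facts $a \fresh \abs{a}{x}$ and $b \fresh \abs{a}{x} \iff b = a \vee b \fresh x$. For $(A1')$ I compute both sides: by the reading above the left side $(\abs{a}{x})@b$ is $\{v \mid \abs{b}{v} = \abs{a}{x}\}$, while the right side is $\{\swap{a}{b}{x}\}$ when $b \fresh \abs{a}{x}$ and $\emptyset$ otherwise. When $b \fresh \abs{a}{x}$, $(F1)$ (using $a,b \fresh \abs{a}{x}$) gives $\swap{a}{b}{\abs{a}{x}} = \abs{a}{x}$, so $\abs{b}{\swap{a}{b}{x}} = \swap{a}{b}{\abs{a}{x}} = \abs{a}{x}$ by $(E5),(S3)$, placing $\swap{a}{b}{x}$ in the left side, while uniqueness follows from $(A1)$. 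When $b$ is not fresh for $\abs{a}{x}$, $(A1)$ forces any witness to equal $\swap{a}{b}{x}$ and then $b \fresh x$, hence $b \fresh \abs{a}{x}$, a contradiction, so the left side is empty too. For $(A2')$, $(A2)$ lets me write the value of $x$ as $\abs{a_1}{y_1}$; for each $a_0 \fresh x$ the reading plus $(A1')$ give $x@a_0 = \swap{a_1}{a_0}{y_1}$ and eta $\abs{a_0}{(x@a_0)} = x$, while $(F4)$ guarantees at least one such $a_0$, so the existential on the right collapses to the singleton value of $x$.

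Converse direction (derive $(A1),(A2)$ from $(A1'),(A2')$). The crucial preliminary, and the step I expect to be the main obstacle, is to re-establish the freshness--abstraction correspondence $b \fresh \abs{a}{x} \iff b = a \vee b \fresh x$ \emph{without} $(A1),(A2)$, since the obvious attempts are circular: $(F1)$ needs $b$ fresh for $\abs{a}{x}$, which is precisely what one is trying to prove. Two ideas break the circularity. First, combining $(A1')$ with the reading of~($*$) yields the key equivalence
\[ b \fresh \abs{a}{x} \iff \exists v.\, \abs{b}{v} = \abs{a}{x}, \qquad (\dagger)\]
because $(\abs{a}{x})@b$ is nonempty exactly when the coerced $b \coerce{\fresh} \abs{a}{x}$ is, and nonempty exactly when some $v$ abstracts to $\abs{a}{x}$; in particular $a \fresh \abs{a}{x}$ (take $v = x$), and any witness equals $\swap{a}{b}{x}$. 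Second, $(F4)$ applied to the element $\abs{a}{x}$ itself supplies a name $b_0$ fresh for $\abs{a}{x}$ (and, enlarging the tuple, for $a,x,b$). Then equivariance of $\fresh$ transports $b_0 \fresh \abs{a}{x}$ along $\swap{b}{b_0}{-}$: since $\swap{b}{b_0}{a} = a$ and $\swap{b}{b_0}{x} = x$ by $(F1)$ on the base sort when $b \fresh x$, one gets $\swap{b}{b_0}{\abs{a}{x}} = \abs{a}{x}$ and hence $b \fresh \abs{a}{x}$; the reverse implication runs the same computation backwards, concreting at $a$ to recover $\swap{b}{b_0}{x} = x$ and then applying the freshness characterisation.

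With the correspondence in hand the rest is routine. $(A2)$ follows from $(A2')$ directly: the right-hand union equals the singleton value of $x$, so some fresh $a_0$ contributes it, giving $x = \abs{a_0}{w}$. For $(A1)$ I use beta at the bound name, $(\abs{a}{x})@a = x$ (immediate from $(A1')$ and the established $a \fresh \abs{a}{x}$). In the forward implication, the case $a = a'$ gives $x = x'$ by beta, and the case $a \neq a'$ gives $a \fresh x'$ from the correspondence and $\swap{a}{a'}{x} = x'$ by beta together with $(S2)$; in the backward implication the second disjunct is closed by deriving $a' \fresh x$ via equivariance, then $a' \fresh \abs{a}{x}$ via the correspondence, and finally $\swap{a}{a'}{\abs{a}{x}} = \abs{a}{x}$ via $(F1)$, which rewrites $\abs{a'}{x'}$ to $\abs{a}{x}$.
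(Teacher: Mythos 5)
Your proposal is correct and, while it shares the paper's overall skeleton (forward: unfold $(*)$ and reason from $(A1),(A2)$; converse: recover $(A1),(A2)$ by concreting abstractions at suitable names), it differs from the paper's proof in two substantive ways. First, you isolate the freshness--abstraction correspondence $b \fresh \abs{a}{x} \iff b = a \vee b \fresh x$ (together with $a \fresh \abs{a}{x}$) as an explicit lemma and prove it \emph{without} $(A1),(A2)$, via your equivalence $(\dagger)$, axiom $(F4)$, and equivariance transport. The paper's converse argument uses exactly these facts but silently (``$a \fresh \abs{a}{x} = \abs{b}{y}$ \ldots\ the only way this can be the case is if $a \fresh y$'', and ``since $c$ is fresh for everything in sight the freshness constraints simplify away''), which is precisely where the circularity you warn about could hide; making the lemma explicit and non-circular is a genuine improvement in rigor. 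Second, to recover $(A1)$ the paper concretes both sides at a common fresh name $c$ and then performs swapping algebra, whereas you concrete at the bound name $a$ itself using the derived beta law $(\abs{a}{x})@a = x$, which shortens the case analysis. One small imprecision: in the reverse direction of your correspondence ($b \fresh \abs{a}{x}$ and $b \neq a$ imply $b \fresh x$), ``concreting at $a$'' as stated yields only the tautology $x = x$, since both $(\abs{a}{x})@a$ and $(\abs{b}{\swap{a}{b}{x}})@a$ reduce to $x$. You should instead concrete both sides of $\abs{a}{x} = \abs{b}{\swap{a}{b}{x}}$ at the auxiliary fresh name $b_0$: by $(A1')$ and $(E1)$ this gives $\swap{a}{b_0}{x} = \swap{a}{b_0}{\swap{b}{b_0}{x}}$, hence $\swap{b}{b_0}{x} = x$ by $(S2)$, and your freshness characterisation then yields $b \fresh x$. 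This fix stays entirely within the toolkit you set up, so the slip is cosmetic rather than structural.
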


The freshness operation $fresh_{\sort;\nsort}(-)$ is definable as follows:
\[
fresh_{\sort;\nsort}(x) \defeq \exists a:\nsort. a \wedge a \fresh^\dagger x \qquad (**)
\]
This operation matches any name that is fresh for $x$.  According to the semantics of patterns, $fresh(\phi)$, the pointwise extension of this operator, will therefore match any name that is fresh for something matching $\phi$; thus perhaps counterintuitively if $a,b$ are distinct names then $fresh(a \vee b) = \top$.  To see why, note that $a \vee b$ denotes a two-element set $\{a',b'\}$.  Every name is fresh for either $a'$ or $b'$ (or both), so all names are included in $\top$.  Indeed, it follows from the semantics of operations that for any unary operation $\sigma(\phi \vee \psi) = \sigma(\phi) \vee \sigma(\psi)$, so clearly $fresh(a\vee b) = fresh(a) \vee fresh(b) = \neg a \vee \neg b = \neg (a \wedge b) = \neg \bot = \top$.

This definition satisfies the following axioms which are reformulated versions of $(F1)$--$(F4)$.
\[
\begin{array}{cl}
\swap{fresh(x)}{fresh(x)}{x}= x & (F1')\\
fresh_{\nsort;\nsort}(a\colon\nsort) = \neg a & (F2')\\
fresh_{\nsort;\nsort'}(a:\nsort) = \top_{\nsort'} \qquad (\nsort \neq \nsort') & (F3')\\
\forall x:\sort. \exists a:\nsort. a \in fresh(x)&(F4')\\
\end{array}\]
\begin{proposition}\label{prop:fresh}
Axioms $(F1')$--$(F4')$ are valid in NLML where $fresh(-)$ is defined using $(**)$.  Conversely, axioms $(F1)$--$(F4)$ are provable from $(F1')$--$(F4')$.
\end{proposition}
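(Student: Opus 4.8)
My plan is to first pin down the denotation of the defined operator $fresh_{\sort;\nsort}(-)$ and then match each primed axiom against its unprimed counterpart through this denotation. Unfolding $(**)$ and using the semantic clauses for $\exists$, for $\wedge$, and for the coercion $\fresh^\dagger$ (which sends $\star$ to all of $M_\nsort$ and $\emptyset$ to $\emptyset$), one computes in any model that $\semantics{\rho}{fresh_{\sort;\nsort}(x)} = \{v \in M_\nsort \mid v \fresh \rho(x)\}$; that is, $fresh(x)$ denotes exactly the set of names of sort $\nsort$ that are fresh for $x$. This single computation is the workhorse for both directions.

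For the forward direction (validity in $NLML$) I would treat the axioms in increasing order of difficulty. Axioms $(F2')$, $(F3')$ and $(F4')$ are essentially restatements of $(F2)$, $(F3)$, $(F4)$ read through the denotation above: $(F2')$ says the set of names fresh for an atom $\rho(a)$ is $M_\nsort \setminus \{\rho(a)\}$, which is exactly $(F2)$'s equivalence $v \fresh \rho(a) \iff v \neq \rho(a)$; $(F3')$ says every name of a different sort is fresh, which is $(F3)$; and $(F4')$, namely $\forall x.\exists a.\, a \in fresh(x)$, unfolds to the existence of a fresh name, i.e.\ $(F4)$. The interesting case is $(F1')$. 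Writing $F = \semantics{\rho}{fresh(x)}$, the pointwise extension gives $\semantics{\rho}{\swap{fresh(x)}{fresh(x)}{x}} = \{\swap{v_1}{v_2}{\rho(x)} \mid v_1,v_2 \in F\}$, using that swap is a function in $NLML$. By $(F1)$ every such swap equals $\rho(x)$, and by $(F4)$ the set $F$ is non-empty, so this set is exactly $\{\rho(x)\} = \semantics{\rho}{x}$, establishing $(F1')$. The role of $(F4)$ here is essential: without non-emptiness of $F$ the pointwise extension would collapse to $\emptyset$ and the equation would fail.

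For the converse I would read the same denotation backwards. Axioms $(F2)$, $(F3)$, $(F4)$ drop out of $(F2')$, $(F3')$, $(F4')$ by the same unfoldings. The only delicate step is recovering the pointwise implication $(F1)$ from the single equation $(F1')$, and here I would use that symbols are monotone and that functional patterns denote singletons. Given $a \fresh x$ and $a' \fresh x$ we have $a \subseteq fresh(x)$ and $a' \subseteq fresh(x)$, so monotonicity of the swap symbol yields $\swap{a}{a'}{x} \subseteq \swap{fresh(x)}{fresh(x)}{x}$, which by $(F1')$ equals $x$; since $a$, $a'$, $x$ are functional and swap is a function, $\swap{a}{a'}{x}$ is a singleton contained in the singleton $\{x\}$, hence equal to it. This gives $\swap{a}{a'}{x} = x$, i.e.\ $(F1)$. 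As these principles (monotonicity and functional substitution) are derivable in the Hilbert system for matching logic and that system is complete, the semantic argument transfers to provability.

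I expect the main obstacle to be precisely this two-way correspondence between $(F1)$ and $(F1')$, because $(F1')$ compresses a universally quantified pointwise fact into a single pattern equation via pointwise extension. The forward direction needs the non-emptiness of $fresh(x)$ supplied by $(F4)$ to prevent the pointwise extension from degenerating to $\emptyset$, while the converse needs monotonicity together with the functionality of swap to project the set equation back down to a single instance. The remaining axioms are routine once the denotation of $fresh$ is fixed.
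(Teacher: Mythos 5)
Your proposal is correct and takes essentially the same route as the paper: unfold the denotation of $fresh(-)$ given by $(**)$ and read each primed axiom as equivalent to its unprimed counterpart, with $(F1')\leftrightarrow(F1)$ being the only case needing real care. In fact you supply more detail than the paper's terse sketch (the role of $(F4)$ in keeping the pointwise extension of swapping non-empty, and monotonicity plus functionality of the swap symbol for the converse direction); the only point the paper makes that you omit is the minor remark that product sorts are used to emulate instances of the multi-variable scheme $(F4)$ from the single-variable $(F4')$.
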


\subsubsection{Assessment}
\label{sec:assessment}
We have now shown that nominal logic can be embedded as a first-order theory in matching logic, using the known embedding of FOL in matching logic, to obtain the combined system NLML.  We have even shown that matching logic offers some facilities that combine nicely with needs of nominal logic that are not well-addressed in a conventional FOL setting, such as catering for partial or nondeterministic operations like concretion and fresh name matching.  Are we done?  Why not stop here?  

We have three reasons for considering alternatives.  First, as noted in Section~\ref{sec:background}, in other nominal settings such as rewriting, logic programming and program verification, it has been found helpful to provide ground names instead of just variables of name sort (see~\cite{UrbanC:nomu-jv,cheney:simptn,FernandezM:nomr-jv,cheney:alppl, pitts11jfp}), and these are not available in NLML for the same reasons as in the original axiomatization of nominal logic.  Second, the $\new$-quantifier is a distinctive feature of nominal logic, and it is not clear whether, or how, it can be incorporated into matching logic in a first-class way, in particular, whether it can be made into a \emph{$\new$-pattern} that can be used anywhere in a pattern just like the other pattern connectives.  Third, and related to the first two points, as we shall discuss in Section~\ref{sec:examples}, typical reasoning in NLML usually requires considerable maintenance of freshness constraints, which can be mitigated using the $\new$-quantifier. 
Therefore, in the next section we consider a second system called NML (Nominal Matching Logic) that adopts ground names and a $\new$-quantifier as pattern constructs, where $\new$ binds names, and $\new$ can occur anywhere in a pattern, just like the other pattern constructs.

Note further that, as has been the case throughout the paper, we largely focus on syntax and semantics of the logics, and not on codifying their proof systems. For NLML this is no great loss since the established proof systems for matching logic can be used directly.  In the NML system presented in the next section this is not literally possible since we are adding to the syntax of patterns.  However as we shall see these additional patterns can be translated away if needed (although this has a cost in terms of freshness constraint maintenance and efficiency of unification algorithms); thus if desired the underlying proof system of matching logic can still be used.  We conjecture that a nicer proof system combining the techniques of \citet{chen19lics} and \citet{cheney16jlc} could be developed but leave this for future investigation.

\subsection{NML: Matching Logic with Names and $\new$}
\label{sec:newnml}
Finally we consider a further extension of matching logic in which syntax for names is introduced and the freshness quantifier ($\new$) is added as a general pattern construct and is given an interpretation directly in the semantics instead of being defined by an axiom scheme. We call this system Nominal Matching Logic (NML). 
Below we define the syntax and semantics of NML patterns and give some illustrative examples.

\subsubsection{NML Syntax}
A nominal matching logic signature $\mathbf{\Sigma}$  consists of $(S,\Var,Name, \Sigma)$ where
\begin{itemize}
\item
$\Sort$  is a non-empty set of sorts $\sort, \sort_1,\sort_2\ldots $, split into a set $\NSort$ of name sorts $\nsort, \nsort_1,\nsort_2,\ldots$, a set $\DSort$ of data sorts $\delta, \delta_1, \delta_2,\ldots$ including a sort $Pred$, and a set $\ASort$ of abstraction sorts $[\nsort]\sort$ --- there is one abstraction sort for each pair $\nsort,\sort$, 
\item 
$\Var$ is a $\Sort$-indexed $\{\Var_\sort\mid \sort\in \Sort\}$ of countable sets of variables $x\colon \sort,y\colon \sort,\ldots$, 
\item
$Name$ is an $\NSort$-indexed family $\{Name_{\nsort} \mid \nsort \in \NSort\}$ of countable sets of names $\Aa\colon \nsort,\Ab \colon \nsort,\ldots$ and 
\item
$\Sigma$ is a $(\Sort^* \times \Sort)$-indexed family  of sets of many-sorted symbols $\sigma$, written $\Sigma_{\sort_1,\ldots,\sort_n;\sort}$.

\end{itemize}

\begin{definition}[NML Syntax]
The syntax of  matching logic patterns is extended by including a distinguished category of names of name sort and a $\new$ pattern as follows:
\[\begin{array}{lcl}
\phi_\sort & ::= & x:\sort  \mid \Aa \colon \nsort  \mid \phi_\sort \wedge \psi_\sort \mid \neg \phi_\sort \mid \exists x{:}\sort'. \phi_\sort \mid  \\
&& \sigma(\phi_{\sort_1},\ldots,\phi_{\sort_n}) \mid \new \Aa {:} {\nsort}. \phi_{\sort}
\end{array}\]
where in the first case $x\in \Var_\sort$ and in the second $\Aa\in Name_{\nsort}$ where $\tau$ is a name sort $\alpha$, and in the case of $\sigma(\phi_{\tau_1},\ldots,\phi_{\tau_n})$ the operation symbol $\sigma$ must be in $\Sigma_{\tau_1,\ldots,\tau_n;\tau}$.  Both $\exists$ and $\new$ are binders (i.e., we work modulo $\alpha$-equivalence): variables that are not under the scope of a $\exists$ and  
names that are not under the scope of a $\new$  are said to be free. Substitution of variables by patterns avoids capture of free variables or free names.  Finally, names $\Aa$ are regarded as functional (single-valued) terms.
\end{definition}
Note that in $\new \Aa: \nsort. \phi_{\sort}$, $\nsort$ is required to be a name sort ($\new$ binds names rather than variables)
and names are now a separate syntactic class (in the semantics, names  behave like constants so that we always know that distinct free names occurring in a pattern are always different). 
In addition we assume that the set $\Sigma$  of symbols  includes the following families of symbols indexed by the relevant sorts (in the following we sometimes omit the subscripts when they are obvious from the context):
\[\begin{array}{rcll}
\swap{-}{-}{-} &:& \nsort \times \nsort \times \sort \to \sort    & \text{swapping (function)}\\
\abs{-}{-} & : & \nsort \times \sort \to \abs{\nsort}{\sort} & \text{abstraction (function)}\\
- @ - &:& \abs{\nsort}{\sort} \times \nsort \rightharpoonup \sort  & \text{concretion (partial function)}\\
fresh_{\sort,\nsort} &\in & \Sigma_{\sort;\nsort}   & \text{freshness (multivalued operation)} \\
- \fresh_{\nsort,\sort} - &:& \nsort \times \sort \rightharpoonup Pred  & \text{freshness relation (predicate)} \\
\coerce{-}  &\in&  \Sigma_{Pred;\tau} & \text{coercion operator, often implicit}
\end{array}\]

\subsubsection{Semantics}\label{sec:semantics}
We now formalize the semantics of nominal matching logic.  This semantics simultaneously generalizes the semantics of nominal logic~\cite{pitts:nomlfo-jv}  and of matching logic~\cite{chen19lics}, and takes into account the use of name constants~\cite{cheney:comhtn,UrbanC:nomu-jv}.
That is, an instance of nominal logic can be considered an instance of nominal matching logic in which all symbols correspond to function symbols or predicates, while an instance of matching logic can be considered an instance of nominal matching logic in which there are no name-sorts (and thus no abstraction sorts and no associated nominal pattern constructors).  

\begin{definition}[NML Model]
\label{def:NML-model}
Given an NML signature $\mathbf{\Sigma}$ with components $(S,\Var,Name, \Sigma)$, let $\mathbb{A}$ be $\bigcup_{\nsort\in \NSort} \mathbb{A}_{\nsort}$ where each  $\mathbb{A}_{\nsort}$ is  an infinite countable set of atoms and the $\mathbb{A}_{\nsort}$ are pairwise disjoint, and let $G$ be a product of permutation groups $\prod_i Sym(\mathbb{A}_i)$ (i.e., $G$ is the group of all sort-respecting permutations).
 A  nominal matching logic model $M = (\{M_{\sort}\}_{{\sort} \in {\Sort}}, \{\sigma_M\}_{\sigma \in \Sigma})$ consists of 
\begin{enumerate}
    \item a non-empty nominal $G$-set  $M_\sort$ for each $\sort \in \Sort$;
    \item an equivariant interpretation $$\sigma_M\colon M_{\sort_1} \times \cdots \times M_{\sort_n} \to \mathcal{P}_{fin}(M_\sort)$$ for each $\sigma \in \Sigma_{\sort_1,\ldots, \sort_n;\sort}$.
\end{enumerate}

We say a model is \emph{standard} if:
\begin{enumerate}
    \item the interpretation of each name sort $\nsort$ is the countably infinite set $\mathbb{A}_{\nsort}$;
    \item the interpretation of the sort $Pred$ is a singleton set $\{\star\}$, where $\star$ is equivariant, hence $\{\star\}$ is a nominal set whose powerset is isomorphic to Bool (as in standard models of matching logic, $\top$ is the full set and represents true and $\bot$ is the emptyset and represents false);
    \item the interpretation of each abstraction sort $\abs{\nsort}{\sort}$ is $\abs{M_{\nsort}}{M_\sort}$
    \item the interpretation of the swapping symbol $\swap{-}{-}{-} \colon \nsort \times \nsort \times \sort \to \sort$ is the swapping function on elements of $M_\sort$;
    \item the interpretation of the abstraction symbol is the quotienting function mapping $\langle a,x\rangle $ to its alpha-equivalence class, i.e. $\langle a,x\rangle  \mapsto \langle a,x\rangle /_{\equiv_\alpha}$ where $\equiv_\alpha$ is as defined in Section~\ref{sec:nl};
    \item the interpretation of the concretion symbol is the (partial) concretion function $(X,\Aa) \mapsto \{y\mid (\Aa,y) \in X\}$, more precisely, concretion applies to an abstraction $[a]v$ and a name $b$: if the name is fresh for the abstraction, it returns $\swap{a}{b}v$, otherwise it is undefined;
    \item the interpretation of the freshness operation $fresh_{\sort,\nsort}$ is the function that returns all the names in $\mathbb{A}_{\nsort}$ that are fresh for the argument, i.e., the  function $x \mapsto \{a \mid a \notin \supp(x)\}$;
    \item the interpretation of the freshness relation $\fresh_{\nsort,s}$ is the freshness predicate  on $\mathbb{A}_{\nsort}\times M_{\sort}$, i.e., it holds for the tuples $\{(a,x) \mid a \notin \supp(x)\}$.
\end{enumerate}
\end{definition}

As usual in matching logic, the pointwise extension 
$$\overline{\sigma_M} : \mathcal{P}_{fin}(M_{\sort_1}) \times \cdots \times \mathcal{P}_{fin}(M_{\sort_n}) \to \mathcal{P}_{fin}(M_\sort)$$ 
is defined as follows:
\[\begin{array}{rcl}
\overline{\sigma_M}(X_1,\ldots,X_n) &=& \bigcup\{\sigma_M(x_1,\ldots,x_n) \mid x_1 \in X_1,\ldots,x_n\in X_n\} \\
\end{array}\]

In particular,  $\psi_{\nsort} \fresh \phi_\sort$ is interpreted as $\top_{Pred}$ (that is, $M_{Pred} = \{\star\}$) if an instance of $\psi$ (a pattern of name sort) is fresh for an instance of $\phi$ (for example,  $\Aa \fresh \Aa \vee \Ab$ is interpreted as $\top$). Similarly,  $fresh(\phi_\sort)$  matches all the names that are fresh for some instance of $\phi_\sort$ (for example $fresh(\Aa \vee \Ab) = \bbA$).  On the other hand, note that $a = a \vee b$ does \emph{not} hold, because equality of patterns tests whether the two patterns have the same denotation.  Equality can therefore not be defined as a symbol, since the meaning of symbols is always determined by their behavior on individual values.  It would be possible to have an ``equality symbol'' $\sigma_=$ which, when restricted to single-element inputs, tests equality; however, for general patterns this operation would test whether the two argument patterns overlap, not whether they are equal.

Similarly, $\abs{\phi_{\nsort}}{\psi_\sort}$ is a pattern of sort $\abs{\nsort}{\sort}$ interpreted by the pointwise extension of the quotienting function associated with the abstraction symbol. For example, $\abs{\Aa \vee \Ab}{(\Aa \vee \Ab)}$ matches $\abs{\Aa}{\Aa}$, $\abs{\Aa}{\Ab}$, and $\abs{\Ab}{\Aa}$ (there are only three different values since $\abs{\Aa}{\Aa}= \abs{\Ab}{\Ab}$).  

\begin{definition}[Valuation]
\label{def:valuation}
A function $\rho : \Var \cup Name \rightharpoonup M$ with finite domain that is compatible with sorting (i.e., $\rho(x\colon\sort) \in M_\sort$, $\rho(\Aa\colon\nsort) \in M_{\nsort}$), injective on names and finitely supported is called a \emph{valuation}.  Injectivity ensures that two different names in the syntax are interpreted by different elements in the valuation. We say $\rho$ is a $\phi$-valuation when $dom(\rho) \supseteq FV(\phi) \cup FN(\phi)$, that is, all variables and names free in $\phi$ are assigned values by $\rho$. 
\end{definition}

Below, when we define the semantics of a pattern $\phi$ we implicitly assume that the valuation is a $\phi$-valuation.

Valuations, and more generally (finite-domain, partial) functions from $X$ to $Y$ where $X$ and $Y$ are nominal sets, can be seen as elements of the $G$-set $(Y_\bot)^X$ of all finitely-supported partial functions from $X$ to $Y$ (see~\cite{pitts11jfp} for details).   Therefore if $\pi$ is a well-sorted permutation in $\mathbb{A}$,  $\pi\cdot\rho$ is well defined: it is the valuation that maps $\Aa\in Name$ to $\pi\cdot \rho(\Aa)$ and $x\in \Var$ to $\pi\cdot \rho(x)$ (names in $Name$ and variables in $\Var$ have empty support and are not affected by permutations of atoms in $\mathbb{A}$). Note that permutations preserve the injectivity, sort-respecting, and finite support properties so the result of applying $\pi$ to $\rho$ is also a valuation.

Recall that a function $F$ is equivariant if $F(\pi\cdot e) = \pi\cdot F(e)$ for every $e$ in the domain of $F$ and every $\pi$. In particular, a valuation is equivariant if $(\pi\cdot \rho)(e) = \pi\cdot\rho( e)$ for every element in its domain.  

A valuation need not be equivariant but must be finitely supported, i.e. $\pi \cdot \rho = \rho$ whenever $supp(\pi) \cap \supp(\rho) = \emptyset$.
 The support of a valuation  is the union of the supports of the image of elements in its domain.

\begin{definition}[NML Pattern Semantics]
\label{def:NML-pattern-sem}
The meaning (set of matching elements) of a pattern $\phi$ for a given valuation $\rho$ 
is defined as shown in Figure~\ref{fig:semantics}.
\end{definition}
\begin{figure}[tb]
\begin{eqnarray*}
\semantics{\rho}{x:\sort} &=& \{\rho(x)\} \\ 
\semantics{\rho}{\Aa:\nsort} &=& \{\rho(\Aa)\} \\ 
\semantics{\rho}{\sigma(\phi_1,\ldots,\phi_n)} &=& \overline{\sigma_M}(\semantics{\rho}{\phi_1},\ldots,\semantics{\rho}{\phi_n})\\
\semantics{\rho}{\phi_1 \wedge \phi_2} &=& \semantics{\rho}{\phi_1} \cap \semantics{\rho}{\phi_2}\\
\semantics{\rho}{\neg\phi} &=& M_\sort - \semantics{\rho}{\phi}\\
\semantics{\rho}{\exists x:\sort. \phi}&=& \bigcup_{a\in M_\sort} \semantics{\rho[a/x]}{\phi}\\
  \semantics{\rho}{\new \Aa\colon \nsort. \phi} &=& \hspace{-1mm}\bigcup_{a \in \mathbb{A}_{\nsort}-supp(\rho)}\hspace{-.6cm}\{v\mid v \in \semantics{\rho[a/\Aa]}{\phi} \wedge a \not\in supp(v)\}
\end{eqnarray*}
\caption{Semantics of NML}\label{fig:semantics}
\end{figure}

A pattern $\new \Aa\colon \nsort. \phi_{\sort}$  matches those elements that match $\phi_\sort$ where $\Aa$ is instantiated with an atom $a\in \bbA_\nsort$ fresh for  $\rho$, and which do not have $a$ in their support. 
It could be written equivalently as:
\(\semantics{\rho}{\new \Aa\colon \nsort. \phi} = \{v\in \semantics{\rho[a/\Aa]}{\phi} \mid a \in  (\mathbb{A}_{\nsort}-supp(\rho)) - supp(v)\}
\).

Note that in the interpretation of the $\new$ pattern, the valuation $\rho$ is extended by assigning to $\Aa$ any element $a$ of $\mathbb{A}_{\nsort}$ that is fresh  (not in the support of the interpretations of free names and free variables  or instances of $\phi$). 
Compared to nominal logic, this differs because $\new$ is defined as a predicate symbol by an axiom scheme, whereas here the $\new$ quantifier can appear in an arbitrary place in a pattern.  
So for example $\exists x \colon \sort. \new \Aa\colon \nsort. \langle [\Aa]x,x\rangle$ is a pattern that characterizes pairs of abstractions and elements of $M_\sort$ where the abstracted name is fresh for the element.  Such a pattern has no direct equivalent in nominal logic.  This usage of $\new$ amounts to a form of local fresh name generation, sometimes denoted $\nu$ in other settings (e.g. Pitts' $\lambda\alpha\nu$-calculus~\citeyear{pitts11jfp}) to distinguish it from the fresh name quantifier occurring as a formula.  In NML there is no distinction between formulas and terms, so we use $\new$ in both places; also, $\new$ behaves differently than $\nu$ in $\lambda\alpha\nu$, for example $\nu a. a$ denotes an anonymous name rather than the empty set.

Before studying properties of NML we provide some simple examples. 

\begin{example}
\begin{itemize}
 \item Suppose $\phi$ does not contain $\Aa$ as a free name; then $\new \Aa.\phi $ is equivalent to $\phi$.
    
   This  result follows directly from the semantics of $\new$: if $\Aa$ is not free in $\phi$ then  $\semantics{\rho[a/\Aa]}{-}$ and $\semantics{\rho}{-}$ produce the same result. 
    \item $\phi_1 = \new \Aa. \Aa$ is a pattern that matches nothing (its interpretation is the empty set).
    Likewise  $\phi_2 = \new \Aa. \langle \Aa,\Aa\rangle$ 
    and $\phi_3 = \new \Aa. \new \Ab. \langle \Aa,\Ab \rangle$ are also empty.
    
    \item $\phi_4 = \new \Aa. \abs{\Aa}{\Aa}$ matches any abstraction whose body is the abstracted name. Note that $\new \Aa. \abs{\Aa}{\Aa}$ is a closed pattern, whereas $\abs{\Aa}{\Aa}$ has a free name $\Aa$ and can only be interpreted in a valuation that assigns a value to $\Aa$ (and then its denotation is the same as that of $\new \Aa. \abs{\Aa}{\Aa}$).
    
    \item $\phi_5 = \new \Aa. \Aa = \Aa$ is a valid predicate (equivalent to $\top$)
    
    \item $\phi_6 = \exists x. \new \Aa. \Aa = x$ is false/empty since whatever $x$ is, $\Aa$ must be chosen fresh for (and in particular distinct from) it.  However, $\new \Aa. \exists x. \Aa = x$ is true since we may choose $x= \Aa$.
    
    \item 
    $\phi_7 = \new \Aa. \abs{\Aa}{\phi}$ is a pattern that matches any abstraction where the abstracted atom is not in the support of (the interpretations of) free variables and free names of $\phi$ (aside from $\Aa$ itself) and the body is an instance of ${\phi}$ that does not contain the abstracted atom in its support.  
   
    To illustrate this kind of pattern, consider three possible rules representing eta-equivalence for the lambda-calculus:
    \[\begin{array}{rcl} 
    x\colon Exp &=& lam([\Aa]app(x,var(\Aa))) \\
    x\colon Exp &=& lam(\exists a. [a]app(x,var(a))) \\
    x\colon Exp &=& lam(\new \Aa. [\Aa]app(x,var(\Aa))) \end{array}\]
    In the first rule, only the specific name $\Aa$ can be used, and if it is also present in the support of $x$ then the result will be wrong (e.g. if $x = var(\Aa)$ then $\Aa$ is captured.)  The second rule allows eta-expanding using any name $a$, but still permits inadvertent capture of names in $x$.  The third rule correctly permits eta-expanding using any sufficiently fresh name $\Aa$ while ruling out variable capture.  The right-hand side of the third rule is also equivalent to $\exists a. a \fresh x \wedge lam([a]app(x,var(a)))$ and $\forall a. a \fresh x \Rightarrow lam([a]app(x,var(a)))$ (Prop.~\ref{prop:new-derived}).
    
    \end{itemize}
\end{example}

\subsubsection{Properties}\label{sec:properties}

In this section we show that the semantics is well defined in standard models, that is, the semantic interpretation $\rho \mapsto \semantics{\rho}{\phi}$ is an equivariant map from $\phi$-valuations to $M_\sort$ for every $\phi_\sort$ (Theorem~\ref{lem:equivariant-sem} and Cor.~\ref{cor:equiv}) and the expected equivalences of patterns are satisfied (Prop.~\ref{prop:abs-wedge-not}, \ref{prop:new-reordered} -   \ref{prop:new-abs-conc}).

\begin{definition}[Semantic equivalence of patterns in standard models]
    We say  $\phi$ and $\psi$ are equivalent patterns, written $\phi \iff \psi$, if for all suitable $\rho$ (whose domain includes the free variables and atoms of $\phi$ and $\psi$),  $\semantics{\rho}{\phi} = \semantics{\rho}{\psi}$.
     Alternatively, equivalence of patterns can be defined using equality: $\phi$ and $\psi$ are equivalent if for all suitable $\rho$ we have 
    $\semantics{\rho}{\phi = \psi} =  \top$.
\end{definition}

Recall that in standard models of NML, abstractions are interpreted as equivalence classes, therefore the alpha-equivalence relation generated by abstractions is already built into the semantics.  Moreover, the following properties of abstraction patterns follow directly from the definition of NML pattern semantics:

\begin{proposition}
\label{prop:abs-wedge-not}
\begin{enumerate}
    \item $[a](\phi_1 \wedge \phi_2) \iff [a]\phi_1 \wedge [a]\phi_2$
    \item $[a](\neg \phi) \iff \abs{a}{\top} \wedge \neg[a]\phi$ 
    
    \item \label{abs-conc}
    $(\abs{\Aa}{\phi}) @ \Ab \iff (\Ab \fresh \abs{\Aa}{\phi}) \wedge \swap{\Aa}{\Ab}{\phi}$
\end{enumerate}
\end{proposition}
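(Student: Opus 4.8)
The plan is to prove all three equivalences semantically, by fixing an arbitrary (suitable) valuation $\rho$, unfolding the clauses of Figure~\ref{fig:semantics} together with the standard interpretations of the abstraction, concretion, swapping and freshness symbols from Definition~\ref{def:NML-model}, and checking that the two sides denote the same subset of the carrier. Throughout I abbreviate $a = \rho(\Aa)$, $b = \rho(\Ab)$ and $V = \semantics{\rho}{\phi}$, and I write $A_a(V) = \{\abs{a}{v} \mid v \in V\}$ for the image of a set $V$ under abstraction at the fixed name $a$; note that $\semantics{\rho}{\abs{\Aa}{\phi}} = A_a(V)$, since the abstraction symbol is interpreted by the $\equiv_\alpha$-quotienting map and its pointwise extension over the singleton $\{a\}$ is exactly $A_a$. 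The single fact underlying parts (1) and (2) is that $A_a$ is \emph{injective}: by axiom $(A1)$ (or directly from the definition of $\equiv_\alpha$ in Section~\ref{sec:nl}), $\abs{a}{v} = \abs{a}{v'}$ forces $v = v'$.

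For part (1), unfolding $\wedge$ reduces the claim to the set identity $A_a(V_1 \cap V_2) = A_a(V_1) \cap A_a(V_2)$, where $V_i = \semantics{\rho}{\phi_i}$. The inclusion from left to right is immediate, and the reverse inclusion is exactly where injectivity of $A_a$ is used: an element of the right-hand side is both $\abs{a}{v}$ and $\abs{a}{v'}$ with $v \in V_1$, $v' \in V_2$, and injectivity gives $v = v' \in V_1 \cap V_2$. So this case is essentially immediate.

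For part (2), unfolding $\neg$, $\wedge$ and $\top$ reduces the claim to $A_a(M_\sort - V) = A_a(M_\sort) \cap (M_{\abs{\nsort}{\sort}} - A_a(V))$. Besides injectivity, this needs the characterization $A_a(M_\sort) = \{E \mid a \fresh E\}$, i.e. an abstraction class lies in the image of $A_a$ exactly when $a$ is fresh for it: the forward direction uses $\supp(\abs{a}{v}) = \supp(v) - \{a\}$, and the backward direction uses axiom $(A2)$ to write $E = \abs{b}{y}$ and then renames the bound name to $a$ (legal since $a \fresh E$). Given this, a short double inclusion, again leaning on injectivity, closes the case; the conjunct $\abs{a}{\top}$ is precisely what restricts the complement to the image of $A_a$, so that abstractions not of the form $\abs{a}{-}$ are excluded on both sides.

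For part (3), I would unfold the standard concretion interpretation (item~6 of Definition~\ref{def:NML-model}) and its pointwise extension to obtain $\semantics{\rho}{(\abs{\Aa}{\phi})@\Ab} = \{\swap{a}{b}{v} \mid v \in V,\ b \fresh \abs{a}{v}\}$, and compare it with the right-hand side, whose two conjuncts unfold to the coerced freshness predicate $\coerce{(\Ab \fresh \abs{\Aa}{\phi})}$ and the swap set $\{\swap{a}{b}{v} \mid v \in V\}$. The main obstacle is precisely the freshness side-condition: I must line up the \emph{per-element} definedness of concretion (which retains $\swap{a}{b}{v}$ only when $b \fresh \abs{a}{v}$) with the freshness predicate on the right. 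This is the delicate step, because as a predicate $\Ab \fresh \abs{\Aa}{\phi}$ is evaluated existentially over the whole denotation $A_a(V)$; the identity goes through transparently when $\phi$ is single-valued, in which case it is exactly the reformulated concretion axiom $(A1')$ already validated in Proposition~\ref{prop:abstraction}, and the functional case is the one I would verify in full before addressing how the pointwise extension behaves for general $\phi$. I therefore expect part (3), and specifically this matching of the freshness condition, to be the crux of the proof, whereas parts (1) and (2) are routine consequences of injectivity of $A_a$.
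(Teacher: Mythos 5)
Your handling of parts (1) and (2) is correct and coincides with the paper's own proof: the paper unfolds the semantics under a fixed valuation and uses, without comment, exactly the two set identities you isolate, namely $A_a(V_1\cap V_2)=A_a(V_1)\cap A_a(V_2)$ and $A_a(M_\sort-V)=A_a(M_\sort)-A_a(V)$ (writing, as you do, $A_a(V)=\{\abs{a}{v}\mid v\in V\}$). Both hold precisely because abstraction at a fixed atom is injective, a fact the paper never states, so making it explicit is a small but genuine improvement. Your extra appeal in part (2) to the characterization of the image $A_a(M_\sort)$ as the abstractions for which $a$ is fresh is not actually needed --- injectivity alone gives both inclusions --- but it is true and harmless.

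For part (3), the obstacle you flagged is not merely the crux: it is a genuine flaw, and you should not attempt to ``address how the pointwise extension behaves for general $\phi$,'' because the stated equivalence is false for general $\phi$, and the paper's proof is invalid at exactly the step you identified. Write $a'=\rho(\Aa)$, $b'=\rho(\Ab)$ (so $a'\neq b'$ by injectivity of valuations on names) and $V=\semantics{\rho}{\phi}$. The left-hand side is evaluated \emph{per element}: concretion of $\abs{a'}{w}$ at $b'$ is $\swap{a'}{b'}{w}$ when $b'\fresh w$ and undefined otherwise, so the pointwise extension yields
\[
\semantics{\rho}{(\abs{\Aa}{\phi})@\Ab} \;=\; \{\swap{a'}{b'}{w} \mid w\in V,\ b'\fresh w\}.
\]
The right-hand side instead evaluates the freshness predicate \emph{existentially over the whole denotation} and then coerces: it denotes the full swap set $\{\swap{a'}{b'}{w}\mid w\in V\}$ if \emph{some} $w_0\in V$ has $b'\fresh\abs{a'}{w_0}$, and $\emptyset$ otherwise. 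The paper's calculation silently equates these in the step that rewrites $\{v \mid \exists w\in V.\ b'\fresh w \wedge v=\swap{a'}{b'}{w}\}$ as an intersection of a global freshness condition with the full swap set; the inclusion from left to right holds, but the converse fails whenever $V$ is ``mixed.'' Concretely, take $\phi = x\vee y$ with $\rho(x)=v_1$, $\rho(y)=v_2$, $b'\fresh v_1$ and $b'\in\supp(v_2)$: then $\abs{a'}{v_2}$ has no concretion at $b'$, so the left side denotes $\{\swap{a'}{b'}{v_1}\}$, whereas $\Ab\fresh\abs{\Aa}{\phi}$ is true (witnessed by $\abs{a'}{v_1}$) and the right side denotes $\{\swap{a'}{b'}{v_1},\swap{a'}{b'}{v_2}\}$. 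So your proposal is as complete as the statement permits: the functional case you verify via $(A1')$ and Prop.~\ref{prop:abstraction} (singleton $V$, or more generally $V$ whose elements all have, or all lack, $b'$ fresh) is the largest fragment on which part (3) actually holds, and a correct general statement would need a functionality hypothesis on $\phi$ (or a per-instance reading of the freshness conjunct); as stated, part (3) --- and its downstream use in Prop.~\ref{prop:new-abs-conc} --- requires this repair.
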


\begin{theorem}[Equivariant Semantics]
\label{lem:equivariant-sem}
If $v \in \semantics{\rho}{\phi}$ then $\swap{a}{a'}{v} \in \semantics{\swap{a}{a'}{\rho}}{\phi}$.  I.e., for all $\phi$, $\swap{a}{a'}{\semantics{\rho}{\phi}}= \semantics{\swap{a}{a'}{\rho}}{\phi}$.
\end{theorem}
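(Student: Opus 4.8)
The plan is to prove the more general statement that $\pi \cdot \semantics{\rho}{\phi} = \semantics{\pi\cdot\rho}{\phi}$ for every well-sorted permutation $\pi$ and every $\phi$-valuation $\rho$, by structural induction on $\phi$; the theorem is then the special case $\pi = (a~a')$. Since each $\pi$ acts as a bijection on the nominal set $M_\sort$, the set equation is equivalent to the pointwise implication ``$v \in \semantics{\rho}{\phi}$ implies $\pi \cdot v \in \semantics{\pi\cdot\rho}{\phi}$'' (instantiating it at $\pi$ and at $\pi^{-1}$ recovers both inclusions), so it suffices to establish either form. Throughout I will use two standard facts about the model: that the group action commutes with the Boolean set operations, so $\pi\cdot(X\cap Y) = (\pi\cdot X)\cap(\pi\cdot Y)$ and $\pi\cdot(M_\sort - X) = M_\sort - (\pi\cdot X)$ (the latter using $\pi\cdot M_\sort = M_\sort$), and that support is equivariant, i.e. $\supp(\pi\cdot v) = \pi\cdot\supp(v)$. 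From the definition of the action on valuations I also use $(\pi\cdot\rho)(e) = \pi\cdot\rho(e)$ and its consequence $\pi\cdot(\rho[m/x]) = (\pi\cdot\rho)[\pi\cdot m/x]$ (and likewise for a name in place of $x$).

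The base cases are immediate from the definition of $\pi\cdot\rho$: for $x\colon\sort$ we have $\pi\cdot\semantics{\rho}{x} = \{\pi\cdot\rho(x)\} = \{(\pi\cdot\rho)(x)\} = \semantics{\pi\cdot\rho}{x}$, and symmetrically for a name $\Aa\colon\nsort$. The conjunction and negation cases follow by pushing $\pi$ through $\cap$ and through complement using the two Boolean facts above and then applying the induction hypothesis. For a symbol pattern $\sigma(\phi_1,\ldots,\phi_n)$ I first observe that equivariance of $\sigma_M$ (built into Definition~\ref{def:NML-model}) lifts to its pointwise extension, $\pi\cdot\overline{\sigma_M}(X_1,\ldots,X_n) = \overline{\sigma_M}(\pi\cdot X_1,\ldots,\pi\cdot X_n)$ --- a short calculation re-indexing the defining union by $x_i \mapsto \pi\cdot x_i$ --- and then apply the induction hypothesis to each argument. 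The existential case $\exists x\colon\sort.\phi$ amounts to re-indexing the union $\bigcup_{m\in M_\sort}$: by the induction hypothesis $\pi\cdot\semantics{\rho[m/x]}{\phi} = \semantics{(\pi\cdot\rho)[\pi\cdot m/x]}{\phi}$, and since $\pi$ is a bijection of $M_\sort$, as $m$ ranges over $M_\sort$ so does $m' = \pi\cdot m$, yielding exactly $\semantics{\pi\cdot\rho}{\exists x.\phi}$.

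The main obstacle is the $\new$ case, because its defining union carries two freshness side-conditions that must be shown to be preserved under $\pi$. Starting from $\pi\cdot\semantics{\rho}{\new\Aa\colon\nsort.\phi} = \bigcup_{b\in\mathbb{A}_\nsort - \supp(\rho)}\{\pi\cdot v \mid v\in\semantics{\rho[b/\Aa]}{\phi} \text{ and } b\notin\supp(v)\}$, I re-index by $b' = \pi(b)$ and check three transfers: the index condition $b\in\mathbb{A}_\nsort - \supp(\rho)$ becomes $b'\in\mathbb{A}_\nsort - \supp(\pi\cdot\rho)$ (using that $\pi$ respects sorts, hence preserves $\mathbb{A}_\nsort$, together with $\supp(\pi\cdot\rho) = \pi\cdot\supp(\rho)$, which follows from equivariance of support applied to each image in the domain of $\rho$); the membership $v\in\semantics{\rho[b/\Aa]}{\phi}$ becomes $\pi\cdot v\in\semantics{(\pi\cdot\rho)[b'/\Aa]}{\phi}$ via the induction hypothesis and $\pi\cdot(\rho[b/\Aa]) = (\pi\cdot\rho)[b'/\Aa]$; and the body condition $b\notin\supp(v)$ becomes $b' = \pi(b)\notin\pi\cdot\supp(v) = \supp(\pi\cdot v)$ by equivariance of support. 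Since $b\mapsto\pi(b)$ is a bijection of $\mathbb{A}_\nsort - \supp(\rho)$ onto $\mathbb{A}_\nsort - \supp(\pi\cdot\rho)$, collecting these gives precisely $\semantics{\pi\cdot\rho}{\new\Aa.\phi}$, completing the induction. The only delicate point is tracking the support of the valuation correctly, which is exactly why I phrase the claim in terms of $\pi\cdot\rho$ and prove it for a general permutation rather than reasoning with a single swapping in isolation.
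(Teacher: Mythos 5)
Your proof is correct and follows essentially the same route as the paper's: structural induction on $\phi$, pushing the permutation through each pattern construct, re-indexing the unions in the $\exists$ and $\new$ cases, and using equivariance of the symbol interpretations and of the support operation (including $\supp(\pi\cdot\rho)=\pi\cdot\supp(\rho)$ for valuations). The only difference is that you prove the claim for an arbitrary sort-respecting permutation $\pi$ rather than a single swapping $(a~a')$, which is a harmless generalization: the paper's argument for swappings is step-for-step the same, and the swapping case both implies and is implied by your version since swappings generate the group.
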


\begin{corollary}
\label{cor:equiv}
If $v \in \semantics{\rho}{\phi_\sort}$, $\rho(\Aa) = a$ and $\rho(\Aa')= a'$ then $\swap{a}{a'}{v} \in \semantics{\rho}{\swap{\Aa}{\Aa'}{\phi_\sort}}$ (because the swapping symbol is interpreted by the swapping function in the nominal set $M_\sort$).

In particular, for any pattern $\phi_\sort$, if  $a, a'$ are fresh for $v$ 
then $v\in \semantics{\rho}{\phi_\sort}$ if and only if $v\in \semantics{\rho}{\swap{\Aa}{\Aa'}\phi_\sort}$.
\end{corollary}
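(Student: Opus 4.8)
The plan is to reduce the corollary to two elementary facts about swappings in nominal sets, namely involutivity $\swap{a}{a'}{\swap{a}{a'}{v}} = v$ and the fact that atoms fresh for an element leave it fixed, $a,a' \notin \supp(v) \Rightarrow \swap{a}{a'}{v} = v$ (the semantic content of axiom $(F1)$). The only genuine work is bookkeeping: distinguishing the object-level swapping symbol applied to a \emph{pattern}, $\swap{\Aa}{\Aa'}{\phi_\sort}$, from the meta-level swap acting on a \emph{denotation}, $\swap{a}{a'}{\semantics{\rho}{\phi_\sort}}$, and checking they agree.

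First I would establish the key identity $\semantics{\rho}{\swap{\Aa}{\Aa'}{\phi_\sort}} = \swap{a}{a'}{\semantics{\rho}{\phi_\sort}}$ whenever $\rho(\Aa) = a$ and $\rho(\Aa') = a'$. This is immediate from the pattern semantics in Figure~\ref{fig:semantics}: names are single-valued, so $\semantics{\rho}{\Aa} = \{a\}$ and $\semantics{\rho}{\Aa'} = \{a'\}$ are singletons, and since the swapping symbol is interpreted by the genuine swapping function on $M_\sort$, its pointwise extension over the three arguments collapses to $\{\swap{a}{a'}{w} \mid w \in \semantics{\rho}{\phi_\sort}\} = \swap{a}{a'}{\semantics{\rho}{\phi_\sort}}$. (By Theorem~\ref{lem:equivariant-sem} this set also equals $\semantics{\swap{a}{a'}{\rho}}{\phi_\sort}$, which gives the stated link with equivariance, though it is not needed for the argument.) The first claim is now direct: if $v \in \semantics{\rho}{\phi_\sort}$ then $\swap{a}{a'}{v}$ lies in $\swap{a}{a'}{\semantics{\rho}{\phi_\sort}} = \semantics{\rho}{\swap{\Aa}{\Aa'}{\phi_\sort}}$.

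For the ``in particular'' biconditional, assume $a,a'$ are fresh for $v$, so that $\swap{a}{a'}{v} = v$. For the forward direction, if $v \in \semantics{\rho}{\phi_\sort}$ then the first claim gives $v = \swap{a}{a'}{v} \in \semantics{\rho}{\swap{\Aa}{\Aa'}{\phi_\sort}}$. For the converse, apply the key identity to the pattern $\swap{\Aa}{\Aa'}{\phi_\sort}$ and use involutivity to get $\semantics{\rho}{\swap{\Aa}{\Aa'}{\swap{\Aa}{\Aa'}{\phi_\sort}}} = \semantics{\rho}{\phi_\sort}$; hence $v \in \semantics{\rho}{\swap{\Aa}{\Aa'}{\phi_\sort}}$ yields $\swap{a}{a'}{v} \in \semantics{\rho}{\phi_\sort}$, and again $v = \swap{a}{a'}{v} \in \semantics{\rho}{\phi_\sort}$.

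There is no real obstacle here; the corollary is essentially a one-step unfolding of Theorem~\ref{lem:equivariant-sem} specialised to the case where the swapped atoms come from names, supplemented by the cancellation $\swap{a}{a'}{v} = v$. The only point requiring care is to confirm that $\swap{\Aa}{\Aa'}{\phi_\sort}$ is interpreted by applying the swapping function to the whole denotation of $\phi_\sort$ (and not by modifying the valuation), since it is exactly this that lets the freshness hypothesis $a,a' \notin \supp(v)$ discharge the swap on $v$ in both directions of the biconditional.
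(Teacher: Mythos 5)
Your proof is correct and matches the paper's intended argument: the paper gives no separate proof of this corollary, justifying it only by the parenthetical remark that the swapping symbol is interpreted by the genuine swapping function, which is precisely your key identity $\semantics{\rho}{\swap{\Aa}{\Aa'}{\phi_\sort}} = \swap{a}{a'}{\semantics{\rho}{\phi_\sort}}$ (obtained from the singleton denotation of names and the pointwise extension of symbols). Your handling of the biconditional via involutivity and the freshness cancellation $\swap{a}{a'}{v}=v$ simply fills in the routine bookkeeping the paper leaves implicit, including the correct observation that Theorem~\ref{lem:equivariant-sem} itself is not strictly needed for this step.
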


Since $\new \Aa.\phi$ patterns are matched by instances of $\phi$ obtained by assigning to $\Aa$ a fresh atom,
it is easy to see that if $\Aa$ is not free in $\phi$ then  $\new \Aa.\phi$ and $\phi$ have the same instances. Also, the order in which fresh atoms are quantified is not important (Proposition~\ref{prop:new-reordered}).

\begin{proposition}
\label{prop:new-reordered}
\begin{enumerate}
\item 
\label{new-empty}
If $\Aa$ is not free in $\phi$ then $\new \Aa.\phi\iff \phi$.
\item
$\new \Aa\colon \nsort. \new \Ab\colon \nsort'.\phi_\sort \iff \new \Ab\colon \nsort'. \new \Aa\colon \nsort.\phi_\sort$.
\end{enumerate}
\end{proposition}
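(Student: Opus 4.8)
The plan is to prove both equivalences directly from the semantic clause for $\new$ in Figure~\ref{fig:semantics}, unfolding the definition and comparing the two sides as sets; the only real work is careful bookkeeping of supports.

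For part~(1) I would first invoke the routine coincidence property, already used in the preceding examples, that $\semantics{\rho}{\phi}$ depends only on the restriction of $\rho$ to $FV(\phi)\cup FN(\phi)$. Since $\Aa\notin FN(\phi)$ this gives $\semantics{\rho[a/\Aa]}{\phi}=\semantics{\rho}{\phi}$ for every $a$, so the $\new$ clause becomes $\semantics{\rho}{\new\Aa.\phi}=\bigcup_{a\in\mathbb{A}_{\nsort}-\supp(\rho)}\{v\in\semantics{\rho}{\phi}\mid a\notin\supp(v)\}$. The inclusion from left to right is immediate, since every element of the union already lies in $\semantics{\rho}{\phi}$. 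For the converse, given $v\in\semantics{\rho}{\phi}$ I would choose a witness $a\in\mathbb{A}_{\nsort}$ avoiding the finite set $\supp(\rho)\cup\supp(v)$; such an $a$ exists because $\mathbb{A}_{\nsort}$ is infinite while $\supp(\rho)$ is finite (valuations are finitely supported, Def.~\ref{def:valuation}) and $\supp(v)$ is finite ($M_\sort$ is a nominal set). This $a$ witnesses $v$ in the union, giving equality.

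For part~(2) I would unfold the nested $\new$ on each side. On the left, $v\in\semantics{\rho}{\new\Aa.\new\Ab.\phi}$ holds iff there are $a\in\mathbb{A}_{\nsort}-\supp(\rho)$ and $b\in\mathbb{A}_{\nsort'}-\supp(\rho[a/\Aa])$ with $v\in\semantics{\rho[a/\Aa][b/\Ab]}{\phi}$, $a\notin\supp(v)$ and $b\notin\supp(v)$. Here I would use $\supp(\rho[a/\Aa])=\supp(\rho)\cup\{a\}$, which holds because $a\notin\supp(\rho)$ and, by the bound-name convention, $\Aa\notin dom(\rho)$; thus the constraint on $b$ is $b\notin\supp(\rho)$ together with $b\neq a$, the latter being vacuous when $\nsort\neq\nsort'$ since $\mathbb{A}_{\nsort}$ and $\mathbb{A}_{\nsort'}$ are disjoint. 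The symmetric unfolding of the right-hand side yields the condition with the roles of $a$ and $b$ swapped. Since $\Aa\neq\Ab$ we have $\rho[a/\Aa][b/\Ab]=\rho[b/\Ab][a/\Aa]$, so both membership conditions are literally the same conjunction: $a,b$ both outside $\supp(\rho)\cup\supp(v)$, distinct whenever $\nsort=\nsort'$, and $v\in\semantics{\rho[a/\Aa][b/\Ab]}{\phi}$. Hence the two sets coincide.

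I expect the only delicate step to be the support bookkeeping in part~(2): correctly establishing $\supp(\rho[a/\Aa])=\supp(\rho)\cup\{a\}$ and checking that the distinctness requirement $a\neq b$ arising from the nesting is genuinely symmetric---imposed in both orders when $\nsort=\nsort'$ and vacuous in both orders when $\nsort\neq\nsort'$. Once this is pinned down, equality of the two sides follows simply by observing that the defining conditions are symmetric in $a$ and $b$.
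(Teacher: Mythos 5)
Your proposal is correct and follows essentially the same route as the paper's proof: both parts are established by directly unfolding the semantic clause for $\new$ from Figure~\ref{fig:semantics}, using $\semantics{\rho[a/\Aa]}{\phi}=\semantics{\rho}{\phi}$ when $\Aa\notin FN(\phi)$ for part~(1) and the symmetry of the unfolded condition for part~(2). If anything, you are more explicit than the paper, which leaves the finite-support witness argument of part~(1) implicit and justifies the symmetry in part~(2) via an equivariance (swapping) remark, whereas you obtain it directly from $\supp(\rho[a/\Aa])=\supp(\rho)\cup\{a\}$ and the commutation $\rho[a/\Aa][b/\Ab]=\rho[b/\Ab][a/\Aa]$.
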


\begin{proposition}
\label{prop:new-wedge-not}
The $\new$ pattern satisfies the following  properties:
\begin{enumerate}
\item 
\label{new-wedge}
\( \new \Aa\colon \nsort.\phi_\sort \wedge \new \Aa\colon \nsort.\psi_\sort
\iff 
\new \Aa\colon \nsort.(\phi_\sort \wedge \psi_\sort)
\).
In particular, $\new \Aa. (\phi ~\wedge ~\psi) \iff ((\new \Aa.\phi)~ \wedge ~\psi)$ 
if $\Aa$ is not free in $\psi$. 
A similar property holds for $\vee$ patterns:  \( \new \Aa\colon \nsort.\phi_\sort \vee \new \Aa\colon \nsort.\psi_\sort
\iff 
\new \Aa\colon \nsort.(\phi_\sort \vee \psi_\sort)
\).
\item
\( \neg \new \Aa\colon \nsort.\phi \iff \new \Aa\colon \nsort.\neg \phi  \)

\item 
 \( \abs{\Aa}{\new \Ab\colon \nsort.\phi_\sort} \iff \new \Ab\colon \nsort. \abs{\Aa}{\phi_\sort}\). 
 
 \item
 \label{new-sigma}
 $\langle \new \Aa. \phi, \psi \rangle \iff  \new \Aa. \langle \phi,\psi\rangle$ if 
$\Aa$ does not occur free in $\psi$. 

More generally, if 
 $\forall x_1,\ldots,x_n,  \Aa \# \sigma(x_1,...,x_n) \Rightarrow  \Aa \# x_1 \wedge \ldots \wedge \Aa \# x_n$ then
 $\sigma(\new \Aa. \phi_1,\ldots, \new \Aa. \phi_n) \iff  \new \Aa.\sigma(\phi_1,\ldots,\phi_n)$.
\end{enumerate}
\end{proposition}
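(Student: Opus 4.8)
The plan is to argue entirely at the level of the denotational semantics of Figure~\ref{fig:semantics}, reducing every equivalence to a set-theoretic identity and deploying one central lemma throughout. That lemma is the nominal \emph{some/any} principle: if $a_1,a_2\in\bbA_\nsort$ are both fresh for $\rho$ and for $v$, then $v\in\semantics{\rho[a_1/\Aa]}{\phi}$ iff $v\in\semantics{\rho[a_2/\Aa]}{\phi}$. I would prove this first, from the equivariance theorem (Theorem~\ref{lem:equivariant-sem}) and its corollary (Cor.~\ref{cor:equiv}): assuming $v\in\semantics{\rho[a_1/\Aa]}{\phi}$, apply the swap $\swap{a_1}{a_2}{-}$ to both sides, using that $\swap{a_1}{a_2}{v}=v$ (both atoms fresh for $v$) and $\swap{a_1}{a_2}{(\rho[a_1/\Aa])}=\rho[a_2/\Aa]$ (both atoms fresh for $\rho$, and the swap sends $a_1$ to $a_2$). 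Unwinding the $\new$ clause of Figure~\ref{fig:semantics}, this principle says precisely that the existential quantifier over fresh atoms in the meaning of $\new\Aa.\phi$ may equivalently be read as a universal one, which is the engine behind all four parts.

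For part~1 I would compute both sides: $v\in\semantics{\rho}{\new\Aa.\phi\wedge\new\Aa.\psi}$ gives a fresh $a_1$ with $v\in\semantics{\rho[a_1/\Aa]}{\phi}$ and a possibly different fresh $a_2$ with $v\in\semantics{\rho[a_2/\Aa]}{\psi}$, whereas the right-hand side demands a single fresh atom working for both. Choosing any $a$ fresh for $\rho$ and $v$ and applying some/any to each conjunct unifies the two witnesses, giving the nontrivial inclusion; the reverse is immediate. The $\vee$ case needs no unification, since an existential over fresh atoms distributes over union directly. The ``in particular'' case follows by combining the conjunction identity with $\new\Aa.\psi\iff\psi$ when $\Aa\notin FN(\psi)$ (Prop.~\ref{prop:new-reordered}, part~\ref{new-empty}). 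Part~2 is the self-duality that makes $\new$ distinctive: $v\in\semantics{\rho}{\neg\new\Aa.\phi}$ unfolds to ``no fresh $a$ has $v\in\semantics{\rho[a/\Aa]}{\phi}$'', i.e.\ ``every fresh $a$ has $v\notin\semantics{\rho[a/\Aa]}{\phi}$'', while $v\in\semantics{\rho}{\new\Aa.\neg\phi}$ unfolds to ``some fresh $a$ has $v\notin\semantics{\rho[a/\Aa]}{\phi}$''. The some/any principle (applied to the predicate $v\notin\semantics{\rho[a/\Aa]}{\phi}$) collapses this quantifier alternation, and since at least one fresh atom always exists the two sets coincide.

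For part~3 I would fix $a_0=\rho(\Aa)$ and note that both sides are sets of abstractions $\abs{a_0}{v}$, where for a fixed bound name the map $v\mapsto\abs{a_0}{v}$ is injective (by axiom $(A1)$). The left side collects $\abs{a_0}{v}$ for which some $b$ fresh for $\rho,v$ has $v\in\semantics{\rho[b/\Ab]}{\phi}$, while the right side imposes that $b$ be fresh for the abstraction $\abs{a_0}{v}$ rather than for $v$. These conditions reconcile because $\supp(\abs{a_0}{v})=\supp(v)\setminus\{a_0\}$, and any $b$ chosen fresh for $\rho$ automatically satisfies $b\neq a_0$; hence freshness for $\abs{a_0}{v}$ together with freshness for $\rho$ is equivalent to freshness for $v$. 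I would take care that $\Aa\neq\Ab$ so that $\rho[b/\Ab](\Aa)=a_0$ is undisturbed.

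Part~4 is where the extra hypothesis earns its keep. Unfolding the pointwise extension $\overline{\sigma_M}$, membership in the left side means $v\in\sigma_M(u_1,\dots,u_n)$ with each $u_i\in\semantics{\rho}{\new\Aa.\phi_i}$ (so each carries its own fresh witness $a_i$ fresh for $u_i$), while membership in the right side means the \emph{same} fresh atom $a$, also fresh for $v$, works for all arguments. The direction left-to-right needs only some/any: pick a single $a$ fresh for $\rho,v,u_1,\dots,u_n$ and shift each witness to $a$ without touching the $u_i$, after which $v\in\sigma_M(\vec u)$ places $v$ in the right side. The converse is exactly where the assumption $\forall\vec x.\,\Aa\fresh\sigma(\vec x)\Rightarrow\bigwedge_i\Aa\fresh x_i$ is used: from $a$ fresh for the output $v\in\sigma_M(\vec u)$ the hypothesis (read via the ``some instance'' semantics of the freshness relation) yields $a$ fresh for every $u_i$, which is precisely what is needed to conclude $u_i\in\semantics{\rho}{\new\Aa.\phi_i}$. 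The binary tupling case is the instance $\sigma=\langle-,-\rangle$ (whose support is the union of the component supports, so the hypothesis holds), after first rewriting $\psi\iff\new\Aa.\psi$ using $\Aa\notin FN(\psi)$ and Prop.~\ref{prop:new-reordered}. The main obstacle across the whole proof is the bookkeeping of two subtly different freshness side-conditions --- fresh-for-$v$ versus fresh-for-output or fresh-for-abstraction --- and ensuring each inclusion invokes the right tool: some/any (which unifies witnesses) versus the structural hypothesis (which propagates freshness from outputs to inputs). Matching these to the correct direction in parts~3 and~4 is the only genuinely delicate point.
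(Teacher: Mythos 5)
Your proposal is correct and takes essentially the same route as the paper's own proof: both unfold the pattern semantics of Figure~\ref{fig:semantics} and rely on the equivariance theorem (Theorem~\ref{lem:equivariant-sem}) used as a some/any principle to relocate fresh witnesses, which is exactly the paper's device of exchanging the union over fresh atoms with an intersection. The remaining differences are organizational rather than mathematical --- you isolate some/any as an explicit lemma, treat the $\vee$ case and the choice of an atom fresh for the matched value $v$ a bit more carefully, and obtain the tupling case of part~\ref{new-sigma} as an instance of the general $\sigma$ case instead of proving it separately as the paper does.
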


The $\new$ pattern does not commute with abstraction for the same name: $\new \Aa.[\Aa]\Aa \neq [\Aa] \new \Aa.\Aa$ (the first one matches abstractions where the body is the abstracted name, whereas the second does not match anything). 
It does commute if the names are different:  $ \abs{\Aa}{\new \Ab.\phi}$ and $\new \Ab. \abs{\Aa}{\phi}$ are equivalent patterns that match an abstraction where the body refers to some fresh name $\Ab$ different from the abstracted name.  Note that when $\new$ occurs inside an abstraction the $\new$-bound name can always be renamed away from the abstracted term, this means that we can always hoist $\new$ out of the body of an abstraction.
However,  $\abs{ \new \Aa. \phi_{\nsort}}{\psi_\sort} \neq \new \Aa. \abs{\phi_{\nsort}}{\psi_\sort}$  even  if $\Aa$ is not free in $\psi$. For example,  $\abs{\new \Aa. \Aa}{\Ab}$ is matched by nothing, whereas $\new \Aa. \abs{\Aa}{\Ab}$ is matched by $\abs{a}{b}$ for two different $a,b$. Note that the abstraction symbol does not satisfy the condition in item~\ref{new-sigma} of Proposition~\ref{prop:new-wedge-not}.

The $\new$ pattern can be defined using existential (or universal) patterns and the freshness relation symbol as shown in Proposition~\ref{prop:new-derived}. Since the existential and universal patterns bind variables and $\new$ binds names, to define the $\new \Aa.\phi$ using $\exists$ or $\forall$ we need to replace the bound name $\Aa$ by a bound variable $z_\Aa$ which we assume does not occur anywhere else.  Finally, notice that in both existential and universal equivalent formulas, we introduce an existential variable $y$ standing for a matched result of the pattern, and the fresh name $z_\Aa$ must be fresh for this as well, corresponding to the constraint $a\notin supp(v)$ in the semantics of $\new$.  This is critical for the self-duality of $\new$ and to ensure that the universal and existential characterizations are equivalent.

\begin{proposition} 
\label{prop:new-derived}
The $\new$ pattern satisfies the following equivalences, 
where the notation $\phi_\sort(\Aa,\vec{\Ab},\vec{x})$ indicates that $\vec{x}$ are the free variables of $\phi$ and $\vec{\Ab}$  (and  possibly also $\Aa$) are the free names of $\phi$,  
$z_\Aa$ is a variable associated with the name $\Aa$, which we assume is not used anywhere else, and 
$\phi_\sort\{\Aa \mapsto z_{\Aa}\}$ is the result of replacing the  name $\Aa$ with $z_{\Aa}$ in $\phi_\sort$.
\begin{enumerate}
    \item 
$\new \Aa\colon \nsort. \phi_\sort(\Aa,\vec{\Ab},\vec{x})  \Leftrightarrow$\\
$\exists z_\Aa\colon \nsort. ((\exists y\colon \sort. y \wedge z_\Aa \coerce{\#}_s (\vec{\Ab},\vec{x},y)) \wedge \phi_\sort\{\Aa \mapsto z_{\Aa}\} 
)$

\item
$\new \Aa\colon \nsort. \phi_\sort(\Aa,\vec{\Ab},\vec{x}) \Leftrightarrow$\\
$\forall z_\Aa\colon \nsort. ((\exists y\colon \sort. y \wedge z_\Aa \coerce{\#}_s (\vec{\Ab},\vec{x},y)) \Rightarrow \phi_\sort\{\Aa \mapsto z_\Aa\} 
)$

\end{enumerate}

\end{proposition}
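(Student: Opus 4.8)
The plan is to prove each equivalence semantically: fix a standard model and a valuation $\rho$, unfold the interpretation of both sides according to Figure~\ref{fig:semantics}, and show the two resulting subsets of $M_\sort$ coincide. The real work is concentrated in a ``some/any'' lemma, driven by the Equivariant Semantics theorem (Theorem~\ref{lem:equivariant-sem}), that identifies the existential and universal characterizations with the $\new$ semantics. Before the main computation I would record two preliminaries. The first is a renaming lemma, proved by a routine induction on $\phi$: for every $a\in\mathbb{A}_\nsort$, $\semantics{\rho[a/z_\Aa]}{\phi\{\Aa\mapsto z_\Aa\}}=\semantics{\rho[a/\Aa]}{\phi}$, i.e.\ replacing the name $\Aa$ by the fresh variable $z_\Aa$ and then interpreting $z_\Aa$ as $a$ agrees with interpreting $\Aa$ as $a$ directly. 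The second just unfolds the coerced freshness predicate: $\semantics{\rho'}{z_\Aa \coerce{\fresh} (\vec{\Ab},\vec{x},y)}$ is $M_\sort$ when $\rho'(z_\Aa)\notin\supp(\rho'(\vec{\Ab}),\rho'(\vec{x}),\rho'(y))$ and $\emptyset$ otherwise.

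Next I would evaluate the freshness gadget common to both right-hand sides, $G \defeq \exists y\colon\sort.\, y \wedge z_\Aa \coerce{\fresh} (\vec{\Ab},\vec{x},y)$. Computing $\semantics{\rho[a/z_\Aa]}{G}$ via the preliminaries, the matching $w$ are exactly those with $a\notin\supp(\rho(\vec{\Ab}),\rho(\vec{x}))\cup\supp(w)$, so $\semantics{\rho[a/z_\Aa]}{G}=\{w\mid a\notin\supp(w)\}$ when $a$ is fresh for $\rho(\vec{\Ab},\vec{x})$ and $\semantics{\rho[a/z_\Aa]}{G}=\emptyset$ otherwise. Conjoining $G$ with $\phi\{\Aa\mapsto z_\Aa\}$ and applying the renaming lemma therefore yields $\{v\in\semantics{\rho[a/\Aa]}{\phi}\mid a\notin\supp(v)\}$ for fresh $a$ and $\emptyset$ for non-fresh $a$ --- precisely the set appearing under the union in the $\new$ semantics. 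Taking $\bigcup_{a\in\mathbb{A}_\nsort}$ for item (1) gives $\semantics{\rho}{\exists z_\Aa.\,(G\wedge\phi\{\Aa\mapsto z_\Aa\})}=\bigcup_{a \text{ fresh for }\rho(\vec{\Ab},\vec{x})}\{v\in\semantics{\rho[a/\Aa]}{\phi}\mid a\notin\supp(v)\}$, the ``there exists a fresh $a$'' reading.

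For the universal case (2) I would use $\forall z_\Aa.\chi \equiv \neg\exists z_\Aa.\neg\chi$ together with the matching-logic reading of implication. The non-fresh $a$ make $G$ empty, so $G\Rightarrow\phi\{\Aa\mapsto z_\Aa\}$ evaluates to all of $M_\sort$ and drops out of the intersection, leaving $\semantics{\rho}{\forall z_\Aa.\,(G\Rightarrow\phi\{\Aa\mapsto z_\Aa\})}=\bigcap_{a \text{ fresh}}\bigl(\{v\mid a\in\supp(v)\}\cup\semantics{\rho[a/\Aa]}{\phi}\bigr)$; that is, $v$ matches iff for every fresh $a$ with $a\notin\supp(v)$ we have $v\in\semantics{\rho[a/\Aa]}{\phi}$ --- the ``for all fresh $a$'' reading.

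The main step is then to show both readings equal $\semantics{\rho}{\new\Aa.\phi}$. The key lemma is that, for a fixed $v$, the condition ``$a\notin\supp(v)$ and $v\in\semantics{\rho[a/\Aa]}{\phi}$'' holds for \emph{some} fresh $a$ iff it holds for \emph{all} fresh $a$ (with $a\notin\supp(v)$). The reverse implication is immediate since only finitely many atoms lie in the relevant supports, so a fresh witness exists. For the forward implication, given a witness $a$ and any other $a'$ with both fresh for $\rho(\vec{\Ab},\vec{x})$ and for $v$, I apply the permutation $\pi=(a~a')$: by Theorem~\ref{lem:equivariant-sem} and Corollary~\ref{cor:equiv}, $\swap{a}{a'}{\semantics{\rho[a/\Aa]}{\phi}}=\semantics{\rho[a'/\Aa]}{\phi}$, using that $\pi\cdot\rho$ agrees with $\rho$ on the free names and variables $\vec{\Ab},\vec{x}$, while $\swap{a}{a'}{v}=v$ because $a,a'\notin\supp(v)$; hence $v\in\semantics{\rho[a'/\Aa]}{\phi}$. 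This identifies items (1) and (2) with the common set $N$, and the same swapping argument shows that the choice of $a\in\mathbb{A}_\nsort-\supp(\rho)$ in the $\new$ semantics produces the same $N$, so the harmless gap between $\supp(\rho)$ and the support of the free names and variables does not matter. The hard part will be this some/any equivalence and its support bookkeeping: verifying that $\pi\cdot\rho$ acts trivially on $\vec{\Ab},\vec{x}$, and keeping track of the auxiliary variable $y$, whose freshness requirement $z_\Aa\coerce{\fresh}y$ is exactly what enforces $a\notin\supp(v)$ and makes the existential and universal forms self-dual.
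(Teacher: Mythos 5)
Your proposal is correct and takes essentially the same route as the paper's proof: both unfold the pattern semantics, use Theorem~\ref{lem:equivariant-sem} to swap one fresh atom for another (which handles both the some/any transfer between items (1) and (2) and the gap between $\supp(\rho)$ and the support of $\rho(\vec{\Ab}),\rho(\vec{x})$), and use finiteness of supports to produce fresh witnesses, with the auxiliary variable $y$ enforcing $a\notin\supp(v)$ in both accounts. The only differences are organizational: you compute the three denotations explicitly and identify them via a single some/any lemma (making explicit the renaming fact $\semantics{\rho[a/z_\Aa]}{\phi\{\Aa\mapsto z_\Aa\}}=\semantics{\rho[a/\Aa]}{\phi}$, which the paper leaves implicit), whereas the paper argues by element-chasing in each direction and reduces the backward direction of item (2) to item (1).
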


The following property links abstraction, concretion and $\new$, in a similar way as axiom $(A1')$ in NLML.
\begin{proposition} 
\label{prop:new-abs-conc}
$(\new \Aa. \abs{\Aa}{\phi}) @ \Ab  = \new \Aa. \swap{\Aa}{\Ab}{\phi}$.
\end{proposition}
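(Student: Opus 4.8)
The plan is to show that the two patterns have the same denotation under every valuation $\rho$, i.e. $\semantics{\rho}{(\new \Aa. \abs{\Aa}{\phi}) @ \Ab} = \semantics{\rho}{\new \Aa. \swap{\Aa}{\Ab}{\phi}}$; since the equality predicate $=$ evaluates to $\top$ exactly when the two argument denotations coincide, this is what the stated identity asserts. The decisive structural observation, which I would record first, is that $\Ab$ occurs \emph{free} on both sides, so its value $b = \rho(\Ab)$ lies in $\supp(\rho)$, whereas the $\new$-bound name $\Aa$ ranges only over atoms $a \in \bbA_\nsort - \supp(\rho)$; hence $a \neq b$ throughout the calculation. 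This inequality is precisely what makes the two sides agree.

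For the left-hand side I would unfold the semantics from the outside in. Unfolding $\new$ and then the abstraction symbol gives $\semantics{\rho}{\new \Aa. \abs{\Aa}{\phi}} = \bigcup_{a \in \bbA_\nsort - \supp(\rho)}\{\abs{a}{v} \mid v \in \semantics{\rho[a/\Aa]}{\phi}\}$, where the side-condition $a \notin \supp(\abs{a}{v})$ from the $\new$ clause is vacuous because $\supp(\abs{a}{v}) = \supp(v) - \{a\}$. Applying concretion at $b$ pointwise, each $\abs{a}{v}$ contributes $\swap{a}{b}{v}$ provided $b \fresh \abs{a}{v}$, i.e. $b \notin \supp(v) - \{a\}$, and contributes nothing otherwise. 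Since $a \neq b$, this freshness condition collapses to $b \notin \supp(v)$, so the left-hand side equals $\{\swap{a}{b}{v} \mid a \in \bbA_\nsort - \supp(\rho),\ v \in \semantics{\rho[a/\Aa]}{\phi},\ b \notin \supp(v)\}$.

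For the right-hand side I would unfold $\new$ and then the swapping symbol, using $\rho[a/\Aa](\Ab) = b$, to obtain $\semantics{\rho}{\new \Aa. \swap{\Aa}{\Ab}{\phi}} = \bigcup_{a \in \bbA_\nsort - \supp(\rho)}\{\swap{a}{b}{v} \mid v \in \semantics{\rho[a/\Aa]}{\phi},\ a \notin \supp(\swap{a}{b}{v})\}$. The remaining task is to rewrite the side-condition $a \notin \supp(\swap{a}{b}{v})$. By equivariance of support, $\supp(\swap{a}{b}{v}) = \swap{a}{b}{\supp(v)}$, so $a \notin \swap{a}{b}{\supp(v)}$ iff $\swap{a}{b}{a} = b \notin \supp(v)$. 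Thus the right-hand side is the same set computed for the left-hand side, and the identity follows.

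The main obstacle is bookkeeping with supports rather than anything deep. One must (i) discharge the $\new$-clause side-condition on the abstraction via $\supp(\abs{a}{v}) = \supp(v) - \{a\}$; (ii) reduce concretion's precondition $b \fresh \abs{a}{v}$ to $b \notin \supp(v)$, which is valid \emph{only} because $a \neq b$ — and here the freeness of $\Ab$ (so $b \in \supp(\rho)$) against the freshness of $\Aa$ (so $a \notin \supp(\rho)$) is essential; and (iii) transport the $\new$-clause side-condition $a \notin \supp(\swap{a}{b}{v})$ across the swap by equivariance of support to recover the same condition $b \notin \supp(v)$. Were $\Ab$ not free — for instance if one attempted the analogous identity with $\Aa = \Ab$ — the $b = a$ case would survive on the left but not the right, and the equivalence would fail.
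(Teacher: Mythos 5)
Your proof is correct, but it follows a genuinely different route from the paper's. You argue purely semantically: fixing an arbitrary valuation $\rho$ with $b = \rho(\Ab)$, you unfold both sides to the same set $\bigcup_{a \in \bbA_\nsort - \supp(\rho)}\{\swap{a}{b}{v} \mid v \in \semantics{\rho[a/\Aa]}{\phi},\; b \notin \supp(v)\}$, using exactly three nominal-set facts: $\supp(\abs{a}{v}) = \supp(v) \setminus \{a\}$ (which discharges the $\new$ side-condition on the left), equivariance of support (which turns the right-hand side-condition $a \notin \supp(\swap{a}{b}{v})$ into $b \notin \supp(v)$), and the inequality $a \neq b$, correctly justified by $b \in \supp(\rho)$ versus $a \notin \supp(\rho)$ (which collapses concretion's precondition $b \notin \supp(v)\setminus\{a\}$ to $b \notin \supp(v)$). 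The paper instead proceeds calculationally at the level of pattern equivalences: it hoists $\new$ out of the concretion symbol (Prop.~\ref{prop:new-wedge-not}(\ref{new-sigma})), rewrites $(\abs{\Aa}{\phi})@\Ab$ as $(\Ab \fresh \abs{\Aa}{\phi}) \wedge \swap{\Aa}{\Ab}{\phi}$ (Prop.~\ref{prop:abs-wedge-not}(\ref{abs-conc})), distributes $\new$ over $\wedge$ and $\vee$ (Prop.~\ref{prop:new-wedge-not}(\ref{new-wedge})), drops the empty disjunct $\new \Aa.\, \Ab = \Aa$, and only at the very end argues semantically that the leftover conjunct $\new \Aa.\, \Ab \fresh \phi$ can be erased --- and that final argument is precisely your support-transport observation: if $b \in \supp(v)$ for one fresh $a$ then for all such $a$, in which case the $\new$ side-condition also empties the right-hand side. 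So the two proofs turn on the same key fact; yours is more elementary and self-contained (no earlier propositions needed, and every side-condition is accounted for explicitly), while the paper's reuses its derived $\new$-laws and thereby exhibits the calculational style of reasoning NML is meant to support. One small caveat: your closing aside about instantiating $\Ab := \Aa$ is not quite right --- since $\new$ binds $\Aa$, that pattern would be read after $\alpha$-renaming the bound name, so it is again an instance of the proposition rather than a counterexample --- but nothing in your actual proof depends on it.
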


We end this section with some observations relating NL, ML, NLML and NML.  The translation of the  axioms of nominal logic are valid in the standard models of NML: $(S1)$--$(S3)$ are satisfied since we interpret the swapping symbol with the swapping operation in the model, similarly $(E1)$--$(E5)$ are satisfied since the interpretation of  symbols $\sigma$ in the signature is equivariant, $(F1)$--$(F4)$ follow from the interpretation of the freshness operator and the interpretation of sorts as nominal sets (which ensures $(F4)$ holds). Axiom $(Q)$ follows from Prop.~\ref{prop:new-derived}. $(A1)$ and $(A2)$ are consequence of the interpretation of abstraction sorts as abstractions and the interpretation of the abstraction symbol.

Conversely, NML can be translated to NLML, by introducing (suitably freshness-constrained) variables to represent ground names and using Prop.~\ref{prop:new-derived} to eliminate $\new$-quantifiers.  Alternatively, NML can be translated to ordinary nominal logic with ground names (e.g. Cheney's system NLSeq~[\citeyear{cheney16jlc}]) following the same approach taken in translating ML to FOL.  Thus the proof systems already available for ML or NL can be leveraged to reason about NML.  A more appealing approach would be to develop a proof system for NML directly, perhaps buiding on NLSeq.  We leave this issue for future investigation.

\section{Examples}\label{sec:examples}

We now fix a specific model for reasoning about the typed lambda-calculus, with sorts including $Exp$ (expressions), $Ty$ (types), and $Var$ (variables, a name-sort).  These sorts are interpreted as nominal sets $M_{Var}$, $M_{Exp}$, and $M_{Ty}$ satisfying the following equations:
\[\begin{array}{rcl}
    M_{Exp} &=& M_{Var} + (M_{Exp} \times M_{Exp}) + \abs{M_{Var}}{M_{Exp}} \\
    M_{Ty} &=& 1 + M_{Ty} \times M_{Ty} + \cdots \end{array}\]
The three cases for expressions correspond to $var$, $app$ and $lam$ as discussed in Section~\ref{sec:background}.  We assume at least one constant type (e.g. $int$ or $unit$) and assume there is a binary constructor $fn : Ty \times Ty \to Ty$ for function types but leave the exact set of types unspecified otherwise.
It is well-established that the abstraction construction satisfies the required properties (monotonicity, continuity) so that least fixed points of equations involving abstractions exist according to nominal versions of standard fixed-point theorems~\cite{pitts:alpsri}, and the resulting models are finitely-supported. Therefore the initial algebra $M_{Exp}$ is essentially the set of lambda-terms quotiented by alpha-equivalence.  We fix $M_\Lambda$ as the standard model obtained taking $M_{Exp}$ and $M_{Ty}$ as defined above, and in the rest of this section statements are relative to this model.

As indicated at the end of the previous section,  neither NLML nor NML have been implemented as part of an automated system.  The examples and proofs in this section are carried out by hand and have not been mechanically checked.

\subsection{The Lambda Calculus in NLML}\label{sec:lam-nlml}
Recall the signature for lambda-terms and axioms for substitution from Section~\ref{sec:background}, see Example~\ref{ex:lamc}.
Consider the induction principle, which is schematic over a predicate $P\in \Sigma_{ Exp, \sort_1,\ldots,\sort_n;Pred}$:
\begin{arcl}
(\forall a\colon Var. P(var(a),\vec{y})) &\Rightarrow &\\
(\forall t_1\colon Exp,t_2\colon Exp. P(t_1,\vec{y}) \wedge P(t_2,\vec{y}) \Rightarrow P(app(t_1,t_2),\vec{y})) &\Rightarrow& \\
  (\forall a, t.  a \fresh \vec{y} \Rightarrow P(t,\vec{y}) \Rightarrow P(lam(\abs{a}{t})),\vec{y}) &\Rightarrow& \\
  \forall t\colon Exp. P(t,\vec{y})&&
\end{arcl}
This induction principle deserves some explanation.  We make any other parameters of the induction hypothesis (e.g. variables with respect to which any bound names must be chosen fresh) explicit via a parameter list $\vec{y}$.  The cases for variables and application are otherwise standard.  In the case for lambda, the induction step we need to show is that if $P(t,\vec{y})$ holds for some \emph{sufficiently fresh} name $a$ (i.e. for $a \fresh \vec{y})$ then $P(lam(\abs{a}{t}),\vec{y})$ holds.  For more details, see \citet{pitts:alpsri} and \citet{UrbanC:ntih}.

Now suppose we wish to prove the following (standard) property of substitution:
\begin{arcl}
P(x,y,z,y',z')  &\defeq&  y \fresh y',z' \Rightarrow subst(subst(x,y,z),y',z') \\ && = subst(subst(x,y',z'),y,subst(z,y',z')).
\end{arcl}
We proceed by induction on $x$.  The base case where $x$ is a variable is straightforward by case analysis on $y$.  The case where $x$ is an application is also straightforward. 

For the case of lambda-abstraction, let $a,t$ be given such that $a \fresh y,z,y',z'$.  We need to show:
\[
subst(subst(x,y,z),y',z') = subst(subst(x,y',z'),y,subst(z,y',z') )\\
\]
implies \[
\begin{array}{cl}
& subst(subst(lam([a]x),y,z),y',z') \\
= & subst(subst(lam([a]x),y',z'),y,subst(z,y',z')).  
\end{array}
\]

Accordingly, reason as follows:
\begin{arcl}
  &&subst(subst(lam([a]x),y,z),y',z')\\
  &=& subst(lam([a](subst(x,y,z)),y',z')\\
&=&  lam([a]subst(subst(x,y,z),y',z'))
\end{arcl}
using the substitution axiom $(Subst4)$ (see Example~\ref{ex:lamc}) twice since we know $a \fresh y,z,y',z'$.

We can now apply the induction hypothesis, relying again on the freshness assumptions for $a$:
\begin{arcl}
&& lam([a]subst(subst(x,y,z),y',z')) \\
&=& 
lam([a]subst(subst(x,y',z'),y,subst(z,y',z')))
\end{arcl}
Finally, we apply again the substitution axioms twice:
\begin{arcl}
&& lam([a]subst(subst(x,y',z'),y,subst(z,y',z')))\\
&=& subst(lam([a]subst(x,y',z')),y,subst(z,y',z'))\\
&=&
subst(subst(lam([a]x,y',z')),y,subst(z,y',z'))
\end{arcl}
The final two steps require  the freshness conditions:  \\ $a \fresh y',z',y,subst(z,y',z')$.  
By assumption, $a \fresh y',z',y$ holds. 
The final freshness assertion requires an additional lemma:
$$a \fresh x,y,z \Rightarrow a \fresh subst(x,y,z)$$ 
which holds by equivariance of  symbols.  

Nominal logic supports proofs by induction for syntax with binding operators. NLML inherits the inductive reasoning techniques available in nominal logic,  as illustrated in this example. Since NLML is an instance of matching logic with a specific theory, it can be used in any implementation of matching logic. However, as discussed in Section~\ref{sec:assessment}, 
there is an alternative, more concise way to specify the syntax of the lambda calculus and the induction principle using $\new$, which can be written in NML as shown below. 

\subsection{The Lambda Calculus in NML}\label{sec:lam-nml}

Using NML we can axiomatize the substitution operation in the $\lambda$-calculus as follows:
\begin{arcl}
subst(var(a),a,z) &=& z\\
 subst(var(a),\neg a,z) &=& var(a)\\
subst(app(x_1,x_2),y,z) &=& app(subst(x_1,y,z),subst(x_2,y,z))\\
subst(lam(x),y,z) & = & lam(\new \Aa.\abs{\Aa}{subst(x@\Aa,y,z)})
\end{arcl}
Notice that this is now a completely equational definition, no side conditions are required (cf. the axiomatization of substitution given in Example~\ref{ex:lamc}). 
The last axiom can be  replaced with the equivalent:
$$subst(lam(x),y,z) = \new \Aa. lam(\abs{\Aa}{subst(x@\Aa,y,z)})$$
 (see Proposition~\ref{prop:new-wedge-not}, part~\ref{new-sigma}, which applies since $lam$ acts as a constructor).

The induction principle for expressions can be formulated using $\new$ and concretion to deal with the case of lambda-abstractions, avoiding the need for freshness constraints (compare with the version of the induction principle in the previous section):
\begin{arcl}
  (\forall x\colon Var. P(var(x))) &\Rightarrow& \\
(\forall t_1:Exp,t_2:Exp. P(t_1) \wedge P(t_2) \Rightarrow P(app(t_1,t_2))) &\Rightarrow& \\
(\forall t:\abs{Var}{Exp}.   \new \Aa:Var. P(t@\Aa) \Rightarrow P(lam(t)) &\Rightarrow&\\
\forall t:Exp. P(t) &&
\end{arcl}
Note that because we use the $\new$-quantifier in the case for lambda, the freshness side-conditions $a \fresh \vec{y}$ are implicit, and so we can omit the additional parameters $\vec{y}$, which makes this induction principle simpler and cleaner. In the lambda case, we need to prove that if $P(t@\Aa)$ holds for a fresh name $\Aa$ then $P(lam(t))$ holds; it is somewhat easier to see how to generalize this principle systematically to any nominal datatype.

Also note that thanks to the distinction between names and variables, the Substitution Lemma can now be stated as shown below with just one freshness condition, which formalizes the usual side-condition in textbook statements~\cite{BarendregtH:lamcss} (we do not need to specify $\Aa \neq \Ab$ explicitly since this always holds for two different atoms).  
\[\begin{array}{l}
\Aa \fresh z' \Rightarrow \\
subst(subst(x,\Aa,z),\Ab,z') = subst(subst(x,\Ab,z'),\Aa,subst(z,\Ab,z'))
\end{array}\]

This means that the induction step in the lambda case becomes:
\[
\begin{array}{c}
(\new \Ac. \Aa \fresh z' \Rightarrow subst(subst(x@\Ac,\Aa,z),\Ab,z') \\
\qquad\qquad\qquad = subst(subst(x@\Ac,\Ab,z'),\Aa,subst(z,\Ab,z')) 
\Rightarrow \\
(\Aa \fresh z' \Rightarrow subst(subst(lam(x),\Aa,z),\Ab,z') \\
\qquad \qquad\qquad= subst(subst(lam(x),\Ab,z'),\Aa,subst(z,\Ab,z')))
\end{array}
\]
The reasoning for the lambda case is as follows using the fourth axiom for substitution twice:
\begin{arcl}
&& subst(subst(lam(x),\Aa,z),\Ab,z') \\&=&
subst(lam(\new \Ac. \abs{\Ac}{subst(x@\Ac,\Aa,z)}),\Ab,z')\\
 &=&
lam(\new \Ad.\abs{\Ad}{subst((\new \Ac. \abs{\Ac}{subst(x@\Ac,\Aa,z)})@\Ad,\Ab,z')}))
\end{arcl}
By Prop.~\ref{prop:new-abs-conc}, $(\new \Ac. \abs{\Ac}{subst(x@\Ac,\Aa,z)})@\Ad = \new \Ac.\swap{\Ac}{\Ad}{subst(x@\Ac,\Aa,z)}$. The latter is equivalent to $\new \Ac.subst(x@\Ad,\Aa,z)$,
since $\Ac$, $\Ad$ are new (not in the support of the variables in the pattern).
By Prop.~\ref{prop:new-reordered}(\ref{new-empty}), the latter is equivalent to
$subst(x@\Ad,\Aa,z)$.
Therefore, 
\begin{arcl}
    && subst(subst(lam(x),\Aa,z),\Ab,z')\\
  &=& lam(\new \Ad.\abs{\Ad}{subst(subst(x@\Ad,\Aa,z), \Ab,z')}))
\end{arcl}
By induction hypothesis,  the above equals
$$lam(\new \Ad. \abs{\Ad}{subst(subst(x@\Ad,\Ab,z')},\Aa,subst(z,\Ab,z')))\,.$$
Applying again Prop.~\ref{prop:new-abs-conc} and Prop.~\ref{prop:new-reordered}(\ref{new-empty}), the latter equals to 
$$lam(\new \Ad. \abs{\Ad}{subst((\new \Ac. \abs{\Ac}{subst(x@\Ac,\Ab,z')})@\Ad,\Aa,subst(z,\Ab,z'))}).$$ 
We can now use $(Subst4)$ to complete the proof:
\begin{arcl}
  &&subst(subst(lam(x),\Aa,z),\Ab,z')\\
  &=& subst(lam(\new \Ac. \abs{\Ac}{subst(x@\Ac,\Ab,z')}),\Aa,subst(z,\Ab,z'))\\
&=& subst(subst(lam(x,\Ab,z')),\Aa,subst(z,\Ab,z'))
\end{arcl}

\subsection{Reduction and Well-Formedness}
Using the same signature for $\lambda$-terms as in the previous section, 
reduction $red \in \Sigma_{Exp;Exp}$ can be defined in NML as follows:
\[\begin{array}{rcl}
    red(app(x,y)) &=& app(red(x),y) \vee app(x,red(y))\\
    &\vee& (\exists z. x = lam(z) \wedge \new \Aa.subst(z@\Aa,\Aa,y))\\
red(lam(\abs{\Aa}{y})) &=& lam(\abs{\Aa}{red(y))}
\end{array}\]
These can be viewed as first-order axioms or as an inductive definition using a least fixed point pattern $\mu$, following the strategy for defining operations inductively presented by Chen and Rosu for matching $\mu$ logic~\cite{chen19lics}. The first axiom defines reduction for applications: it states that the instances of $red(app(x,y))$ are obtained by reducing in the first argument, or in the second argument, or applying the $\beta$-rule at the root if the first argument is a $\lambda$-abstraction, using the capture-avoiding substitution symbol $subst$ previously defined.
These are the only instances of $red$.

Type checking can also be characterized in NML as shown below, 
using sorts  $Ty$ for types and $Map[Var, Ty]$ for finite maps from variables to types  (we abbreviate $c \colon Map[Var, Ty]$ as  $c\colon Ctx$). The symbol $wf \in \Sigma_{Ctx, Ty; Exp}$, denoting well-formed expressions of type $Ty$ in the typing context $Ctx$, satisfies the following axiom:
\[\begin{array}{rcll}
wf(c,t) &=& \exists a:Var. var(a) \wedge t = c[a]\\
&\vee& \exists u:Ty. app(wf(c,fn(u,t)), wf(c,u))\\
&\vee& \exists t_1,t_2: Ty.  t = fn(t_1,t_2) \wedge\\
&& \hspace{1.5cm} lam(\new \Aa. \abs{\Aa}wf(c[\Aa:=t_1],t_2))
\end{array}\]
Here we write $c[a]$ for the (partial) operation that looks up $a$'s binding in $c$ and write $c[a:=t]$ for the (partial) operation that extends a finite map $c$ with a binding for a variable not already present in its domain.
Notice that $wf(c,t)$ is a pattern which matches just those lambda-terms that are well-formed in context $c$ and have type $t$: The first line in the definition of $wf(c,t)$ indicates that a variable $var(a)$ is a well-formed term of type $t$ in $c$ if $c[a] = t$; the second line specifies the usual typing rule for applications: the first argument must have arrow type $fn(u,t)$ in the context $c$, while the second argument must have type $u$ in $c$; the third and fourth lines specify the typing rule for $\lambda$ expressions: a well-formed $\lambda$-abstraction of type $fn(t_1,t_2)$ in $c$ is an instance of the pattern $lam(\new \Aa. \abs{\Aa}wf(c[\Aa:=t_1],t_2))$. For example,  $lam(\abs{\Aa}{var(\Aa)}) \in wf([],fn(t,t))$ holds for any type $t$.

Finite maps satisfy standard axioms, such as
\[\begin{array}{c}
c[a:=t][a] = t \qquad a \neq b \Rightarrow c[a:= t][b] = c[b] \\ x \in c[a] \wedge y \in c[a] \Rightarrow x = y
\end{array}\]
and we abuse notation slightly by writing $c \subseteq c'$ for $\forall a. c[a] \subseteq c'[a]$.

\paragraph{Subject Reduction}
The standard property of subject reduction can be stated as follows:
\[red(wf(c,t)) \subseteq wf(c,t)\]
that is, all the reducts of a well-formed term of type $t$ in context $c$ (i.e., the instances of $red(wf(c,t))$) are well-formed terms of type $t$ in context $c$. In other words, given a well-formed term  of type $t$ in context $c$, reducing it yields another well-formed term of the same type.  
This property can be proved  using the axioms for $red$ and $wf$ and the following induction principle:
\[\begin{array}{rcl}
(\forall t\colon Ty, a\colon Var, c\colon Ctx . (var(a) \wedge t = c[a]) \subseteq P(c,t)) &\Rightarrow& \\
(\forall t_1,t_2\colon Ty, c\colon Ctx . app(P(c,fn(t_1,t_2)),P(c,t_1)) \subseteq  P(c,t_2))  &\Rightarrow& \\
(\forall  t_1,t_2\colon Ty, c\colon Ctx. lam( \new \Aa:Var. \abs{\Aa}{P(c[\Aa:=t_1],t_2)}) &&\\
\subseteq P(c,fn(t_1,t_2)))    &\Rightarrow&\\
\forall c\colon Ctx, t\colon Ty. wf(c,t) \subseteq P(c,t)
\end{array}\]
together with a lemma stating that well-typed substitutions preserve types:
\[\begin{array}{rl}
\forall a, t, t', c, c'. & c' \subseteq c \\ 
& \Rightarrow subst(wf(c[a:=t'],t),a,wf(c',t’)) \subseteq wf(c,t)
\end{array}\]

First we show how to prove this lemma using the above-mentioned induction principle and the axioms for $subst$ and $wf$.

We start by using the definition of $wf$ and the semantics of symbols, which extends pointwise to sets of arguments, allowing us to move  the $subst$ symbol inside $\exists$ and $\vee$ patterns:
\[\begin{array}{lcl}
    &&subst(wf(c[a:=t'],t),a,wf(c',t))
    \\
    &=& subst(\exists a':Var. var(a') \wedge t = (c[a:=t'])[a']\\
&& \vee\exists u:Ty. app(wf(c[a:=t'],fn(u,t)), wf(c[a:=t'],u))\\
&& \vee \exists t_1,t_2: Ty.  t = fn(t_1,t_2) \wedge \\
&& \hspace{.2cm}lam(\new \Ab. \abs{\Ab}wf(c[a:=t'][\Ab:=t_1],t_2)),a, wf(c',t’))\\
& = & \exists a':Var. subst(var(a') \wedge t = (c[a:=t'])[a'],a, wf(c',t’))\\
&& \vee\exists u:Ty. subst(app(wf(c[a:=t'],fn(u,t)),wf(c,u)),\\
&& \hspace{5.5cm} a, wf(c',t’))\\
&& \vee \exists t_1,t_2: Ty.  subst(t = fn(t_1,t_2) \wedge \\
&& \hspace{.5cm}lam(\new \Ab. \abs{\Ab}wf(c[a:=t'][\Ab:=t_1],t_2)), a, wf(c't'))\\
\end{array}\]
Note that $(c[a:=t'])[a] = t'$ and to prove the lemma it is sufficient to prove that each of the disjuncts is included in $wf(c,t)$. 

For the first disjunct, we consider whether the variable $a'$ is equal to $a$ or not.  If so, then $t = c[a:=t'][a] = t'$ and by $(Subst1)$ (see Example~\ref{ex:lamc}) $subst(var(a),a, wf(c',t')) \subseteq wf(c',t') \subseteq wf(c,t)$ as desired since $c' \subseteq c$ and $t = t'$.  Otherwise, $(c[a:=t'])[a'] = c[a']$ and $subst(var(a'),a,wf(c',t')) = var(a')$ and by definition of $wf$ we have $\exists a'. var(a') \wedge t = c[a'] \subseteq wf(c,t)$.

In the second disjunct, using $(Subst3)$ to move $subst$ inside $app$, the induction hypothesis, and the definition of $wf$, we obtain the required inclusion. 

Finally, in the third disjunct again we can move $\new$ out (Prop.~\ref{prop:new-wedge-not}(\ref{new-sigma})) and then apply axiom $(Subst4)$ to move $subst$ inside $lam$, completing the proof by using the induction hypothesis again when $t = fn(t_1,t_2)$ (otherwise the disjunct is empty and the inclusion holds trivially). 

To prove Subject Reduction, again we start by using the definition of $wf$:
\[\begin{array}{lcl}
    && red(wf(c,t)) \\
    &=& red(\exists a:Var. var(a) \wedge t = c[a]\\
&& \vee\exists u:Ty. app(wf(c,fn(u,t)), wf(c,u))\\
&& \vee \exists t_1,t_2: Ty.  t = fn(t_1,t_2) \wedge lam(\new \Aa. \abs{\Aa}wf(c[\Aa:=t_1],t_2))) \\
& = & \exists a:Var. red(var(a) \wedge t = c[a])\\
&& \vee\exists u:Ty. red(app(wf(c,fn(u,t)),wf(c,u))\\
&& \vee \exists t_1,t_2: Ty.  red(t = fn(t_1,t_2) \wedge \\
&& \hspace{3.5cm}lam(\new \Aa. \abs{\Aa}wf(c[\Aa:=t_1],t_2)))\\
\end{array}\]

Note that $red(var(a)) = \bot$ so the first disjunct is empty. 

In the second disjunct we can use the first axiom for $red$. Notice that by Prop.~\ref{prop:new-abs-conc}, the following is equivalent to the original axiom involving concretion: $red(app(x,y)) = app(red(x),y) \vee app(x,red(y)) \vee (\new \Aa.\exists z.  x = lam([\Aa]z) \wedge subst(z,\Aa,y))$.

In the third disjunct we can use the second axiom for $red$ after moving the $\new$ out (Prop.~\ref{prop:new-wedge-not}(\ref{new-sigma})). Hence,
\[\begin{array}{lcl}
&&red(wf(c,t))\\
&=&  \exists u\colon Ty. app(red(wf(c,fn(u,t))),wf(c,u)) \\
&& \hspace{1 cm} \vee  app(wf(c,fn(u,t)),red(wf(c,u)))  \\

&& \hspace{1cm} 
\vee (\new \Aa.\exists z.  wf(c,fn(u,t))  \supseteq lam([\Aa]z)\\
&&\qquad\qquad\wedge subst(z,\Aa,wf(c,u)))\\

&& \vee \exists t_1,t_2: Ty.  t = fn(t_1,t_2) \\
&&\qquad\qquad\wedge \new \Aa.lam( \abs{\Aa}red(wf(c[\Aa:=t_1],t_2)))\\
\end{array}\]

We can push the existential quantifiers above under disjunction to get four disjuncts.  It remains to prove that each disjunct above is included in $wf(c,t)$. The first two and the last inclusions follow directly by induction. 
The third follows from the lemma above, noticing that  
$$ wf(c,fn(u,t)) \supseteq lam([\Aa]z) \Rightarrow  z \subseteq wf(c[\Aa:=u],t) $$ 
by definition of $wf$.

\paragraph{Progress}
Another standard property in a syntactic proof of type soundness  states that a well-formed closed term that is not weakly-reducible is a value, where in this case $value = lam(\top)$ since we consider only a pure lambda-calculus.  
This is equivalent to saying that a well-formed closed term is either a value or can be reduced, so can be written as follows:
\[
wf([],t) \subseteq value \vee reducible
\]
where $reducible$ is defined as $\exists x. x \wedge \exists y. y \in red(x)$, that is, $reducible$ matches those terms that can reduce in some way.
We prove some auxiliary properties of $reducible$:  
\[ app(reducible,\top) \subseteq reducible \qquad app(lam(\top),\top) \subseteq reducible\]
To show $app(reducible,\top) \subseteq reducible$ suppose we have some matches $app(reducible,\top)$, that is, it is an application $app(x,y)$ whose first argument is reducible, i.e. there is some $x' \in red(x)$.  Hence $app(x',y)$ matches $red(app(x,y))$ so $app(x,y)$ is reducible.  To show $app(lam(\top),\top) \subseteq reducible$, again suppose $app(lam(x),y)$ matches $app(lam(\top),\top)$.  Then 
$$\exists z.lam(x)  = lam(z) \wedge \new \Aa. subst(z @ \Aa, \Aa,y) \subseteq red(app(lam(x),y))$$ 
so we just need to show the contained pattern is inhabited.  Clearly choosing $x = z$ satisfies the first conjunct and it remains to show that the substitution pattern is defined.  This can be proved by induction on the first argument to $subst$.

We now outline a proof sketch for the progress property, which we reformulate slightly to $c = [] \Rightarrow wf(c,t) \subseteq value \vee reducible$.
Assume $c = []$.  Proceed by induction on $wf(c,t)$ to prove $wf(c,t) \Rightarrow value \vee reducible$.  
\begin{itemize}
\item Case:  $\exists a. var(a) \wedge t = c[a] \subseteq value \vee reducible$:  $c= []$ so $c[a]$ is empty and so $\exists a. var(a) \wedge t = c[a] = \bot \subseteq value \vee reducible$.

\item Case: $app(value \vee reducible, value \vee reducible) \subseteq value \vee reducible$.  We consider two cases.  If the first argument is reducible then, by definition of $reducible$: \\
$app(reducible, value \vee reducible) \subseteq app(reducible, \top)\\
\subseteq value \vee reducible$.  \\
Otherwise,  the first argument is a value and since $value = lam(\top)$ we have \\
\begin{eqnarray*}
app(lam(\top), value \vee reducible) &\subseteq &
app(lam(\top), \top)\\
&\subseteq& value \vee reducible
\end{eqnarray*}
again by definition.  (Here the reasoning is simpler because in the pure simply typed lambda calculus there is only one form of value; for a larger language with some other value forms we would need a separate lemma ensuring that values of function type are lambdas.)

\item Case: $lam(\new a. \abs{a}{value \vee reducible}) \subseteq value \vee reducible$.  This follows immediately since $lam(\new a. \abs{a}{value \vee reducible}) \subseteq lam(\top) = value$.
\end{itemize}

Although in many cases occurrences of $\new$ are adjacent to abstraction (which could justify having  syntactic sugar for this), the examples in this section, in particular the induction principles, illustrate the advantage of keeping  $\new$ independent of abstraction.

\section{Comparison with binding in Applicative Matching Logic}\label{sec:comparison}

Binders, as exemplified by $\lambda$ in the $\lambda$-calculus, can be encoded in 
matching logic by using an existential pattern, a special symbol \textsf{intension} and  a sort containing representatives of all possible function graphs, as shown by \citet{chen20icfp}. The existential pattern represents a function as a set of pairs (argument, value) and \textsf{intension} packs this set as an object in a power sort.  For example, to represent the $\lambda$-term $\lambda x.e$, we apply \textsf{intension} to $\exists x\colon Var.\langle x,e\rangle$, which produces an element of the power sort $2^{Var\times Exp}$, and introduce a symbol $\lam \in \Sigma_{2^{Var\times Exp};Exp}$ that decodes it into the intended interpretation of $\lambda x.e$. Abbreviating \textsf{intension} $\exists x\colon Var.\langle x,e\rangle$  as $[x\colon Var] e$, we obtain a notation like that for abstraction in nominal logic, but note that whereas the abstraction $[x]e$ in nominal logic denotes an $\alpha$-equivalence class of pairs, in matching logic it denotes the combination of an existential pattern (matching a union of pairs) with an intension operator that packs the union into an object --- essentially a second-order construction. 

Given an expression $\lam([x]e)$ and a fresh variable $y$, in nominal logic the $\alpha$-equivalence $\lam([x]e) = \lam([y]\swap{x}{y}{e})$ holds by construction (given the axiomatisation of swapping and abstraction), whereas in matching logic only the $\alpha$-equivalence between the existential patterns $\exists x\colon Var.\langle x,e\rangle$ and $\exists y\colon Var.\langle x,e\{x \mapsto y\}\rangle$ holds by construction.  We can then prove the alpha-equivalence $[x\colon Var] e = [y\colon Var](e\{x \mapsto y\})$  by relying on the semantics of  existential patterns, and the semantics of the \textsf{intension} symbol.

Once we have a mechanism to define syntax with binders, we can define the $\beta$-reduction relation. In both nominal and matching logic, we can introduce axioms to specify this relation. For example, \citet{chen20icfp} give the following $(\beta)$ axiom schema:
\[\forall x_1\colon Var.\cdots \forall x_n\colon Var.\ \app(\lam [x] e , e') = e[e'/x]\]
where $e, e'$ are patterns encoding  two $\lambda$-calculus expressions, $x_1,\ldots,x_n$ are the free variables of $(\lambda x. e)~e'$,  and $e[e'/x]$ denotes the  meta-level substitution of matching logic.  In other words, we have one axiom for each possible $\beta$-redex, for example:
$$\forall x y z.  (\lam [w]\app(\app(w, w), \app(y, z)) = \app(\app(y,z) , \app(y,z))$$

In contrast, in nominal logic (or NLML) one can write this as a single (conditional) equational law
\[\forall e, e', a.\ a \# e' \Rightarrow \app(\lam (\abs{a}{e}),e') = \subst(e,a,e')\]
where $\subst$ is a function symbol denoting explicit substitution (defined via a set of equations, see Section~\ref{sec:nl}). 
Alternatively using quantifier alternation we can   specify the $\beta$-axiom as follows:
\[\forall e'. \new \Aa. \forall e. \app(\lam([\Aa]e)),e') = \subst(e,\Aa,e')\;,\]
or, in NML, even as an equational axiom using $\new$ and $@$:
\[\forall e':Exp.  \forall e:\abs{Var}{Exp}. \app(\lam(e),e') = \new \Aa. \subst(e@\Aa,\Aa,e')\;.\]

\section{Additional Related Work}\label{sec:related}

Matching logic~\cite{rosu-2017-lmcs} is the foundation of the $\bbK$ semantic framework, which has been used to specify programming languages (e.g., Java and C), virtual machines (e.g. KEVM~\cite{csf}), amongst others. Variants of matching logic have been defined to include recursive pattern definitions (see Matching $\mu$-logic~\cite{chen19lics}) and facilitate automated reasoning~\cite{chen20oopsla}. To specify programming languages that include binding operators, matching logic uses an encoding 
based on the internalisation of the graph of a function from bound name to expressions containing bound names~\cite{chen20icfp}. 
This is sufficient to formalise the semantics of binding operators, but it is unclear how to reason by induction on the defined syntax. 

Another approach to encode binders in matching logic is to use numerical codes to represent bound variables, as in de Bruijn encodings of the $\lambda$-calculus. This approach works well from the computational point of view, but reasoning on de Bruijn's syntax is generally considered less intuitive than on syntax with names.  On the other hand, reasoning about ``raw'' syntax without any support for alpha-equivalence is also painful because operations that intuitively seem like functions, such as substitution, need to be treated with care.  Because intuitive reasoning about names and binding is challenging, over several decades many other techniques for representing and reasoning about binding have been investigated, including higher-order abstract syntax~\cite{pfenning:hoas}, and locally-nameless encodings~\cite{PollackR:metatheory}, to name just a few.  Could they be combined with matching logic instead?  For de Bruijn encodings the answer is certainly yes, but this would do little to overcome the gap between on-paper and formal reasoning.  For other approaches, particularly higher-order abstract syntax, where the typed lambda-calculus is used as a metalanguage and its binding structure used to encode bindings in object languages, the answer is less clear: induction and recursion over higher-order encodings has been a long-standing challenge~\cite{despeyroux94higherorder,despeyroux:prirh-jv,DBLP:conf/tlca/DespeyrouxFH95,rockl:higoas}  with satisfactory solutions such as Beluga~\cite{pientka:typtfp,beluga} and Abella~\cite{BaeldeD:JFR} only appearing fairly recently.
Though this is not an insurmountable obstacle, supporting induction or recursion over HOAS has necessitated developing significant new logical or semantic foundations, for example \emph{contextual modal type theory} for Beluga~\cite{PfenningF:cmtt} and logics for reasoning about generic judgments and nominal abstraction for Abella~\cite{mcdowell02tocl,miller04tocl,tiu:logrgj,gacek:noma}.  Both of these go well beyond vanilla first-order logic so the prospect of combining them with matching logic seems much more speculative than for nominal logic.  Other techniques, such as locally nameless encodings~\cite{PollackR:metatheory}, appear to have no obvious obstacle to being adapted for use with matching logic, but there is no systematic logical foundation for locally nameless encodings analogous to nominal logic, instead one can view them as a recipe for employing de Bruijn indices for bound names while using ordinary names for free names such that induction and recursion principles can be proved.

The focus of this paper is on the question of how to adapt matching logic to accommodate binding.  Naturally, there are many techniques for encoding binding in theorem provers and logical frameworks that are not subject to this constraint, and we do not make any claim here that NML (or NLML) is better than the state of the art among such systems generally.  For this reason we have not attempted to give a complete picture of this field, but refer to recent comparisons such as~\cite{hoas-benchmarks} and \cite{poplmark-reloaded}.   As attempts to reason about languages with binding in matching logic or $\mathbb{K}$ are in their infancy, with the first proposal appearing only two years ago, it seems premature to try to predict whether the end results will be competitive with established approaches such as Abella~\cite{abella}, locally nameless in Coq~\cite{PollackR:metatheory}, Nominal Isabelle~\cite{UrbanC:ntih}, or Beluga~\cite{beluga}, which are already used by nonspecialists for formalizing their own research.  Each of these systems offers significant advantages that are not directly addressed by our work, for example support for automated reasoning, inductive proofs, and (for HOAS-based approaches) built-in capture-avoiding substitution.

Likewise, although implementation in concrete tools such as $\bbK$ is not in focus in this paper, for completeness we  mention some other comparable specification and verification tools in this space such as Ott~\cite{sewell10jfp}, PLT Redex~\cite{pltredex}, Bedwyr~\cite{bedwyr}, and \aprolog/$\alpha$Check~\cite{cheney:nomlp,cheney17tplp}.  Ott is a tool primarily aimed at easing the work involved in managing large semantics specifications, and provides a convenient concrete syntax for terms and rules that is parsed to abstract syntax that can then be transformed to LaTeX for presentation or publication purposes, or to specifications suitable for various theorem provers, such as Isabelle/HOL or Coq.  Ott provides some support for binding and can generate specifications that use popular techniques such as locally-nameless~\cite{PollackR:metatheory}, but focuses on specifying and representing syntax and rules, not constructing or checking proofs, and Ott does not have an underlying logic in which properties of Ott specifications could be specified or proved.  Likewise, PLT Redex~\cite{pltredex} is a domain-specific language built on Racket that supports specifying languages with conventional abstract syntax, evaluation rules and typing rules.  It is again primarily focused on making it easy to define and experiment with specifications, including visualization of how expressions evaluate, but its support for binding is somewhat limited: binding operators are recognized to some extent and capture-avoiding substitution is provided as a built-in operation.
Bedwyr~\cite{bedwyr} is a system based on the logic of generic judgments~\cite{tiu:logrgj} for specifying and model-checking properties of formal calculi including binding.  Finally, \aprolog is a typed logic programming language that provides built-in support for name binding using nominal logic.  \aprolog is not a full-fledged logical framework but can be used to model and check properties of specifications, using an exhaustive form of property-based testing~\cite{cheney17tplp}.  Adapting ideas from (nominal) matching logic to logic programming may be an interesting future direction.

The variants of matching logic defined in this paper include names and name management primitives following the nominal approach, which supports inductive reasoning over syntax with binding. We propose three alternative ways of incorporating nominal primitives in matching logic:  in the first one (section~\ref{sec:nomlm}), we exploit the fact that nominal logic has been defined as a theory of first-order logic, which is definable in matching logic, in the second (section~\ref{sec:sndapproach}), we use the features of patterns to define nominal primitives, and in the third (section~\ref{sec:newnml}), we extend the syntax and the semantics of matching logic to include built-in nominal primitives. Each of these approaches permits to encode binders directly by using the abstraction construct of nominal logic, avoiding the use of functional encodings and inheriting  the reasoning principles of nominal logic. 

The treatment of concretion and locally scoped fresh names in NML is somewhat similar to the $\lambda\alpha\nu$-calculus of \citet{pitts11jfp}.  The latter is a lambda-calculus extended with a binding abstraction operator $\alpha a.e$, concretion $e@a$, and locally scoped names $\nu a. e$; expressions are given a total semantics by interpreting over nominal sets enriched with an operation called \emph{restriction}, which models the behavior of $\nu$.  Thus, in contrast to NML, concretion is total and satisfies a beta-equivalence law $(\alpha a. e)@b = \nu a. \swap{a}{b}{e}$, but free occurrences of $a$ in the left hand side (obtained by swapping with occurrences of $b$ in $e$) will be replaced by a special name $\mathsf{anon} = \nu a. a$.  Our treatment of concretion and local fresh name scoping using $\new$ leverages matching logic's ability to represent partial functions.

\section{Conclusions}\label{sec:concl}

We have illustrated the expressive power of NML by using it to specify the $\beta$-reduction and typing relations in the $\lambda$-calculus as well as  principles of induction to reason over $\lambda$-terms. In future we will study extensions of NML with fixed point operators, and will consider proof systems and computational properties (e.g., algorithms to  check equivalence of patterns in NML). This is important to build theorem provers and programming language verification tools based on this logic. 
Regarding implementations of NLML and NML: the first can be directly implemented in $\mathbb{K}$ since it is simply a theory in matching logic; the second requires an extension of  $\mathbb{K}$, which we are discussing  with the designers of $\mathbb{K}$. 
We expect the efficient techniques developed for nominal terms with ground names can be adapted to work with NML patterns.

\section*{Acknowledgments}
Thanks to Xiaohong  Chen for helpful comments.  This work was supported by ERC Consolidator Grant Skye (grant number 682315) and by a Royal Society International Exchange grant (number IES$\backslash$R2$\backslash$212106).

\bibliography{main}
\bibstyle{plainurl}
\bibliographystyle{ACM-Reference-Format}
\newpage
\appendix
\onecolumn

\section{Proofs}\label{sec:proofs}

\begin{proof}[Proof of Prop.~\ref{prop:axiom-translation}]
It is easy to see that axioms $(E3)$ and $(E5)$ are particular cases of axiom $(EV)$: function symbols and predicates are symbols in the signature $\Sigma_{NLML}$.

Axiom $(E1)$ is also a particular case of axiom $(EV)$ since swapping is represented as a symbol in the signature $\Sigma_{NLML}$ (it is a particular $\sigma$ of arity 3: two names and a pattern in which we are swapping the names).

Axiom $(E2)$ does not exactly translate to  $(EV)$, but using axiom $(P)$ we can show that $a \fresh x = \swap{b}{b'}{(a \fresh x)}$ and then using $(EV)$ the desired result follows.  Axiom $(E4)$ is handled similarly.
\end{proof}

\begin{proof}[Proof of Prop.~\ref{prop:translation}]
The  translation function is defined inductively in the obvious way using the mapping between symbols in the nominal logic and NLML signatures described above, resulting in a pattern of sort $Pred$ since this is the type of the result for predicate symbols and formula constructors.  

Part 1 follows from the fact that FOL can be represented as a theory in matching logic using the $Pred$ sort introduced above  (see~\citet{chen19lics} for the definition of the FOL theory in matching logic), and the axioms of nominal logic can be represented as a theory in matching logic too, as a consequence of Proposition~\ref{prop:axiom-translation}.

Part 2 follows from completeness in first-order logic and nominal logic~\cite{cheney:comhtn} using $\neg \phi$: the formula $\neg \phi$ is provable, and by part 1, $\neg \phi' = \top$ holds in matching logic, and therefore $\phi' = \bot$ holds also.
\end{proof}

\begin{proof}[Proof of Prop.~\ref{prop:abstraction}]
Using $(*)$ we can write $(A1')$ as follows
\[(\exists y. y \wedge \abs{b}{y} = \abs{a}{x}) =  (\swap{a}{b}{x} \wedge b \coerce{\fresh} \abs{a}{x})\]
Note that $b \fresh \abs{a}{x} = (b  = a \vee b \fresh x)$, so the above pattern is equal to:
\[((\exists y. y \wedge \abs{b}{y} = \abs{a}{x}) = (\swap{a}{b}{x} \wedge a \coerce{=} b)) \vee 
((\exists y. y \wedge \abs{b}{y} = \abs{a}{x}) = (\swap{a}{b}{x}\wedge b \coerce{\fresh} x))\]
By $(A1)$ in the first disjunct we must have $x = y$ so the formula simplifies to $\top$ using $(S1)$.  In the second disjunct, using $(A1)$ we must have $y = \swap{a}{b}{x} \wedge b \coerce{\fresh} x$, so the formula again simplifies to $\top$ by instantiating $y$.

Using $(*)$ we can write $(A2')$ as follows:
\[x = \exists a. a \coerce{\fresh} x \wedge \abs{a}(\exists y. y \wedge \abs{a}{y} = x)
\]
Using $(A2)$ we know we can choose $b,z$ such that $x = \abs{b}{z}$.  We thus need to show that for any such choice of $b,z$ we have:
\[\abs{b}{z} = \exists a. a \coerce{\fresh} \abs{b}{z} \wedge \abs{a}{(\exists y. y \wedge \abs{a}{y} = 
\abs{b}{z})}
\]
We can again consider two cases: $a = b$ or $a \fresh z$.  In the first case the formula simplifies as follows:
\[
\abs{b}{z} = \abs{b}(\exists y. y \wedge \abs{b}{y} = 
\abs{b}{z})
\]
and by $(A1)$ we have $y = z$ so the whole formula simplifies to $\top$ on instantiating $y = z$.  In the second case the formula simplifies to:
\[\abs{b}{z} = \exists a. a \coerce{\fresh} z \wedge \abs{a}{(\exists y. y \wedge \abs{a}{y} = 
\abs{b}{z})}
\]
where using $(A1)$ we can replace $\abs{a}{y} = \abs{b}{z}$ with $y = \swap{a}{b}{z}\wedge a \fresh z$.  Thus the formula simplifies further to
\[\abs{b}{z} = \exists a. a \coerce{\fresh} z \wedge \abs{a}{\swap{a}{b}{z}} )\]
To complete the proof we note that an $a$ fresh for $z$ can always be chosen, discharging the freshness assumption and again using $(A1)$ we can conclude that $\abs{b}{z} = \abs{a}{\swap{a}{b}{z}}$ since $b \fresh \swap{a}{b}{z}$ follows from $a \fresh z$ by equivariance.

To show $(A1)$ we need to show:
\[\abs{a}{x} = \abs{b}{y} \iff (a = b \wedge x = y) \vee (a \fresh y \wedge x = \swap{a}{b}{y})\]
In the forward direction, suppose $a,b,x,y$ given with $\abs{a}{x}= \abs{b}{y}$.  Let $c\fresh a,b,x,y$ be a fresh name; then $(\abs{a}{x})@c = (\abs{b}{y})@c$.  Using $(A1')$ this equation is equivalent to $(\swap{a}{c}{x} \wedge c \fresh \abs{a}{x}) = (\swap{b}{c}{y} \wedge c \fresh \abs{b}{y})$.  Since $c$ is fresh for everything in sight the freshness constraints simplify away and we have $\swap{a}{c}{x} = \swap{b}{c}{y}$.  We now consider two cases.  If $a = b$ then $\swap{a}{c}{x}= \swap{a}{c}{y}$ implies $x = y$.  Otherwise, notice that $a \fresh \abs{a}{x} = \abs{b}{y}$ and if $a \neq b$ the only way this can be the case is if $a \fresh y$.  This, together with properties of swapping, means we can conclude
$\abs{c}{(\swap{a}{c}{x})} = \abs{c}{\swap{a}{c}{\swap{a}{b}{y}}} =\abs{c}{\swap{c}{b}{\swap{a}{c}{y}}} = \abs{\swap{c}{b}{b}}{\swap{c}{b}{y}} = \swap{c}{b}{\abs{b}{y}}$, and since $a \fresh y$ we  know $c \fresh \abs{b}{y}$ so $\swap{c}{b}{\abs{b}{y}} = \abs{b}{y}$.
Otherwise we may assume $c \neq a$ and $c \fresh x$ and already know $b \neq a$ so both $b$ and $c$ are fresh for $\abs{c}{\swap{a}{c}{x}}$ so we have $\abs{c}{(\swap{a}{c}{x})} = \swap{b}{c}\abs{c}{\swap{a}{c}{x}} = \abs{b}{\swap{b}{c}{\swap{a}{c}{x}}} = \abs{b}{\swap{a}{b}{x}} = \abs{b}{\swap{a}{b}{\swap{a}{b}{y}}} = \abs{b}{y}$.  Since $c$ was arbitrary, we can conclude that $\exists c. c \fresh \abs{a}{x} \wedge \abs{c}{\swap{a}{c}{x}} \subseteq \abs{b}{y}$.  The converse direction, showing that $\exists c. c \fresh \abs{b}{y} \wedge \abs{c}{\swap{b}{c}{y}} \subseteq \abs{a}{x}$, is symmetric noting that $a \fresh x \wedge x = \swap{a}{b}{y}$ is equivalent to $b \fresh x\wedge y = \swap{a}{b}{x}$.

To show $(A2)$ suppose $x$ is a given abstraction and use $(A2')$ to choose a fresh name $a \fresh x$ such that $x = \abs{a}{x@a}$; then $(A2)$ holds with the existential quantifiers witnessed by $a$ and $y = x@a$.
\end{proof}

\begin{proof}[Proof of Prop.~\ref{prop:fresh}]Each primed axiom is equivalent to the corresponding nominal logic axiom.  For example $(F1')$ just says that swapping any two names matching $fresh(x)$ (i.e., both fresh for $x$) does not affect $x$.  $(F2')$ just says that the names of sort $\nsort$ fresh for $a:\nsort$ are all those not matching $a$, while $(F3')$ says that the names of a different sort $\nsort'$ fresh for $a:\nsort$ are all names.  The final axiom $(F4')$ just says $fresh(x)$ is nonempty for any $x$, i.e., a fresh name can always be chosen.  Note that since we assume product sorts are part of any signature, $\sort$ may be a product of finitely many sorts, enabling us to emulate instances of axiom scheme $(F4)$ with multiple variables.
\end{proof}

\begin{proof}[Proof of Prop.~\ref{prop:abs-wedge-not}]
For part (1), suppose $\rho$ is a suitable valuation.  Then
\begin{eqnarray*}
\semantics{\rho}{[a](\phi_1 \wedge \phi_2)} 
&=& \{\abs{a'}{v} \mid v \in \semantics{\rho}{\phi_1 \wedge \phi_2}\}\\
&=& \{\abs{a'}{v} \mid  v \in \semantics{\rho}{\phi_1} \cap \semantics{\rho}{\phi_2}\}\\
&=& \{\abs{a'}{v} \mid v \in \semantics{\rho}{\phi_1}\} \cap 
\{\abs{a'}{v} \mid v \in \semantics{\rho}{\phi_2}\}\\
&=& \semantics{\rho}{\abs{a}\phi_1}\cap \semantics{\rho}{\abs{a}{\phi_2}}\\
&=& \semantics{\rho}{\abs{a}\phi_1 \wedge \abs{a}{\phi_2}}
\end{eqnarray*}
and since $\rho$ is arbitrary we can conclude $[a](\phi_1 \wedge \phi_2) \iff [a]\phi_1 \wedge [a]\phi_2$.

For part (2), similarly suppose suitable $\rho$ is given and suppose $a:\nsort$ and $\phi : \sort$. Let $a' = \rho(a)$.  Then 
\begin{eqnarray*}
\semantics{\rho}{[a](\neg \phi)} 
&=& \{\abs{a'}{v} \mid  v \in \semantics{\rho}{\neg \phi}\}\\
&=& \{\abs{a'}{v} \mid  v \in M_{\sort}-\semantics{\rho}{\phi}\}\\
&=& \{\abs{a'}{v} \mid v \in M_{\sort}\}- \{\abs{a'}{v} \mid  v \in \semantics{\rho}{\phi}\}\\
&=& \semantics{\rho}{\abs{a}{\top}}- \semantics{\rho}{\abs{a}{\phi}}\\
&=& \semantics{\rho}{\abs{a}{\top} \wedge \neg[a]\phi}
\end{eqnarray*}
Since $\rho$ was arbitrary we can conclude  $[a](\neg \phi) \iff \abs{a}{\top} \wedge \neg[a]\phi$.

Finally for part (3) consider again a suitable $\rho$ and let $a' = rho(\Aa)$ and $b' = \rho(\Ab)$.  Recall that since $\Aa$ and $\Ab$ are  distinct names we must also have $a' \neq b'$ since $\rho$ must be injective on names.
\begin{eqnarray*}
\semantics{\rho}{(\abs{\Aa}{\phi}) @ \Ab}
&=& \{v \mid \abs{b'}{v} \in \semantics{\rho}{\abs{\Aa}{\phi}}\\
&=& \{v \mid \abs{b'}{v} \in \{\abs{a'}{w} \mid w \in \semantics{\rho}{\phi}\}\\
&=& \{v \mid w \in \semantics{\rho}{\phi}, \abs{b'}{v} = \abs{a'}{w} \}\\
&=& \{v \mid w \in \semantics{\rho}{\phi}, b' \fresh w \wedge v = \swap{a'}{b'}{w} \}\\
&=&  \{v \in M_{\abs{\nu}{\sort}} \mid w \in \semantics{\rho}{\phi}, b' \fresh \abs{a'}{w}\} \cap  \{\swap{a'}{b'}{v} \mid v \in \semantics{\rho}{\phi}\\
&=&  \{v \in M_{\abs{\nu}{\sort}} \mid \star \in \{\star \mid w \in \semantics{\rho}{\phi} \mid b' \fresh \abs{a'}{w}\}\}\cap \semantics{\rho}{\swap{\Aa}{\Ab}{\phi}}\\
&=&  \{v \in M_{\abs{\nu}{\sort}} \mid \star \in \semantics{\rho}{\Ab \fresh \abs{\Aa}{\phi}}\} \cap \semantics{\rho}{\swap{\Aa}{\Ab}{\phi}}\\
&=&  \semantics{\rho}{\Ab \fresh \abs{\Aa}{\phi}} \cap \semantics{\rho}{\swap{\Aa}{\Ab}{\phi}}\\
&=&  \semantics{\rho}{(\Ab \fresh \abs{\Aa}{\phi}) \wedge \swap{\Aa}{\Ab}{\phi}}
\end{eqnarray*}
Since $\rho$ was arbitrary, this completes the proof that $(\abs{\Aa}{\phi}) @ \Ab \iff (\Ab \fresh \abs{\Aa}{\phi}) \wedge \swap{\Aa}{\Ab}{\phi}$.
\end{proof}

\begin{proof}[Proof of Thm.~\ref{lem:equivariant-sem}]
By induction on the definition of patterns. 
\begin{itemize}
\item 
The cases for names and variables are trivial: by Definition~\ref{def:NML-pattern-sem}, if $e$ is a name or a variable then $\swap{a}{a'}{\semantics{\rho}{e}} = \swap{a}{a'}{\{\rho(e)\}} = \{\swap{a}{a'}{\rho(e)}\} =  \semantics{\swap{a}{a'}{\rho}}{e}$.
\item
If $\phi_\sort = \sigma(\phi_{\sort_1},\ldots,\phi_{\sort_n})$ the result follows by induction and the fact that the interpretation of $\sigma$ is equivariant (see Def.~\ref{def:NML-model}). 
\item
If $\phi_\sort = \phi_1 \wedge \phi_2$ or $\phi_\sort = \neg \phi'$ the result follows directly by induction.  
For example 
\begin{eqnarray*}
\swap{a}{a'}{\semantics{\rho}{\phi \wedge \phi'}} 
&=& \swap{a}{a'}{(\semantics{\rho}{\phi} \cap \semantics{\rho}{\phi'})} \\
&=&\swap{a}{a'}{\semantics{\rho}{\phi} }\cap  \swap{a}{a'}{\semantics{\rho}{\phi'}} \\
&=&  \semantics{\swap{a}{a'}{\rho}}{\phi}\cap  \semantics{\swap{a}{a'}{\rho}}{\phi'} \\
&=& \semantics{\swap{a}{a'}{\rho}}{\phi \wedge \phi'}\;.
\end{eqnarray*}
\item For existential patterns $\exists x:\sort. \phi$ we reason as follows.
\[\begin{array}{lcl}
\swap{a}{a'}{\semantics{\rho}{\exists x:\sort.\phi}}
&=&  \swap{a}{a'}{\bigcup_{v \in M_\sort}\semantics{\rho[v/x]}{\phi}}
=\bigcup_{v \in M_\sort}\swap{a}{a'}{\semantics{\rho[v/x]} {\phi}}\\
&=&  \bigcup_{v \in M_\sort}\semantics{\swap{a}{a'}{\rho[v/x]}}{\phi}\\
&=&  \bigcup_{\swap{a}{a}{v} \in \swap{a}{a'}{M_\sort}}\semantics{(\swap{a}{a'}{\rho})[\swap{a}{a'}{v}/x]}{\phi}\\
 &= & \bigcup_{w \in M_\sort}\semantics{(\swap{a}{a'}{\rho})[w/x]}{\phi}\\
&=& \semantics{\swap{a}{a'}{\rho}}{\exists x:\sort.\phi}
\end{array}\]
where in addition to the induction hypothesis we use the fact that a nominal set $M_\sort$ is closed under swappings so that $M_\sort = \swap{a}{b}{M_\sort}$.

\item For the case of $\new \Aa:\nsort.\phi$ we reason as follows.
\[\begin{array}{lcl}
\swap{a}{a'}{\semantics{\rho}{\new \Aa:\nsort.\phi}}
&=& \swap{a}{a'}{\bigcup_{b \in \bbA_{\nsort} - \supp(\rho)} \{v \in \semantics{\rho[b/\Aa]}{\phi}} \mid b \not\in supp(v)\}\\
&= & \bigcup_{b \in \bbA_{\nsort} - \supp(\rho)} \{\swap{a}{a'}{v} \in \swap{a}{a'}{\semantics{\rho[b/\Aa]}{\phi}} \mid b \not\in supp(v)\}\\
&=& \bigcup_{b \in \bbA_{\nsort} - \supp(\rho)} \{ \swap{a}{a'}{v} \in \semantics{\swap{a}{a'}{\rho[b/\Aa]}}{\phi} \mid b \not\in supp(v)\}\\
&=& \bigcup_{\swap{a}{a'}{b} \in \swap{a}{a'}{(\bbA_{\nsort} - \supp(\rho))}}\{ \swap{a}{a'}{v} \in \semantics{(\swap{a}{a'}{\rho})[\swap{a}{a'}{b}/\Aa]}{\phi} \mid  b \not\in supp(v)\} \\
&=& \bigcup_{b' \in \bbA_{\nsort}-\supp(\swap{a}{a'}{\rho})} \{w \in \semantics{(\swap{a}{a'}{\rho})[b'/\Aa]}{\phi} \mid b'\not\in supp(w)\}\\
&=& \semantics{\swap{a}{a'}{\rho}}{\new \Aa:\nsort.\phi}
\end{array}\]
Here we make use of the fact that $M_{\nsort} = \bbA_{\nsort}$ is a nominal set hence closed under swappings, as well as the fact that set difference and the support operation are equivariant.

\end{itemize}
\end{proof}

\begin{proof}[Proof of Prop.~\ref{prop:new-reordered}]
The first part is a direct consequence of the definition of $\semantics{\rho}{\new \Aa.\phi}$ (Definition~\ref{def:NML-pattern-sem}): if $\Aa$ is not free in $\phi$ then $\semantics{\rho}{\phi} = \semantics{\rho[a/\Aa]}{\phi}$ for any $a$.  

The second part also follows from Definition~\ref{def:NML-pattern-sem}: Both patterns are interpreted by
$$\bigcup_{\small\begin{array}{c}a \in \mathbb{A}_{\nsort}-supp(\rho)\\ b \in  \mathbb{A}_{\nsort'}-supp(\rho)\\ a\neq b\end{array}}\{v\in \semantics{\rho[a/\Aa][b/\Ab]}{\phi_{\sort}}\mid a \not\in supp(v),  b \not\in supp(v)\}$$
since the conditions on $a, b$ ensure $\swap{a}{b} v = v$ and $\swap{a}{b}{\rho[a/\Aa][b/\Ab]} = \rho[a/\Ab][b/\Aa]$.
\end{proof}

\begin{proof}[Proof of Prop.~\ref{prop:new-wedge-not}]
\begin{enumerate}
    \item 
We can show that both patterns have the same semantics using Theorem~\ref{lem:equivariant-sem} (Equivariance).

By Definition~\ref{def:NML-pattern-sem} (semantics of $\new$ and $\wedge$ patterns):
$$\semantics{\rho}{\new \Aa\colon \nsort. \phi_{\sort} \wedge \psi_\sort} = 
\bigcup_{a \in \mathbb{A}_{\nsort}-supp(\rho)}\{v\in \semantics{\rho[a/\Aa]}{\phi_{\sort}}\cap \semantics{\rho[a/\Aa]}{\psi_{\sort}} \mid a \not\in supp(v)\}.$$
The latter is equal to  $\bigcap_{a \in \mathbb{A}_{\nsort}-supp(\rho)}\{v\in \semantics{\rho[a/\Aa]}{\phi_{\sort}}\cap \semantics{\rho[a/\Aa]}{\psi_{\sort}} \mid a \not\in supp(v)\}$. To see that the union can be turned into an intersection,  it is sufficient to consider two different $a,a'$ in $\mathbb{A}_{\nsort}-supp(\rho)$ such that $a, a' \not\in supp(v)$ and observe that 
$v \in \semantics{\rho[a/\Aa]}{\phi_{\sort} }$
if and only if $v = \swap{a}{a'}{v} \in \semantics{\swap{a}{a'}{(\rho[a/\Aa])}}{\phi_{\sort}}= \semantics{\rho[a'/\Aa]}{\phi_{\sort}}$ by Theorem~\ref{lem:equivariant-sem} (and similarly for $\psi_\sort$).
Therefore
\begin{eqnarray*}
\semantics{\rho}{\new \Aa\colon \nsort. \phi_{\sort} \wedge \psi_\sort} 
&=& \bigcap_{a \in \mathbb{A}_{\nsort}-supp(\rho) }\{v\in \semantics{\rho[a/\Aa]}{\phi_{\sort}}\mid a \not\in supp(v)\} \cap\\ 
& & \bigcap_{a \in \mathbb{A}_{\nsort}-supp(\rho) }\{v\in  \semantics{\rho[a/\Aa]}{\psi_{\sort}} \mid a \not\in supp(v)\}.
\end{eqnarray*}
Using the same reasoning again, we can write: 
\begin{eqnarray*}
\semantics{\rho}{\new \Aa\colon \nsort. \phi_{\sort} \wedge \psi_\sort} 
&=& \bigcup_{a \in \mathbb{A}_{\nsort}-supp(\rho) }\{v\in \semantics{\rho[a/\Aa]}{\phi_{\sort}} \mid a \not\in supp(v)\} \cap\\ 
& & \bigcup_{a \in \mathbb{A}_{\nsort}-supp(\rho) }\{v\in  \semantics{\rho[a/\Aa]}{\psi_{\sort}} \mid a \not\in supp(v)\}\\
&=& \semantics{\new \Aa\colon \nsort.\phi_\sort \wedge \new \Aa\colon \nsort.\psi_\sort}{\rho}\;.
\end{eqnarray*}

Notice that by Prop.~\ref{prop:new-reordered}, $\new \Aa. (\phi ~\wedge ~\psi) \iff ((\new \Aa.\phi)~ \wedge ~\psi)$ if $\Aa$ is not free in $\psi$.

\item This case also relies on the equivariance of the semantics (Theorem~\ref{lem:equivariant-sem}):
\begin{eqnarray*}
\semantics{\rho}{\neg \new \Aa\colon \nsort. \phi_{\sort}}&= & M_\sort - \semantics{\rho}{\new \Aa\colon \nsort. \phi_{\sort}}\\
&=&
M_\sort - \bigcup_{a \in \mathbb{A}_{\nsort}-supp(\rho) } \{v\in \semantics{\rho[a/\Aa]}{\phi_{\sort}} \mid a \not\in supp(v)\}
\end{eqnarray*}

\begin{eqnarray*}
\semantics{\rho}{\new \Aa\colon \nsort. \neg\phi_{\sort}}&=&
\bigcup_{a \in \mathbb{A}_{\nsort}-supp(\rho) } \{v\in \semantics{\rho[a/\Aa]}{\neg\phi_{\sort}} \mid a \not\in supp(v)\}\\&=&
\bigcup_{a \in \mathbb{A}_{\nsort}-supp(\rho) } \{v\in M_\sort - \semantics{\rho[a/\Aa]}{\phi_{\sort}} \mid a \not\in supp(v)\}\\&=& 
M_\sort - \bigcap_{a \in \mathbb{A}_{\nsort}-supp(\rho) } \{v\in \semantics{\rho[a/\Aa]}{\phi_{\sort}} \mid a \not\in supp(v)\}
\end{eqnarray*}
The union and intersection above are equivalent, as shown in the previous case:
If we consider two different elements $a$ and $a'$ in $\mathbb{A}_{\nsort}-supp(\rho)$ such that $a, a' \not\in supp(v)$ then $v \in \semantics{\rho[a/\Aa]}{\phi_{\sort} }$
if and only if $v = \swap{a}{a'}{v} \in \semantics{\swap{a}{a'}{(\rho[a/\Aa])}}{\phi_{\sort}}= \semantics{\rho[a'/\Aa]}{\phi_{\sort}}$ (Theorem~\ref{lem:equivariant-sem}).

\item
By Definition~\ref{def:NML-pattern-sem},
\begin{eqnarray*}
\semantics{\rho}{\abs{\Aa}{\new \Ab\colon \nsort.\phi_\sort}} &=& \abs{\rho(\Aa)}{\bigcup_{b\in \mathbb{A}_{\nsort}-supp(\rho)} \{v \in \semantics{\rho[b/\Ab]}{\phi_\sort} \mid b \not\in supp(v)\}} \\
&=& \bigcup_{b\in \mathbb{A}_{\nsort}-supp(\rho)} \{\abs{\rho(\Aa)}{v} \mid v \in \semantics{\rho[b/\Ab]}{\phi_\sort}, b \not\in supp(v)\}
\end{eqnarray*}

Also by Definition~\ref{def:NML-pattern-sem},
\begin{eqnarray*}
\semantics{\rho}{\new \Ab\colon \nsort.\abs{\Aa}{\phi_\sort}} &=&
\bigcup_{b\in \mathbb{A}_{\nsort}-supp(\rho)}\{w\in \semantics{\rho[b/\Ab]}{\abs{\Aa}{\phi_\sort}} \mid b \not \in supp(w)\} \\
& = &
\bigcup_{b\in \mathbb{A}_{\nsort}-supp(\rho)}\{ \abs{\rho(\Aa)}{v} \mid v \in \semantics{\rho[b/\Ab]}{\phi_\sort}, b \not \in supp(v)\}
\end{eqnarray*}

\item  
By Definition~\ref{def:NML-pattern-sem},
$\semantics{\rho}{\langle \new \Aa. \phi, \psi \rangle} = \langle \bigcup_{a \in \mathbb{A}_{\nsort} - supp(\rho)} \{v \in \semantics{\rho[a/\Aa]}{\phi} \mid a \not\in supp(v) \}, \semantics{\rho}{\psi} \rangle$.

Since $\Aa$ does not occur free in $\psi$, $\semantics{\rho[a/\Aa]}{\psi}=  \semantics{\rho}{\psi}$ for any $a$.
Hence, 
\[\semantics{\rho}{\langle \new \Aa. \phi, \psi \rangle} = \bigcup_{a \in \mathbb{A}_{\nsort} - supp(\rho)} 
\{ \langle  v, w \rangle \mid v \in \semantics{\rho[a/\Aa]}{\phi},  w\in \semantics{\rho[a/\Aa]}{\psi}, a \not\in supp(v) \}\;.\]
Also by Definition~\ref{def:NML-pattern-sem}, 
\[ \semantics{\rho}{\new \Aa. \langle \phi,\psi\rangle} =   \bigcup_{a \in \mathbb{A}_{\nsort} - supp(\rho)} 
\{ \langle  v, w \rangle \mid v \in \semantics{\rho[a/\Aa]}{\phi},  w\in \semantics{\rho[a/\Aa]}{\psi}, a \not\in supp(v,w) \}\;.\]

The only difference between the two sets is the fact that in the second one we require $a \not\in supp(w)$. Therefore every element in the second set is also in the first. We need to show that every element in the first set is also in the second: 
Assume $\langle  v, w \rangle$ is in the first set and was obtained using $a$. If $a\not\in supp(v,w)$ then $\langle  v, w \rangle$ is also in the second set and we are done. If  $a \in supp(w)$ there exists $a'$ such that $a' \not\in supp(w,v,\rho)$.
Since $\Aa$ is not free in $\psi$, $\semantics{\rho[a'/\Aa]}{\psi} = \semantics{\rho[a/\Aa]}{\psi}$, hence $w \in \semantics{\rho[a'/\Aa]}{\psi}$. By equivariance, $\swap{a}{a'}{v} \in  \semantics{\swap{a}{a'}{\rho[a/\Aa]}}{\phi} = \semantics{\rho[a'/\Aa]}{\phi}$. Hence, $\langle \swap{a}{a'}{v}, w\rangle$ is in the second set and this is $\langle v, w\rangle$ since $a,a'\not\in supp(v)$. 

In the more general case:

By Definition ~\ref{def:NML-pattern-sem},
\begin{eqnarray*}
\semantics{\rho}{\sigma(\new \Aa. \phi_1,\ldots, \new \Aa. \phi_n)} & = &\sigma_M(\semantics{\rho}{\new \Aa.\phi_1}, \ldots, \semantics{\rho}{\new \Aa.\phi_n}) \\
&=&
 \sigma_M\left( \bigcup_{a \in \mathbb{A}_{\nsort} -  supp(\rho)} \{v \in \semantics{\rho[a/\Aa]}{\phi_1} \mid a \notin supp(v)\}, \ldots,\bigcup_{a \in \mathbb{A}_{\nsort} - supp(\rho)} \{v \in\semantics{\rho[a/\Aa]}{\phi_n})\mid a \not\in supp(v)\}\right)
\end{eqnarray*}
and
\[\semantics{\rho}{\new \Aa.\sigma(\phi_1,\ldots,\phi_n)} = \bigcup_{a \in \mathbb{A}_{\nsort} - supp(\rho)} \{ v \in \sigma_M(\semantics{\rho[a/\Aa]}{\phi_1} , \ldots, \semantics{\rho[a/\Aa]}{\phi_n})\mid a \not\in supp(v)\}\;.\]

Again we need to show that both sets are the same, and in one direction the result is direct (if an atom is not in the support of $v_1,\ldots, v_n$, it cannot be in the support of $\sigma_M(v_1,\ldots,v_n)$ since $\sigma_M$ is equivariant). In the other direction, the result follows from the assumption 
 $\forall x_1,\ldots,x_n, \Aa, \Aa \# \sigma(x_1,...,x_n) \Rightarrow  \Aa \# x_1 \wedge \ldots \wedge \Aa \# x_n$: if $a$ is not  in the support of $\sigma_M(v_1,\ldots,v_n)$ then it is not in the support of $v_1, \ldots, v_n$.

\end{enumerate}
\end{proof}

\begin{proof}[Proof of Prop.~\ref{prop:new-derived}]
To show that two patterns $\phi_\sort$ and $\psi_\sort$ are  equivalent, we need to show they have the  same instances in $M_\sort$ for any given $\rho$.

To prove the first equivalence,  assume $v\in M_\sort$ matches $\new \Aa\colon \nsort. \phi_\sort$, that is, there is  some $a\in \mathbb{A}_{\nsort} - supp(\rho)$ such that
$v\in \semantics{\rho[a/\Aa]}{\phi_{\sort}}$ and $a \not\in supp(v)$. Hence, there exists $a\in \mathbb{A}_{\nsort}$ such that $a$ is fresh for $\rho(\vec{\Ab},\vec{x}),v$ and $v\in \semantics{\rho[a/\Aa]}{\phi_{\sort}}$. Therefore, by definition of $\rho$ (existential and conjunction cases), we deduce that $v$ matches $\exists z_\Aa\colon \nsort. ((\exists y\colon \sort. y \wedge z_\Aa \# (\vec{\Ab},\vec{x},y)) \wedge \phi_\sort\{\Aa \mapsto z_{\Aa}\}$.

Similarly, if $v\in M_\sort$ matches $\exists z_\Aa\colon \nsort. ((\exists y\colon \sort. y \wedge z_\Aa \# (\vec{\Ab},\vec{x},y)) \wedge \phi_\sort\{\Aa \mapsto z_{\Aa}\}$, then there is a value $a \in \mathbb{A}_{\nsort}$ such that $v$ matches both conjuncts, that is, $a$ is fresh for $\semantics{\rho}{\vec{\Ab},\vec{x}}$, $v$  and $v$ is in $\semantics{\rho[a/z_\Aa]}{\phi_\sort\{\Aa \mapsto z_{\Aa}\}}$. If $a$ is fresh for $supp(\rho)$ then we are done, as  
$v\in \semantics{\rho}{\new \Aa\colon \nsort. \phi_\sort}$. If $a$ is not fresh for $supp(\rho)$ (i.e., it occurs in the image of some variable or name that does not occur free in $\phi_\sort$), there exists $a'$ such that $a'\in \mathbb{A} -\supp(\rho)$  and  $a'$ is also fresh for $v$ (hence $\swap{a}{a'}{v} = v$). By the equivariance theorem, $\semantics{\swap{a}{a'}{(\rho[a/z_{\Aa}])}}{\phi_\sort\{\Aa \mapsto z_{\Aa}\}} = \swap{a}{a'}{v} = v$. Since  $\semantics{\swap{a}{a'}{(\rho[a/z_{\Aa}])}}{\phi_\sort\{\Aa \mapsto z_{\Aa}\}}= \semantics{\rho[a'/z_{\Aa}]}{\phi_\sort}$ (because $a' \not\in supp(\rho)$ and $a$ is not in the support of free names or variables in $\phi_\sort$), we deduce $v\in \rho(\new \Aa\colon \nsort.\phi_\sort)$ as required.

In the second equivalence,  $\Rightarrow$ is the pattern implication operation, defined as $\phi \Rightarrow \psi \Leftrightarrow \neg(\phi) \vee \psi$. The semantics of $\phi \Rightarrow \psi$ is $M_\sort - (\semantics{\rho}{\phi} - \semantics{\rho}{\psi})$, i.e.,  the set of values that either don't match $\phi$ or do match $\psi$, i.e. the set of values $v$ such that if $v$ matches $\phi$ then $v$ matches $\psi$ also.  
The  pattern $\forall x:\sort'.\phi_\sort$ is syntactic sugar for $\neg \exists x: \sort'. \neg \phi_\sort(x)$, interpreted as the complement of the union for each $v \in M_{s'}$ of values in $M_\sort$ that do not match  $\semantics{\rho[v/x]}{\phi}$.

Assume $v\in M_\sort$ matches $\new \Aa\colon \nsort. \phi_\sort(\Aa,\vec{\Ab},\vec{x})$. Then there is a value $a \in \mathbb{A}_{\nsort}$ such that  $a$ is fresh for $\semantics{\rho}{\vec{x}}$,  $\semantics{\rho}{\vec{\Ab}}$, $v$  and $v$ is in $\semantics{\rho[a/z_\Aa]}(\phi_\sort(z_\Aa,\vec{\Ab},\vec{x}))$.

Consider now a different value $a' \in \mathbb{A}_{\nsort}$, we want to prove that $v$ still satisfies the implication pattern. 
If $v$ does not satisfy the existential for $a'$ (i.e., $a'$ is not fresh  for $\rho(\vec{x}),\rho(\vec{\Ab}), v$) then this is trivially true. Otherwise,  $a'$ is fresh for $\rho(\vec{x}),\rho(\vec{\Ab}), v$ and by equivariance (Theorem~\ref{lem:equivariant-sem}) $v$  must also match $\semantics{\rho[a'/z_\Aa]}(\phi_\sort (z_\Aa,\vec{\Ab},\vec{x}))$.

In the other direction, assume $v\in M_\sort$ matches 
$\forall z_\Aa\colon \nsort. ((\exists y\colon \sort. y \wedge z_\Aa \# (\vec{x},\vec{\Ab},y)) \Rightarrow \phi_\sort(z_\Aa,\vec{\Ab},\vec{x}))$.
The nominal set semantics ensures that there exists some $a\in\mathbb{A}_{\nsort}$ that is fresh for $\rho(\vec{x}),\rho(\vec{\Ab}),v$, that is, $v$ matches the left-hand side of the implication pattern, therefore $v$ must match  $\semantics{\rho[a/z_\Aa]}(\phi_\sort (z_\Aa,\vec{\Ab},\vec{x}))$. Hence, $v$ matches $\exists z_\Aa\colon \nsort. ((\exists y\colon \sort. y \wedge z_\Aa \# (\vec{x},\vec{\Ab},y)) \wedge \phi_\sort(z_\Aa,\vec{\Ab},\vec{x}))$ and we are done by part 1.

We have shown that both patterns have the same instances, which completes the proof.
\end{proof}

\begin{proof}[Proof of Prop.~\ref{prop:new-abs-conc}]
Using properties of $\new$:
\[\begin{array}{rcl}
(\new \Aa. \abs{\Aa}{\phi}) @ \Ab  
&=& \new \Aa. (\abs{\Aa}{\phi}) @ \Ab  ~~ \text{by Prop.~\ref{prop:new-wedge-not}(\ref{new-sigma})}\\
&=& \new \Aa. (\Ab \fresh \abs{\Aa}{\phi} \wedge \swap{\Aa}{\Ab}{\phi}) ~~\text{by Prop.~\ref{prop:abs-wedge-not}(\ref{abs-conc})}\\
&=& (\new \Aa. \Ab \fresh \abs{\Aa}{\phi}) \wedge (\new \Aa. \swap{\Aa}{\Ab}{\phi})~~\text{by Prop.~\ref{prop:new-wedge-not}(\ref{new-wedge})}\\
&=&  (\new \Aa. (\Ab = \Aa \vee \Ab \fresh \phi)) \wedge (\new \Aa. \swap{\Aa}{\Ab}{\phi})\\
&=& (\new \Aa. \Ab=\Aa \vee \new \Aa. \Ab \fresh \phi) \wedge (\new \Aa. \swap{\Aa}{\Ab}{\phi}) ~~\text{by Prop.~\ref{prop:new-wedge-not}(\ref{new-wedge})}\\
&=& (\new \Aa. \Ab \fresh \phi) \wedge (\new \Aa. \swap{\Aa}{\Ab}{\phi}) ~~\text{by Prop.~\ref{prop:new-reordered} (\ref{new-empty})}\\
&=&  \new \Aa. \swap{\Aa}{\Ab}{\phi}
\end{array}
\]
The last step follows from the semantics of freshness and $\new$: Given a valuation $\rho$, if $\semantics{\rho}{\Ab}$ is not fresh in an instance  $v \in \semantics{\rho[a/\Aa]}{[\Aa]\phi}$  for some new $a$ (i.e., the first conjunct is false), then $\semantics{\rho}{\Ab} \in \supp(v)$ for all new $a$. Therefore  $\semantics{\rho}{\Ab}$ is in the support of $\semantics{\rho[a/\Aa]}{\phi}$ for all new $a$, and hence $\new \Aa. \swap{\Aa}{\Ab}{\phi}$ has no instances. Otherwise, the first conjunct is true (interpreted as the whole carrier set) and the conjunction is therefore equivalent to the second conjunct.
\end{proof}

\end{document}